\theoremstyle{definition}
\newtheorem{definition}{Definition}
\newtheorem{example}{Example}
\theoremstyle{plain}
\newtheorem{theorem}{Theorem}
\newtheorem{proposition}{Proposition}
\newtheorem{lemma}{Lemma}
\newtheorem{corollary}{Corollary}
\DeclareMathOperator{\rk}{rk}
\DeclareMathOperator{\frk}{frk}
\DeclareMathOperator{\diag}{Diag}
\DeclareMathOperator{\dd}{d}
\DeclareMathOperator{\wt}{wt}
\begin{document}
%
\title{Maximum Sum-Rank Distance Codes over\\Finite Chain Rings}
%
%
%
\author{Umberto Mart{\'i}nez-Pe\~{n}as, 
and Sven Puchinger,
\thanks{This work was done in part while U. Mart{\'i}nez-Pe\~{n}as was with the Institutes of Computer Science and Mathematics, University of Neuch{\^a}tel, Switzerland. He is now with the University of Valladolid, Spain. During this period he was supported by a Mar{\'i}a Zambrano contract by the University of Valladolid, Spain, Contract no. E-47-2022-0001486 and by Grant TED2021-130358B-I00 funded by MCIN/AEI/10.13039/501100011033 and by the ``European Union NextGenerationEU/PRTR'' (e-mail: umberto.martinez@uva.es). }
\thanks{This work was done while S. Puchinger was with the Department of Applied Mathematics and Computer Science, Technical University of Denmark (DTU), Lyngby, Denmark and the Department of Electrical and Computer Engineering, Technical University of Munich, Munich, Germany. Within this period he was supported by the European Union's Horizon 2020 research and innovation program under the Marie Sklodowska-Curie grant agreement no. 713683 and by the European Research Council (ERC) under the European Union's Horizon 2020 research and innovation programme, grant agreement no. 801434 (e-mail: mail@svenpuchinger.de).}
\thanks{Parts of this paper have been presented at The Twelfth International Workshop on Coding and Cryptography (WCC 2022), Rostock, Germany, March, 2022 \cite{wcc_version}.}
}


%
%

\markboth{}%
{}
%



\maketitle



\begin{abstract}
In this work, maximum sum-rank distance (MSRD) codes and linearized Reed-Solomon codes are extended to finite chain rings. It is proven that linearized Reed-Solomon codes are MSRD over finite chain rings, extending the known result for finite fields. For the proof, several results on the roots of skew polynomials are extended to finite chain rings. These include the existence and uniqueness of minimum-degree annihilator skew polynomials and Lagrange interpolator skew polynomials. A general cubic-complexity sum-rank Welch-Berlekamp decoder and a quadratic-complexity sum-rank syndrome decoder (under some assumptions) are then provided over finite chain rings. The latter also constitutes the first known syndrome decoder for linearized Reed--Solomon codes over finite fields. Finally, applications in Space-Time Coding with multiple fading blocks and physical-layer multishot Network Coding are discussed.
\end{abstract}

\begin{IEEEkeywords} 
Finite chain rings, linearized Reed-Solomon codes, maximum sum-rank distance codes, sum-rank metric, syndrome decoding, Welch-Berlekamp decoding.
\end{IEEEkeywords}

%
\IEEEpeerreviewmaketitle

\section{Introduction} \label{sec intro}
%
%
%
%

\IEEEPARstart{T}{he} sum-rank metric, introduced in \cite{multishot}, is a natural generalization of both the Hamming metric and the rank metric. Codes considered with respect to the sum-rank metric over finite fields have applications in multishot Network Coding \cite{multishot, secure-multishot}, Space-Time Coding with multiple fading blocks \cite{space-time-kumar, mohannad} and local repair in Distributed Storage \cite{universal-lrc}. However, codes over rings may be more suitable for physical-layer Network Coding \cite{feng1}, where alphabets are subsets or lattices of the complex field instead of finite fields. Similarly, finite rings derived from the complex field allows for more flexible choices of constellations to construct Space-Time codes \cite{kiran, kamche}.

Maximum sum-rank distance (MSRD) codes are those codes whose minimum sum-rank distance attains the Singleton bound. Among known MSRD codes over finite fields, linearized Reed-Solomon codes \cite{linearizedRS} are those with smallest finite-field sizes (thus more computationally efficient) for the main parameter regimes, see \cite[Table 1]{generalMSRD} and \cite[Sec. 2.4]{generalMSRD}. Furthermore, they cover a wide range of parameter values and are the only known MSRD codes compatible with square matrices. Linearized Reed-Solomon codes include both generalized Reed-Solomon codes \cite{reed-solomon} and Gabidulin codes \cite{gabidulin}, whenever the sum-rank metric includes the Hamming metric and the rank metric, respectively. Reed-Solomon codes over rings were systematically studied for the first time in \cite{quintin}. Gabidulin codes over Galois rings were introduced in \cite{kiran}, and later extended to finite principal ideal rings in \cite{kamche}. Such families of Gabidulin codes over rings were proposed for Space-Time Coding in the case of a single fading block in \cite{kamche, kiran}, and they were proposed for physical-layer singleshot Network Coding in \cite{kamche}.

In this work, we introduce and study MSRD codes and linearized Reed-Solomon codes over finite chain rings, together with their applications in Space-Time Coding with multiple fading blocks and physical-layer multishot Network Coding. Finite chain rings are those finite rings whose family of ideals form a chain with respect to set inclusion. They are an important subfamily of finite principal ideal rings. In fact, finite principal ideal rings are all Cartesian products of finite chain rings \cite[Th. VI.2]{mcdonald}. Moreover, finite chain rings include Galois rings (although not all finite chain rings are Galois rings \cite[Th. XVII.5]{mcdonald}), which are of the form $ \mathbb{Z}_{p^r}[x] / (F) $, where $ p $ is a prime number, $ r $ is a positive integer, and $ F \in \mathbb{Z}_{p^r}[x] $ is a polynomial whose reduction modulo $ p $ is irreducible. Galois rings hence include finite fields (when $ r = 1 $) and finite rings of the form $ \mathbb{Z}_{p^r} $ (when $ F = x $). Finite chain rings also include quotients of subrings of the complex field of the form $ \mathbb{Z}_{2^r}[i] = \mathbb{Z}[i]/(2^r) $, where $ i $ is the imaginary unit, see \cite[Th. XVII.5]{mcdonald}. 

The contributions and organization of this manuscript are as follows. In Section \ref{sec preliminaries}, we collect some preliminaries on finite chain rings. In Section \ref{sec MSRD codes}, we define the sum-rank metric over finite chain rings, together with the corresponding Singleton bound and the definition of MSRD codes. Section \ref{sec skew polynomials} contains the theoretical building blocks for constructing linearized Reed-Solomon codes and their decoding. We extend Lam and Leroy's results \cite{lam, lam-leroy} relating roots and degrees of skew polynomials to the case of finite chain rings. As a result, we prove the existence and uniqueness of minimum-degree annihilator skew polynomials and Lagrange interpolator skew polynomials, and we describe when the corresponding extended Moore matrices are invertible. In Section \ref{sec LRS}, we introduce linearized Reed-Solomon codes over finite chain rings and use the previous results to prove that they are MSRD. In Section \ref{sec WB decoder}, we provide a cubic-complexity Welch-Berlekamp decoder with respect to the sum-rank metric for linearized Reed--Solomon codes over finite chain rings that works in general. Then, in Section \ref{sec syndrome decoder}, we provide a quadratic-complexity syndrome decoder with respect to the sum-rank metric for linearized Reed--Solomon codes over finite chain rings that work under some (not very strict) assumptions. This decoder also constitutes the first known syndrome decoder for linearized Reed--Solomon codes over finite fields, to the best of our knowledge. Finally, in Section \ref{sec applications}, we discuss applications in Space-Time Coding with multiple fading blocks and physical-layer multishot Network Coding.

We conclude by mentioning that there are other constructions of MSRD codes in the case of finite fields, in particular using different geometric points of view \cite{alberto-fundamental, generalMSRD, neri-oneweight, twisted}. However, we leave as an open problem generalizing them to finite chain rings.

\section*{Notation}

Let $ m $ and $ n $ be positive integers. We denote $ [n] = \{ 1,2, \ldots, n \} $. For a set $ \mathcal{A} $, we denote by $ \mathcal{A}^{m \times n} $ the set of $ m \times n $ matrices with entries in $ \mathcal{A} $, and we denote $ \mathcal{A}^n = \mathcal{A}^{1 \times n} $. All rings are considered with identity, and ring morphisms map identities to identities. Unless otherwise stated, rings are assumed to be commutative. For a ring $ R $, we denote by $ R^* $ the set of units of $ R $. For $ a \in R $, we denote by $ (a) $ the ideal generated by $ a $.

\section{Preliminaries on Finite Chain Rings} \label{sec preliminaries}

In this preliminary section, we introduce and revisit some important properties of finite chain rings. We refer the reader to \cite{mcdonald} for more details.

A \textit{local ring} is a ring with only one maximal ideal, and a \textit{chain ring} is a ring whose ideals form a chain with respect to set inclusion, thus being a local ring. Throughout this manuscript, we fix a finite chain ring $ R $, meaning a chain ring of finite size. We will denote by $ \mathfrak{m} $ the maximal ideal of $ R $. Since $ R $ is finite and $ R/\mathfrak{m} $ is a field, then it must be a finite field. We will fix the prime power $ q = |R/\mathfrak{m}| $, and we denote $ \mathbb{F}_q = R / \mathfrak{m} $, the \textit{finite field} with $ q $ elements.

Let $ H \in R[x] $ be a monic polynomial of degree $ m $ whose image in $ \mathbb{F}_q[x] $ is irreducible. Throughout this manuscript, we will fix $ S = R[x]/(H) $. The ring $ S $ is a free local Galois extension of $ R $ (hence a free $ R $-module) of rank $ m $ with maximal ideal $ \mathfrak{M} = \mathfrak{m} S $. Furthermore, the Galois group of $ R \subseteq S $ is cyclic of order $ m $, and generated by a ring automorphism $ \sigma : S \longrightarrow S $ such that $ R = \{ a \in S \mid \sigma(a) = a \} $ and $ \sigma(c) = c^q $, for some primitive element $ c \in S $. Moreover, it holds that $ S/\mathfrak{M} = \mathbb{F}_{q^m} $, and we have a commutative diagram
\begin{equation}
\begin{array}{rcl}
S & \stackrel{\sigma}{\longrightarrow} & S \\
\rho \downarrow & & \downarrow \rho \\
\mathbb{F}_{q^m} & \stackrel{\overline{\sigma}}{\longrightarrow} & \mathbb{F}_{q^m},
\end{array}
\label{eq commutative diagram}
\end{equation}
where $ \rho : S \longrightarrow S / \mathfrak{M} = \mathbb{F}_{q^m} $ is the natural projection map, and $ \overline{\sigma}(a) = a^q $, for all $ a \in \mathbb{F}_{q^m} $. In other words, $ \rho (\sigma(a)) = \overline{\sigma}(\rho(a)) $, for all $ a \in S $. We will usually denote $ \overline{a} = \rho(a) $, and therefore, we have that $ \overline{\sigma(a)} = \overline{\sigma}(\overline{a}) $, for $ a \in S $.

\begin{example} \label{ex first}
Let $ R = \mathbb{Z}_9 $, that is, the ring of integers modulo $ 9 $. It is clearly a finite chain ring with maximal ideal $ \mathfrak{m} = (3) $. Its residue field is the finite field $ R/\mathfrak{m} = \mathbb{F}_3 = \mathbb{Z}_3 $ and $ q = 3 $. We may choose $ H = x^2 + 1 $ (i.e., $ m = 2 $) and construct the finite residue ring $ S = R[x] / (H) = \mathbb{Z}_9[x] / (x^2+1) $. Denote by $ \alpha \in S $ the image of $ x $ in $ S $, which satisfies $ \alpha^2 + 1 = 0 $. The set $ S $ is then
$$ S = \{ a \alpha + b \mid a,b \in \mathbb{Z}_9 \}. $$
We may define the morphism $ \sigma : S \longrightarrow S $ given by $ \sigma(\alpha) = \alpha^3 $ and being the identity on $ \mathbb{Z}_9 $. It is well defined since $ (\alpha^3)^2+1 = 0 $ and it is an automorphism that generates the Galois group of $ S $ over $ R $ since $ \sigma^2 = {\rm Id} $ is the identity map. Notice that $ m = 2 $ in this case and we have the residue field $ S / \mathfrak{M} = \mathbb{F}_9 $.
\end{example}

An important feature of local rings is that the group of units is formed by the elements outside of the maximal ideal. That is, $ R^* = R \setminus \mathfrak{m} $ and $ S^* = S \setminus \mathfrak{M} $. As stated above, $ S $ is a free $ R $-module of rank $ m $, and any basis of $ S $ over $ R $ has $ m $ elements. Finally, the following technical lemma will be useful for our purposes.

\begin{lemma} \label{lemma linear independence}
Let $ \beta_1, \beta_2, \ldots, \beta_r \in S $ be $ R $-linearly independent (thus $ r \leq m $). 
\begin{enumerate}
\item
There exist $ \beta_{r+1}, \ldots, \beta_m \in S $ such that $ \beta_1, \beta_2, \ldots, \beta_m $ form a basis of $ S $ over $ R $.
\item
The projections $ \overline{\beta}_1, \overline{\beta}_2, \ldots, \overline{\beta}_r \in \mathbb{F}_{q^m} $ are $ \mathbb{F}_q $-linearly independent.
\item
$ \beta_1, \beta_2, \ldots, \beta_r \in S^* $.
\end{enumerate}
\end{lemma}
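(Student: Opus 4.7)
The plan is to prove (2) first, from which (3) is immediate, and then to deduce (1) from Nakayama's lemma together with a cardinality argument. Throughout, I would use that the maximal ideal of a finite chain ring is principal, so $\mathfrak{m} = (\pi)$ for some $\pi \in R$, and that there is a smallest integer $s \geq 1$ with $\pi^s = 0$; both are standard facts about finite chain rings \cite{mcdonald}.

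For (2), I would argue by contradiction: suppose $\overline{\beta}_1, \ldots, \overline{\beta}_r$ are $\mathbb{F}_q$-linearly dependent, so $\sum_{i=1}^{r} \overline{a}_i \overline{\beta}_i = 0$ for some $\overline{a}_i \in \mathbb{F}_q$ not all zero. Lifting to $a_i \in R$, at least one index $j$ satisfies $a_j \in R \setminus \mathfrak{m} = R^*$. The dependence becomes $\sum_{i=1}^{r} a_i \beta_i \in \mathfrak{M} = \pi S$, so $\sum_{i=1}^{r} a_i \beta_i = \pi \gamma$ for some $\gamma \in S$. Multiplying through by $\pi^{s-1}$ kills the right-hand side and produces $\sum_{i=1}^{r} (\pi^{s-1} a_i) \beta_i = 0$. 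The coefficient $\pi^{s-1} a_j$ is nonzero (otherwise $\pi^{s-1} = \pi^{s-1} a_j a_j^{-1} = 0$, contradicting the minimality of $s$), so this is a nontrivial $R$-linear relation, contradicting the $R$-linear independence of the $\beta_i$. Item (3) then follows automatically: $\overline{\beta}_i \neq 0$ forces $\beta_i \in S \setminus \mathfrak{M} = S^*$.

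For (1), I would use (2) to extend $\overline{\beta}_1, \ldots, \overline{\beta}_r$ to an $\mathbb{F}_q$-basis $\overline{\beta}_1, \ldots, \overline{\beta}_m$ of $\mathbb{F}_{q^m}$ and lift the new vectors arbitrarily to $\beta_{r+1}, \ldots, \beta_m \in S$. The residues of $\beta_1, \ldots, \beta_m$ generate $S/\mathfrak{m} S \cong \mathbb{F}_{q^m}$ over $R/\mathfrak{m} = \mathbb{F}_q$, so Nakayama's lemma applied to the finitely generated $R$-module $S$ (with $R$ local, so $\mathfrak{m}$ in the Jacobson radical) gives that $\beta_1, \ldots, \beta_m$ already generate $S$ as an $R$-module. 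Since $S$ is a free $R$-module of rank $m$, both $R^m$ and $S$ have cardinality $|R|^m$, so the $R$-linear surjection $R^m \to S$, $(c_1, \ldots, c_m) \mapsto \sum_i c_i \beta_i$, is bijective and $\beta_1, \ldots, \beta_m$ form an $R$-basis of $S$.

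The main obstacle is item (2): it is the only place a genuinely chain-ring-specific device is needed, and the trick of multiplying by $\pi^{s-1}$ to annihilate the defect term $\pi \gamma$ is the crux of the argument. The analogous statement already fails over general local or principal ideal rings with non-nilpotent maximal ideal behavior, so no purely module-theoretic argument based on freeness alone can work.
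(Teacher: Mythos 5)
Your proof is correct, but it runs in the opposite logical direction from the paper's. The paper takes item (1) as the starting point, quoting it as a known exercise on finite chain rings (\cite[p.~92, ex.~V.14]{mcdonald}); item (2) then follows because the residues of a full basis $\beta_1,\ldots,\beta_m$ generate $\mathbb{F}_{q^m}$ over $\mathbb{F}_q$ and hence, being $m$ in number, form a basis; item (3) is read off from (2) via $S^* = S\setminus\mathfrak{M}$ (or quoted from \cite[Lemma 2.4]{kamche}). You instead prove (2) directly with the chain-ring-specific trick of writing $\mathfrak{m}=(\pi)$ with $\pi^s=0$ and multiplying a would-be residual dependence by $\pi^{s-1}$ to manufacture a genuine $R$-linear relation, then get (3) as in the paper, and finally deduce (1) from (2) by extending the residues to an $\mathbb{F}_q$-basis, lifting, applying Nakayama, and using $|S|=|R|^m$ to upgrade the resulting surjection $R^m\to S$ to a bijection; this last step is in effect a self-contained proof of the McDonald exercise rather than a citation. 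Both arguments are sound; yours is longer but fully self-contained and makes explicit where nilpotency of $\mathfrak{m}$ enters (your closing remark that the statement fails over, say, a DVR --- e.g.\ $\{1,p\omega\}$ in an unramified extension of $\mathbb{Z}_p$ --- is accurate), while the paper's is shorter and leans on the literature, with (2) obtained as a cheap corollary of basis extension rather than by a separate computation.
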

\begin{proof}
Item 1 is a particular case of \cite[p. 92, ex. V.14]{mcdonald}. Now, since $ \beta_1, \beta_2, \ldots, \beta_m $ are generators of $ S $ over $ R $, then $ \overline{\beta}_1, \overline{\beta}_2, \ldots, \overline{\beta}_m $ are generators of $ \mathbb{F}_{q^m} $ over $ \mathbb{F}_q $, thus they are a basis since there are $ m $ of them. Thus $ \overline{\beta}_1, \overline{\beta}_2, \ldots, \overline{\beta}_r $ are $ \mathbb{F}_q $-linearly independent. Finally, Item 3 is \cite[Lemma 2.4]{kamche} but is also trivial from Item 2 since $ S^* = S \setminus \mathfrak{M} $.
\end{proof}

\section{MSRD Codes on Finite Chain Rings} \label{sec MSRD codes}

The sum-rank metric over fields was first defined in \cite{multishot} under the name \textit{extended distance}, although it was previously used in the Space-Time Coding literature \cite[Sec. III]{space-time-kumar}. Later, the rank metric was extended to finite principal ideal rings in \cite{kamche}. In this section, we will introduce the sum-rank metric for finite chain rings.

Since $ R $ is a finite chain ring, then it is a principal ideal ring. Therefore, given a matrix $ \mathbf{A} \in R^{m \times n} $, there exist two invertible matrices $ \mathbf{P} \in R^{m \times m} $ and $ \mathbf{Q} \in R^{n \times n} $, and a (possibly rectangular) diagonal matrix $ \mathbf{D} = \diag(d_1, d_2, \ldots, d_r) \in R^{m \times n} $, with $ r = \min \{m,n\} $, such that $ \mathbf{A} = \mathbf{P} \mathbf{D} \mathbf{Q} $. The elements $ d_1, d_2, \ldots, d_r \in R $ are unique up to multiplication by units and permutation \cite{smith} and the diagonal matrix $ \mathbf{D} $ is called the \textit{Smith normal form} of $ \mathbf{A} $. Hence we may define ranks and free ranks as in \cite[Def. 3.3]{kamche}.

\begin{definition} 
Given $ \mathbf{A} \in R^{m \times n} $ with Smith normal form $ \mathbf{D} = \diag(d_1, d_2, \ldots, d_r) \in R^{m \times n} $, $ r = \min \{ m,n \} $, we define:
\begin{enumerate}
\item
The rank of $ \mathbf{A} $ as $ \rk(\mathbf{A}) = |\{ i \in [r] \mid d_i \neq 0 \}| $.
\item
The free rank of $ \mathbf{A} $ as $ \frk(\mathbf{A}) = |\{ i \in [r] \mid  d_i \in R^*  \}| $.
\end{enumerate}
\end{definition}

In this manuscript, we will mainly work with linear codes in $ S^n $. To that end, we will translate the rank metric from $ R^{m \times n} $ to $ S^n $ as in \cite[Sec. III-B]{kamche}. For a positive integer $ t $ and an ordered basis $ \boldsymbol\alpha = ( \alpha_1, \alpha_2, \ldots, \alpha_m ) \in S^m $ of $ S $ over $ R $, we define the matrix representation map $ M_{\boldsymbol\alpha} : S^t \longrightarrow R^{m \times t} $ by 
\begin{equation}
M_{\boldsymbol\alpha} \left( \mathbf{c} \right) = \left( \begin{array}{cccc}
c_{1,1} & c_{1,2} & \ldots & c_{1,t} \\
c_{2,1} & c_{2,2} & \ldots & c_{2,t} \\
\vdots & \vdots & \ddots & \vdots \\
c_{m,1} & c_{m,2} & \ldots & c_{m,t} \\
\end{array} \right) \in R^{m \times t},
\label{eq def matrix representation map}
\end{equation}
where $ \mathbf{c} = (c_1, c_2, \ldots, c_t) \in S^t $ and, for each $ j \in [t] $, $ c_{1,j}, c_{2,j}, \ldots, c_{m,j} \in R $ are the coordinates of $ c_j $ in the ordered basis $ \boldsymbol\alpha $, that is, they are the unique scalars in $ R $ such that $ c_j = \sum_{i=1}^m \alpha_i c_{i,j} $. Notice that also $ \mathbf{c} = \sum_{i=1}^m \alpha_i (c_{i,1}, c_{i,2}, \ldots, c_{i,t}) $. Using this matrix representation map, we define $ \rk(\mathbf{c}) = \rk \left( M_{\boldsymbol\alpha} \left( \mathbf{c} \right) \right) $ and $ \frk(\mathbf{c}) = \frk \left( M_{\boldsymbol\alpha} \left( \mathbf{c} \right) \right) $, which do not depend on the ordered basis $ \boldsymbol\alpha $, see also \cite{kamche}.

We may now define the sum-rank metric for the ring extension $ R \subseteq S $.
 
\begin{definition} [\textbf{Sum-rank metric}]
Consider positive integers $ n_1, n_2, \ldots, n_\ell $ and $ n = n_1 + n_2 + \cdots + n_\ell $. We define the sum-rank weight of $ \mathbf{c} \in S^n $ over $ R $ for the length partition $ n = n_1 + n_2 + \cdots + n_\ell $ as
$$ \wt_{SR}(\mathbf{c}) = \sum_{i=1}^\ell \rk \left( \mathbf{c}^{(i)} \right), $$
where $ \mathbf{c} = \left( \mathbf{c}^{(1)}, \mathbf{c}^{(2)}, \ldots, \mathbf{c}^{(\ell)} \right) $ and $ \mathbf{c}^{(i)} \in S^{n_i} $, for $ i \in [\ell] $. We define the sum-rank metric $ \dd_{SR} : S^{2n} \longrightarrow S^n $ over $ R $ for the length partition $ n = n_1 + n_2 + \cdots + n_\ell $ by 
$$ \dd_{SR}(\mathbf{c}, \mathbf{d}) = \wt_{SR}(\mathbf{c} - \mathbf{d}), $$
for $ \mathbf{c}, \mathbf{d} \in S^n $.
\end{definition}

This definition coincides with the classical one \cite{space-time-kumar, multishot} when $ R $ and $ S $ are fields. Over finite chain rings, this definition coincides with the Hamming metric when $ n_1 = n_2 = \ldots = n_\ell = 1 $ and with the rank metric as above \cite{kamche} when $ \ell = 1 $.

Once again, the definitions of the sum-rank weight and metric in $ S^n $ do not depend on the ordered basis $ \boldsymbol\alpha $. Furthermore, the sum-rank metric satisfies the properties of a metric since rank weights are norms by \cite[Th. 3.9]{kamche}. As noted in \cite[Remark 3.10]{kamche}, free ranks do not generally give rise to a metric nor include the Hamming metric over rings. The subring $ R $ and the length partition $ n = n_1 + n_2 + \cdots + n_\ell $ will not be specified unless necessary.

The following result will be crucial for our purposes. It can be proven as in \cite[Th. 1]{universal-lrc}, but using the Smith normal form.

\begin{lemma} \label{lemma SR is min of Hamming}
Given $ \mathbf{c} \in S^n $, and considering the subring $ R \subseteq S $ and the length partition $ n = n_1 + n_2 + \cdots + n_\ell $, it holds that
\begin{equation*}
\begin{split}
\wt_{SR}(\mathbf{c}) = \min \{ & \wt_H(\mathbf{c} \diag(\mathbf{A}_1, \mathbf{A}_2, \ldots, \mathbf{A}_\ell)) | \\
& \mathbf{A}_i \in R^{n_i \times n_i} \textrm{ invertible}, i \in [\ell] \} .
\end{split}
\end{equation*}
In particular, given an arbitrary code $ \mathcal{C} \subseteq S^n $ (linear or not),we have that
\begin{equation*}
\begin{split}
\dd_{SR}(\mathcal{C}) = \min \{ & \dd_H(\mathcal{C} \diag(\mathbf{A}_1, \mathbf{A}_2, \ldots, \mathbf{A}_\ell)) | \\
& \mathbf{A}_i \in R^{n_i \times n_i} \textrm{ invertible}, i \in [\ell] \} .
\end{split}
\end{equation*}
\end{lemma}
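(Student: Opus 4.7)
The plan is to reduce to a single block and then exploit the Smith normal form in place of the finite-field diagonalisation used in the proof of \cite[Th. 1]{universal-lrc}. Writing $ \mathbf{c} = ( \mathbf{c}^{(1)}, \ldots, \mathbf{c}^{(\ell)} ) $, the block structure of $ \diag(A_1, \ldots, A_\ell) $ gives $ \mathbf{c} \diag(A_1, \ldots, A_\ell) = (\mathbf{c}^{(1)} A_1, \ldots, \mathbf{c}^{(\ell)} A_\ell) $, so $ \wt_H $ splits as a sum over blocks and the $ A_i $'s can be optimised independently. Hence it suffices to show, for a single vector $ \mathbf{v} \in S^{n_i} $, that
\[
\rk(\mathbf{v}) \;=\; \min \{ \wt_H(\mathbf{v} A) \mid A \in R^{n_i \times n_i} \text{ invertible} \}.
\]
The conversion to matrices uses the key observation that an entry of $ \mathbf{v} A $ in $ S $ is zero iff the corresponding column of $ M_{\boldsymbol\alpha}(\mathbf{v} A) = M_{\boldsymbol\alpha}(\mathbf{v}) A $ is zero in $ R^m $, so $ \wt_H(\mathbf{v} A) $ equals the number of nonzero columns of $ M A $, where $ M := M_{\boldsymbol\alpha}(\mathbf{v}) $.

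For the upper bound, I would apply the Smith normal form $ M = P D Q $ with $ D = \diag(d_1, \ldots, d_r) $ and $ d_1, \ldots, d_{\rk(M)} $ the nonzero invariants. Taking $ A = Q^{-1} $ gives $ M A = P D $, and since $ P $ is invertible over $ R $, the map $ v \mapsto P v $ is a bijection of $ R^m $, so the $ j $-th column of $ P D $ vanishes iff the $ j $-th column of $ D $ does. Thus $ P D $ has exactly $ \rk(M) $ nonzero columns, giving $ \wt_H(\mathbf{v} Q^{-1}) = \rk(M) $.

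For the lower bound, fix any invertible $ A $ and assume $ M A $ has $ k $ nonzero columns. Deleting the zero columns yields a matrix $ M' \in R^{m \times k} $ with the same rank as $ M A $ (zero columns do not contribute to the Smith invariants), and by definition $ \rk(M') \leq \min\{m,k\} \leq k $. Since multiplying on the right by an invertible matrix preserves the Smith normal form up to a right invertible change, $ \rk(M A) = \rk(M) $, so $ \rk(M) = \rk(M A) = \rk(M') \leq k = \wt_H(\mathbf{v} A) $.

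For the second assertion, the minimum sum-rank distance of $ \mathcal{C} $ is $ \min_{\mathbf{c} \neq \mathbf{d}} \wt_{SR}(\mathbf{c} - \mathbf{d}) $, and since right multiplication by $ \diag(A_1, \ldots, A_\ell) $ is a bijection on $ S^n $ (hence preserves the pairing $ \mathbf{c} \neq \mathbf{d} $), one simply swaps the two minima using the first assertion applied to each difference. No step is a real obstacle beyond replacing the field-based diagonalisation with the Smith normal form; the only point deserving care is that nonzero columns are preserved under left multiplication by an invertible matrix over $ R $, which follows immediately from bijectivity of the associated $ R $-linear map.
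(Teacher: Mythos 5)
Your proof is correct and takes exactly the route the paper indicates: it adapts the argument of \cite[Th. 1]{universal-lrc} with the Smith normal form replacing the field-based reduction, after the standard blockwise splitting and the observation that $M_{\boldsymbol\alpha}(\mathbf{v}A) = M_{\boldsymbol\alpha}(\mathbf{v})A$ for $A$ over $R$. Both the upper bound (via $A = Q^{-1}$) and the lower bound (invariance of the rank under right multiplication by invertibles and discarding zero columns), as well as the exchange of minima for the distance statement, are sound.
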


One immediate consequence of Lemma \ref{lemma SR is min of Hamming} above is the following classical version of the Singleton bound, but for the sum-rank metric for the ring extension $ R \subseteq S $. This bound recovers \cite[Prop. 34]{linearizedRS} when $ R $ and $ S $ are fields, and it recovers \cite[Prop. 3.20]{kamche} when $ \ell = 1 $.

\begin{proposition} [\textbf{Singleton bound}]
Given an arbitrary code $ \mathcal{C} \subseteq S^n $ (linear or not), and setting $ k = \log_{|S|}|\mathcal{C}| $, it holds that
$$ \dd_{SR}(\mathcal{C}) \leq n - k + 1. $$
\end{proposition}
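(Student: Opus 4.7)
The plan is to reduce the sum-rank Singleton bound to the classical Hamming-metric Singleton bound over the alphabet $S$, invoking Lemma \ref{lemma SR is min of Hamming}. Since the minimum in that lemma is taken over all block-diagonal invertible transformations $\diag(A_1, A_2, \ldots, A_\ell)$, and the identity choice $A_i = I_{n_i}$ is among them, one immediately obtains $\dd_{SR}(\mathcal{C}) \leq \dd_H(\mathcal{C})$. Hence it suffices to show that $\dd_H(\mathcal{C}) \leq n - k + 1$ for every (not necessarily linear) code $\mathcal{C} \subseteq S^n$ with $|\mathcal{C}| = |S|^k$.

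For this remaining step I would use the standard puncturing argument, which is purely combinatorial and uses nothing about the alphabet beyond its cardinality being $|S|$. Setting $d = \dd_H(\mathcal{C})$, fix any $n - d + 1$ coordinate positions and let $\pi : S^n \longrightarrow S^{n-d+1}$ be the projection onto them. If $\mathbf{c}, \mathbf{c}' \in \mathcal{C}$ satisfy $\pi(\mathbf{c}) = \pi(\mathbf{c}')$, then $\mathbf{c}$ and $\mathbf{c}'$ agree on $n - d + 1$ positions, so they differ in at most $d - 1$ positions; the minimality of $d$ forces $\mathbf{c} = \mathbf{c}'$. Hence $\pi|_{\mathcal{C}}$ is injective, giving $|\mathcal{C}| \leq |S|^{n-d+1}$, and taking $\log_{|S|}$ on both sides yields $k \leq n - d + 1$, which is the desired bound.

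There is no substantial obstacle: Lemma \ref{lemma SR is min of Hamming} does all the structural work of reducing the sum-rank setting to the Hamming setting, and the classical Singleton bound is insensitive to whether the alphabet carries a ring structure. The only mild subtlety is that $k$ need not be an integer, but because we bound $|\mathcal{C}|$ directly and take logarithms only at the end, this causes no difficulty.
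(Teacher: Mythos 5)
Your proposal is correct and matches the paper's approach: the paper likewise derives the bound as an immediate consequence of Lemma \ref{lemma SR is min of Hamming}, reducing to the classical Hamming-metric Singleton bound over the alphabet $S$ (whose standard puncturing proof you simply spell out). No gaps.
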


We note that there exist more general Singleton bounds for the sum-rank metric over finite fields, see \cite[Th. III.2]{alberto-fundamental}. We leave as an open problem generalizing such bounds to finite chain rings. 

Thus we may define MSRD codes as follows. This definition recovers that of MSRD codes \cite[Th. 4]{linearizedRS} when $ R $ and $ S $ are fields, MDS codes over finite chain rings when $ n_1 = n_2 = \ldots = n_\ell = 1 $, and MRD codes over finite chain rings \cite[Def. 3.21]{kamche} when $ \ell = 1 $. 

\begin{definition} [\textbf{MSRD codes}] \label{def MSRD codes}
We say that a code $ \mathcal{C} \subseteq S^n $ is a \textit{maximum sum-rank distance} (MSRD) code over $ R $ for the length partition $ n = n_1 + n_2 + \cdots + n_\ell $ if $ k = \log_{|S|}|\mathcal{C}| $ is a positive integer and $ \dd_{SR}(\mathcal{C}) = n-k+1 $, where $ \dd_{SR} $ is considered for such a subring and length partition.
\end{definition}

From Lemma \ref{lemma SR is min of Hamming}, we deduce the following auxiliary lemma, which we will use in Section \ref{sec LRS} to prove that linearized Reed--Solomon codes are MSRD.

\begin{lemma} \label{lemma MSRD from MDS}
Given an arbitrary code $ \mathcal{C} \subseteq S^n $ (linear or not) such that $ k = \log_{|S|}|\mathcal{C}| $ is a positive integer, and for the subring $ R \subseteq S $ and length partition $ n = n_1 + n_2 + \cdots + n_\ell $, it holds that $ \mathcal{C} $ is MSRD if, and only if, the code $ \mathcal{C} \diag(\mathbf{A}_1, \mathbf{A}_2, \ldots, \mathbf{A}_\ell) $ is MDS, for all invertible matrices $ \mathbf{A}_i \in R^{n_i \times n_i} $, for $ i \in [\ell] $.
\end{lemma}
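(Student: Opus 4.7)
The plan is to deduce the equivalence directly from Lemma~\ref{lemma SR is min of Hamming} and the Singleton bound, without any new computation. As a preliminary observation, right-multiplication by a block-diagonal matrix $\diag(A_1, A_2, \ldots, A_\ell)$ with each $A_i \in R^{n_i\times n_i}$ invertible is a bijection on $S^n$; hence $|\mathcal{C}\diag(A_1,\ldots,A_\ell)| = |\mathcal{C}|$, and the same parameter $k = \log_{|S|}|\mathcal{C}|$ serves for $\mathcal{C}$ and for every such transform, so the MDS property is well-defined with the same $k$ throughout.

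For the forward implication I would assume $\mathcal{C}$ is MSRD, i.e.\ $\dd_{SR}(\mathcal{C}) = n-k+1$, and fix any invertible $A_1,\ldots,A_\ell$. Lemma~\ref{lemma SR is min of Hamming} gives $\dd_{SR}(\mathcal{C}) \leq \dd_H(\mathcal{C}\diag(A_1,\ldots,A_\ell))$, while the Singleton bound of the preceding proposition, applied with all block lengths equal to one (in which case the sum-rank weight reduces to the Hamming weight), yields $\dd_H(\mathcal{C}\diag(A_1,\ldots,A_\ell)) \leq n-k+1$. Squeezing between $n-k+1$ on both sides forces equality, so the transformed code attains the Hamming Singleton bound and is MDS.

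For the converse I would assume every transform $\mathcal{C}\diag(A_1,\ldots,A_\ell)$ is MDS, i.e.\ $\dd_H(\mathcal{C}\diag(A_1,\ldots,A_\ell)) = n-k+1$ for every admissible tuple. Taking the minimum over all invertible tuples and invoking Lemma~\ref{lemma SR is min of Hamming} identifies this minimum with $\dd_{SR}(\mathcal{C})$, giving $\dd_{SR}(\mathcal{C}) = n-k+1$, which is precisely the MSRD condition. There is no real obstacle in this argument, since Lemma~\ref{lemma SR is min of Hamming} already does the heavy lifting; the only points worth spelling out explicitly are the preservation of cardinality under the bijective $\diag(A_i)$ action, and the observation that the Hamming Singleton bound over the ring $S$ is subsumed by the sum-rank Singleton bound when all blocks have length one.
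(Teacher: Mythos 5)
Your proof is correct and follows essentially the same route as the paper, which states the lemma as a direct consequence of Lemma~\ref{lemma SR is min of Hamming} combined with the Singleton bound (with all blocks of length one for the Hamming case). Your explicit remarks on cardinality preservation under the invertible block-diagonal action and on the squeeze argument simply fill in the details the paper leaves to the reader.
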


\section{Skew Polynomials on Finite Chain Rings} \label{sec skew polynomials}

We will extensively use \textit{skew polynomials} \cite{ore}, but defined over finite chain rings instead of fields or division rings. The ring of skew polynomials over $ S $ with morphism $ \sigma $ is the set $ S[x;\sigma] $ formed by elements of the form $ F = F_0 + F_1 x + F_2 x^2 + \cdots + F_d x^d $, for $ F_0, F_1, F_2, \ldots, F_d \in S $ and $ d \in \mathbb{N} $, as in the conventional polynomial ring. Furthermore, if $ F_d \neq 0 $, then we define the \textit{degree} of $ F $ as $ \deg(F) = d $, and we say that $ F $ is \textit{monic} if $ F_d = 1 $. If $ F = 0 $, then we define $ \deg(F) = - \infty $. Moreover, sums of skew polynomials and products with scalars on the left are defined as in the case of conventional polynomials. The only difference is that the product of skew polynomials is given by the rule
$$ x a = \sigma(a) x, $$
for $ a \in S $, together with the rule $ x^i x^j = x^{i+j} $, for $ i,j \in \mathbb{N} $.

In order to define linearized Reed--Solomon codes for the extension $ R \subseteq S $, we will need the following definitions. We start with the following operators, considered in \cite[Def.\ 3.1]{hilbert90} and \cite[Eq. (2.7)]{leroy-pol} for division rings. The definition can be trivially adapted to finite chain rings.

\begin{definition} [\textbf{\cite{hilbert90, leroy-pol}}] \label{def linearized operators}
Fix $ a \in S $ and define its $ i $th norm as $ N_i(a) = \sigma^{i-1}(a) \cdots \sigma(a)a $ for $ i \in \mathbb{N} $. Now define the $ R $-linear operator $ \mathcal{D}_a^i : S \longrightarrow S $ by
\begin{equation}
\mathcal{D}_a^i(\beta) = \sigma^i(\beta) N_i(a) ,
\label{eq definition linear operator}
\end{equation}
for all $ \beta \in S $, and all $ i \in \mathbb{N} $. Define also $ \mathcal{D}_a = \mathcal{D}_a^1 $ and observe that $ \mathcal{D}_a^{i+1} = \mathcal{D}_a \circ \mathcal{D}_a^i $, for $ i \in \mathbb{N} $. If $ \sigma $ is not understood from the context, we will write $ N^\sigma_i(a) $ and $ \mathcal{D}_{\sigma,a}^i(\beta) $, for $ i \in \mathbb{N} $, $ a, \beta \in S $.

Finally, given a skew polynomial $ F = \sum_{i=0}^d F_i x^i \in S[x;\sigma] $, where $ d \in \mathbb{N} $, we define its operator evaluation on the pair $ (a,\beta) \in S^2 $ as
$$ F_a(\beta) = \sum_{i=0}^d F_i \mathcal{D}_a^i (\beta) \in S. $$
Observe that $ F_a $ can be seen as an $ R $-linear map $ F_a : S \longrightarrow S $, taking $ \beta \in S $ to $ F_a(\beta) \in S $. 
\end{definition}

\begin{example} 
Let the setting be as in Example \ref{ex first}. Choose $ a = \alpha + 4 $ and $ \beta = \alpha $. Then
$$ \mathcal{D}_a^2(\beta) = \sigma^2(\beta) \sigma(a) a = \alpha^9 \cdot (\alpha^3+4) \cdot (\alpha+4) = 8 \alpha. $$
\end{example}

We will also need the concept of \textit{conjugacy}, introduced in \cite{lam, lam-leroy} for division rings. We adapt the definition to finite chain rings as follows.

\begin{definition} [\textbf{Conjugacy \cite{lam, lam-leroy}}] \label{def conjugacy}
Given $ a,b \in S $, we say that they are \textit{conjugate} in $ S $ with respect to $ \sigma $ if there exists $ \beta \in S^* $ such that $ b = a^\beta $, where
$$ a^\beta = \sigma(\beta) a \beta^{-1} . $$
\end{definition}

We now extend some results by Lam and Leroy \cite{lam, lam-leroy} to finite chain rings. These results will be crucial for defining and studying linearized Reed--Solomon codes.

The following result follows by combining \cite[Th. 23]{lam} and \cite[Th. 4.5]{lam-leroy}, and was presented in the following form in \cite[Th. 2.1]{leroy-noncommutative}. We only consider finite fields.

\begin{lemma} [\textbf{\cite{lam, lam-leroy}}] \label{lemma zeros lam leroy}
If $ \overline{a}_1, \overline{a}_2, \ldots, \overline{a}_\ell \in \mathbb{F}_{q^m}^* $ are pair-wise non-conjugate (with respect to $ \overline{\sigma} $) and $ F \in \mathbb{F}_{q^m}[x;\overline{\sigma}] $ is not zero, then
$$ \sum_{i=1}^\ell \dim_{\mathbb{F}_q}(\ker (F_{\overline{a}_i})) \leq \deg(F). $$
\end{lemma}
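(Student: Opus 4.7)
The plan is to construct a single monic skew polynomial $G \in \mathbb{F}_{q^m}[x;\overline{\sigma}]$ that right-divides $F$ and has degree exactly $d_1 + \cdots + d_\ell$, where $d_i := \dim_{\mathbb{F}_q}(\ker(F_{\overline{a}_i}))$. The desired inequality then follows immediately, since a right-divisor of a nonzero skew polynomial over a field has degree at most that of the dividend.

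First, I would treat the single-conjugacy-class case. For fixed $i$, choose an $\mathbb{F}_q$-basis $\beta^{(i)}_1, \ldots, \beta^{(i)}_{d_i}$ of $\ker(F_{\overline{a}_i})$ and build iteratively a monic skew polynomial $G_i$ of degree $d_i$ whose operator evaluation at $\overline{a}_i$ vanishes on the entire space. The standard construction multiplies, on the left, by successive degree-one factors of the form $x - \overline{a}_i^{\gamma_j}$, where $\gamma_j \in \mathbb{F}_{q^m}^*$ is chosen at each step so that the newly appended factor kills the next basis vector modulo what has already been annihilated; the $\mathbb{F}_q$-linear independence of the $\beta^{(i)}_j$ guarantees that such $\gamma_j$ exists and, hence, that $\deg(G_i)=d_i$. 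Right-divisibility of $F$ by $G_i$ is then extracted from Euclidean division in $\mathbb{F}_{q^m}[x;\overline{\sigma}]$: writing $F = Q \cdot G_i + R$ with $\deg(R) < d_i$, the remainder $R$ must still annihilate the whole $d_i$-dimensional space $\ker(F_{\overline{a}_i})$ at $\overline{a}_i$, and the minimality of $d_i$ among degrees of annihilators on this space forces $R = 0$.

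Next comes the combination step, which is where the pairwise non-conjugacy hypothesis is essential. I would invoke the structural property of skew-polynomial root sets from the work of Lam and Leroy: evaluation-theoretic root sets decompose conjugacy-class by conjugacy-class, and the least common left multiple (in the sense of right-divisibility) of annihilator polynomials attached to pairwise disjoint conjugacy classes has degree equal to the \emph{sum} of the individual degrees, with no cancellation. Applying this to $G_1, \ldots, G_\ell$, whose distinguished annihilation points $\overline{a}_1, \ldots, \overline{a}_\ell$ lie in pairwise distinct conjugacy classes, produces a monic polynomial $G$ of degree $d_1 + \cdots + d_\ell$. Since each $G_i$ right-divides $F$, so does their least common left multiple $G$, and the proof concludes.

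The main obstacle I expect is precisely the additivity of degrees for this combined annihilator across distinct conjugacy classes. Within a single conjugacy class the situation is controlled by the subspace structure recalled above, but across classes one must rule out any degree drop when combining: the fact that conjugacy is an equivalence relation ensures that conjugates generated while constructing $G_j$ for $j<i$ cannot coincide with any conjugate of $\overline{a}_i$, so the minimal annihilator attached to the $i$-th class cannot be absorbed into the product of the earlier $G_j$. This non-interaction — the algebraic (or \emph{P}-) independence of distinct conjugacy classes in the Lam–Leroy framework — is exactly what upgrades the per-class bound $d_i \leq \deg(G_i)$ to the summed bound $\sum_{i=1}^\ell d_i \leq \deg(F)$.
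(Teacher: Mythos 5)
The paper does not actually prove this lemma: it is quoted verbatim from the literature (Lam \cite[Th.~23]{lam}, Lam--Leroy \cite[Th.~4.5]{lam-leroy}, in the form of \cite[Th.~2.1]{leroy-noncommutative}), so there is no in-paper argument to compare against. Your outline is exactly the classical argument underlying those citations: per-class minimal annihilators, right-divisibility of $F$, and degree additivity of the least common left multiple across distinct conjugacy classes. As a standalone proof, however, it has two soft spots. First, the single-class step is circular as phrased: the assertion that a nonzero skew polynomial annihilating the $d_i$-dimensional space $\ker(F_{\overline{a}_i})$ must have degree at least $d_i$ (your ``minimality of $d_i$'') is precisely the $\ell=1$ instance of the lemma being proved, and the same issue hides in the claim that linear independence of the $\beta^{(i)}_j$ guarantees the next $\gamma_j$ exists (you need the degree-$j$ partial annihilator not to vanish on $\beta^{(i)}_{j+1}$, which is again the $\ell=1$ bound). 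This is benign and fixable by induction on dimension: if $H_{\overline{a}}$ kills a $d$-dimensional space, pick $0 \neq \beta$ in it, use $H_{\overline{a}}(\beta) = H(\overline{a}^{\beta})\beta$ to factor $H = H'\cdot(x-\overline{a}^{\beta})$, note that $\ker\bigl((x-\overline{a}^{\beta})_{\overline{a}}\bigr) = \mathbb{F}_q\beta$ and that operator evaluation is multiplicative, $(GH)_{\overline{a}} = G_{\overline{a}} \circ H_{\overline{a}}$, and recurse; but the induction should be made explicit rather than asserted.

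Second, and more importantly, the cross-class step is the actual content of the lemma, and you invoke it rather than prove it: the additivity $\deg \mathrm{lclm}(G_1,\ldots,G_\ell) = d_1+\cdots+d_\ell$ for annihilators attached to pairwise distinct conjugacy classes is exactly the $P$-independence result \cite[Th.~4.5]{lam-leroy}, one of the two theorems the paper itself cites for this lemma. Your heuristic --- that conjugates of $\overline{a}_i$ cannot coincide with conjugates of $\overline{a}_j$ --- does not by itself exclude a degree drop: since the degree of a least common left multiple equals the sum of the degrees minus the degree of the greatest common right divisor, one must show the relevant common right divisors are trivial, and the fact that a right divisor of a product of linear factors with roots in one class cannot share a nontrivial right factor with polynomials built from other classes is itself a theorem of the Lam--Leroy theory, not a consequence of disjointness of the root sets alone. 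So your proposal is a correct outline of the standard proof, but at its crux it reduces to the same cited result the paper quotes, and as an independent argument it is incomplete exactly there.
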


We now extend this result to the finite chain rings $ R \subseteq S $ (we will give a different extension in Lemma \ref{lemma degree bound with rank}). To this end, we define the \textit{free rank} of an $ R $-module $ M $ as the maximum size of an $ R $-linearly independent subset of $ M $. We will denote it by $ \frk_R(M) $.

\begin{theorem} \label{th zeros rings}
Let $ a_1, a_2, \ldots, a_\ell \in S^* $ be such that $ a_i - a_j^\beta \in S^* $, for all $ \beta \in S^* $, and for $ 1 \leq i < j \leq \ell $. For any non-zero monic $ F \in S[x;\sigma] $, we have 
$$ \sum_{i=1}^\ell \frk_R(F_{a_i}^{-1}(\mathfrak{M})) \leq \deg(F). $$
\end{theorem}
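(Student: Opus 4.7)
The plan is to reduce Theorem \ref{th zeros rings} to its finite-field analogue, Lemma \ref{lemma zeros lam leroy}, by projecting everything modulo the maximal ideal $\mathfrak{M}$. The essential point is that the operator evaluation $F_a$ is compatible with the quotient map $\rho : S \longrightarrow S/\mathfrak{M} = \mathbb{F}_{q^m}$, so that the preimage $F_a^{-1}(\mathfrak{M})$ projects into $\ker(\overline{F}_{\overline{a}}) \subseteq \mathbb{F}_{q^m}$, and Lemma \ref{lemma linear independence} will then transfer $R$-free-rank bounds in $S$ to $\mathbb{F}_q$-dimension bounds in $\mathbb{F}_{q^m}$.

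First I would verify the following compatibility: since the diagram \eqref{eq commutative diagram} commutes, one has $\overline{N_i(a)} = N_i(\overline{a})$ for every $a \in S$ and $i \in \mathbb{N}$, and $\overline{\sigma^i(\beta)} = \overline{\sigma}^i(\overline{\beta})$. Using the $R$-linearity of $\rho$ and the definition in \eqref{eq definition linear operator}, a short computation yields $\overline{F_a(\beta)} = \overline{F}_{\overline{a}}(\overline{\beta})$ for every $a, \beta \in S$. Writing $\overline{F}$ for the reduction in $\mathbb{F}_{q^m}[x;\overline{\sigma}]$, and noting that $\overline{F}$ is monic of degree $\deg(F)$ (because $F$ is monic), it follows immediately that $\beta \in F_a^{-1}(\mathfrak{M})$ if and only if $\overline{\beta} \in \ker(\overline{F}_{\overline{a}})$.

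Next I would use Lemma \ref{lemma linear independence}(2) to compare ranks. If $\beta_1, \ldots, \beta_r \in F_{a_i}^{-1}(\mathfrak{M})$ are $R$-linearly independent, then their projections $\overline{\beta}_1, \ldots, \overline{\beta}_r \in \mathbb{F}_{q^m}$ are $\mathbb{F}_q$-linearly independent, and by the previous step they lie in $\ker(\overline{F}_{\overline{a}_i})$. Hence
\begin{equation*}
\frk_R\!\left(F_{a_i}^{-1}(\mathfrak{M})\right) \;\leq\; \dim_{\mathbb{F}_q} \ker(\overline{F}_{\overline{a}_i}), \qquad 1 \leq i \leq \ell.
\end{equation*}
To invoke Lemma \ref{lemma zeros lam leroy} on the right-hand side I still need the $\overline{a}_i$ to be nonzero and pairwise non-$\overline{\sigma}$-conjugate. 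Nonzero-ness follows from $a_i \in S^*$ and $S^* = S \setminus \mathfrak{M}$. For non-conjugacy, suppose for contradiction that $\overline{a}_i = \overline{a}_j^{\,\overline{\beta}}$ for some $\overline{\beta} \in \mathbb{F}_{q^m}^*$ and some $i < j$. Lifting $\overline{\beta}$ to any $\beta \in S$ with $\rho(\beta) = \overline{\beta}$ gives $\beta \in S^*$, and applying $\rho$ to $a_j^\beta = \sigma(\beta) a_j \beta^{-1}$ shows $\overline{a_i - a_j^\beta} = 0$, i.e.\ $a_i - a_j^\beta \in \mathfrak{M}$, contradicting the hypothesis that $a_i - a_j^\beta \in S^*$.

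Combining the three ingredients yields
\begin{equation*}
\sum_{i=1}^\ell \frk_R\!\left(F_{a_i}^{-1}(\mathfrak{M})\right) \;\leq\; \sum_{i=1}^\ell \dim_{\mathbb{F}_q} \ker(\overline{F}_{\overline{a}_i}) \;\leq\; \deg(\overline{F}) \;=\; \deg(F),
\end{equation*}
which is the desired inequality. The only step requiring real care is the reduction of the hypothesis $a_i - a_j^\beta \in S^*$ to pairwise non-conjugacy of the $\overline{a}_i$: one must be sure that every $\overline{\beta} \in \mathbb{F}_{q^m}^*$ actually arises from some $\beta \in S^*$, which is exactly what the identification $S^* = S \setminus \mathfrak{M}$ and surjectivity of $\rho$ guarantee. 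Everything else is a routine transfer of the Lam--Leroy bound across the quotient $S \to \mathbb{F}_{q^m}$.
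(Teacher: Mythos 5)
Your proposal is correct and follows essentially the same route as the paper's proof: reduce modulo $\mathfrak{M}$ using the compatibility $\overline{F_a(\beta)} = \overline{F}_{\overline{a}}(\overline{\beta})$, transfer $R$-linear independence to $\mathbb{F}_q$-linear independence via Lemma \ref{lemma linear independence}, check that the $\overline{a}_i$ are nonzero and pairwise non-conjugate by lifting units, and then apply Lemma \ref{lemma zeros lam leroy} together with $\deg(\overline{F}) = \deg(F)$ for monic $F$. Your explicit lifting argument for the non-conjugacy step simply spells out what the paper leaves implicit.
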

\begin{proof}
If $ F = F_0 + F_1 x + \cdots + F_d x^d $, where $ F_0, F_1, \ldots, F_d \in S $, denote $ \overline{F} = \overline{F}_0 + \overline{F}_1 x + \cdots + \overline{F}_d x^d \in \mathbb{F}_{q^m}[x;\overline{\sigma}] $. We have the following two facts:

1) We have that 
$$ \frk_R(F_a^{-1}(\mathfrak{M})) \leq \dim_{\mathbb{F}_q}(\ker(\overline{F}_{\overline{a}})). $$
We now prove this claim. From Definition \ref{def linearized operators} and the fact that $ \overline{\sigma(a)} = \overline{\sigma}(\overline{a}) $ (see (\ref{eq commutative diagram})), 
$$ \overline{F}_{\overline{a}}(\overline{\beta}) = \overline{F_a(\beta)}, $$
for all $ a, \beta \in S $. This means that, if $ F_a(\beta) \in \mathfrak{M} $, then $ \overline{F}_{\overline{a}}(\overline{\beta}) = \overline{F_a(\beta)} = 0 $. Therefore, $ \overline{F_a^{-1}(\mathfrak{M})} \subseteq \ker(\overline{F}_{\overline{a}}) $. By Item 2 in Lemma \ref{lemma linear independence}, $ \frk_R(F_a^{-1}(\mathfrak{M})) \leq \dim_{\mathbb{F}_q}(\overline{F_a^{-1}(\mathfrak{M})}) $. Thus we conclude that $ \frk_R(F_a^{-1}(\mathfrak{M})) \leq \dim_{\mathbb{F}_q}(\overline{F_a^{-1}(\mathfrak{M})}) \leq \dim_{\mathbb{F}_q}(\ker(\overline{F}_{\overline{a}})) $. 

2) For $ 1 \leq i < j \leq \ell $ and $ \overline{\beta} \in \mathbb{F}_{q^m}^* $, we have that $ \overline{a}_i \neq \overline{a}_j^{\overline{\beta}} $ since $ \beta \in S^* $ and $ a_i - a_j^\beta \notin \mathfrak{M} $. 

By 2), Lemma \ref{lemma zeros lam leroy} applies and, using 1), we conclude that
\begin{equation*}
\begin{split}
\sum_{i=1}^\ell \frk_R(F_{a_i}^{-1}(\mathfrak{M})) & \leq \sum_{i=1}^\ell \dim_{\mathbb{F}_q}(\ker(\overline{F}_{\overline{a}_i})) \\
& \leq \deg(\overline{F}) \\
& = \deg(F),
\end{split}
\end{equation*}
where $ \deg(F) = \deg(\overline{F}) $, since $ F $ is non-zero and monic. 
\end{proof}

Using Theorem \ref{th zeros rings}, we may prove the existence of monic annihilator skew polynomials and Lagrange interpolating skew polynomials of the smallest possible degree. To this end, we need more auxiliary tools. First, we need the following alternative notion of evaluation, introduced in \cite{lam, lam-leroy} for division rings and based on right Euclidean division \cite{ore}. The adaptation to finite chain rings is trivial.

\begin{definition} [\textbf{\cite{lam, lam-leroy}}] \label{def skew evaluation}
Given a skew polynomial $ F \in S[x;\sigma] $ and $ a \in S $, we define the remainder evaluation of $ F $ at $ a $, denoted by $ F(a) $, as the only scalar $ F(a) \in S $ such that there exists $ Q \in S[x;\sigma] $ with $ F = Q \cdot (x-a) + F(a) $.
\end{definition}

We will also need the product rule, given in \cite[Th. 2.7]{lam-leroy} for division rings, but which holds for finite chain rings as stated below.

\begin{lemma} [\textbf{\cite{lam-leroy}}] \label{lemma product rule}
Let $ F,G \in S[x;\sigma] $ and $ a \in S $. If $ G(a) = 0 $, then $ (FG) (a) = 0 $. If $ \beta = G(a) \in S^* $, then $ (FG)(a) = F(a^\beta)G(a) $.
\end{lemma}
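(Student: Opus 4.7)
\emph{Proof plan.} The plan is to adapt the division-ring proof of \cite[Th.\ 2.7]{lam-leroy} to the chain-ring setting; every step in that proof relies only on (i) right Euclidean division by monic skew polynomials in $S[x;\sigma]$, which is available for any ring $S$, (ii) the commutation rule $x\beta = \sigma(\beta)x$, and (iii) invertibility of $\beta$, which is supplied exactly by the hypothesis $\beta = G(a)\in S^*$. Since $x-a$ is monic, right division gives $G = Q(x-a) + G(a)$ for some $Q\in S[x;\sigma]$, hence $FG = FQ(x-a) + F\cdot G(a)$. This immediately settles the first assertion: if $G(a) = 0$, then $FG = FQ(x-a)$, so $(FG)(a) = 0$.

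For the second assertion, with $\beta = G(a)\in S^*$, the same decomposition yields $(FG)(a) = (F\beta)(a)$, reducing the problem to evaluating $(F\beta)(a)$. I would first establish the auxiliary identity
\[ (F\beta)(a) = F_a(\beta) \]
for every $F\in S[x;\sigma]$ and every $\beta\in S$, by induction on $\deg(F)$. The case $F=c$ constant is immediate from Definition \ref{def linearized operators}. For the inductive step, combining $x^d\beta = \sigma^d(\beta)x^d$ with the Euclidean identity $x^d = Q_d(x-a) + N_d(a)$ (a routine computation in $S[x;\sigma]$) yields $(x^d\beta)(a) = \sigma^d(\beta)N_d(a) = \mathcal{D}_a^d(\beta)$, which is exactly the $d$-th contribution to $F_a(\beta)$; summing over the monomials of $F$ by $R$-linearity closes the induction.

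The second key step identifies $F_a(\beta)$ with $F(a^\beta)\beta$ whenever $\beta\in S^*$. By $R$-linearity of both expressions in $F$, it suffices to check the monomial case $F = x^i$. Here $x^i(a^\beta) = N_i(a^\beta)$, and the telescoping computation
\[ N_i(a^\beta) = \prod_{j=0}^{i-1}\sigma^{j+1}(\beta)\sigma^j(a)\sigma^j(\beta^{-1}) = \sigma^i(\beta)\, N_i(a)\, \beta^{-1}, \]
valid in the commutative ring $S$, gives $x^i(a^\beta)\beta = \sigma^i(\beta)N_i(a) = \mathcal{D}_a^i(\beta)$. Chaining the identities then yields $(FG)(a) = (F\beta)(a) = F_a(\beta) = F(a^\beta)\beta = F(a^\beta)G(a)$, as required.

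The main obstacle is really just notational bookkeeping rather than any new ring-theoretic subtlety: invertibility is used only to define $a^\beta = \sigma(\beta)a\beta^{-1}$ and to telescope the factors $\sigma^j(\beta^{-1})\sigma^{j+1}(\beta)$, and both are legitimate under the assumption $\beta\in S^*$ guaranteed by the hypothesis. Once this is in place, the proof transfers from \cite{lam-leroy} essentially verbatim, with no special appeal to the structure of finite chain rings beyond commutativity.
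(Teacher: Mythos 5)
Your proof is correct, and it follows essentially the route the paper intends: the paper gives no proof of this lemma, simply citing \cite[Th.~2.7]{lam-leroy} and asserting that the division-ring argument carries over, and your write-up is exactly that adaptation (right division by the monic $x-a$, the remainder formula $(x^i\beta)(a)=\sigma^i(\beta)N_i(a)$, and the telescoping identity $N_i(a^\beta)=\sigma^i(\beta)N_i(a)\beta^{-1}$, which only needs $\beta\in S^*$). Note that your two auxiliary identities $(F\beta)(a)=F_a(\beta)$ and $F_a(\beta)=F(a^\beta)\beta$ together are precisely Lemma~\ref{lemma eval connection}, so your argument also re-proves that statement along the way; the linearity you invoke when summing over monomials is left $S$-linearity of remainder evaluation in $F$, not just $R$-linearity, but that is immediate from the division identity.
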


Another tool that we will need is the following connection between the remainder evaluation as above and the evaluation from Definition \ref{def linearized operators}. It was proven in \cite[Lemma 1]{lam} for division rings, but it holds for finite chain rings as stated below.

\begin{lemma} [\textbf{\cite{lam}}] \label{lemma eval connection}
Given $ F \in S[x;\sigma] $, $ a \in S $ and $ \beta \in S^* $, it holds that
$$ F_a(\beta) = F ( a^\beta ) \beta . $$
\end{lemma}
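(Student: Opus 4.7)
The plan is to prove the identity by computing the remainder evaluation $(F\cdot\beta)(a)$ of the skew polynomial product $F\cdot\beta\in S[x;\sigma]$ (with $\beta\in S^*$ viewed as a constant polynomial) in two different ways: once to recognise it as $F_a(\beta)$, and once via the product rule (Lemma \ref{lemma product rule}) to recognise it as $F(a^\beta)\beta$.

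First I would note that remainder evaluation is $S$-left-linear: this follows at once from the uniqueness of the quotient and remainder in right Euclidean division by $x-a$, which lets one add and left-scale the defining equations. Next, using only the commutation rule $xs=\sigma(s)x$ iteratively, one rewrites
\begin{equation*}
F\cdot\beta \;=\; \sum_{i=0}^d F_i x^i \beta \;=\; \sum_{i=0}^d F_i \sigma^i(\beta)\, x^i.
\end{equation*}
An induction on $i$ then shows $(x^i)(a)=N_i(a)$: assuming $x^{i-1}=Q_{i-1}(x-a)+N_{i-1}(a)$, left-multiplication by $x$ and the identity $x\cdot N_{i-1}(a)=\sigma(N_{i-1}(a))x=\sigma(N_{i-1}(a))((x-a)+a)$ yield $x^i = \bigl(xQ_{i-1}+\sigma(N_{i-1}(a))\bigr)(x-a) + \sigma(N_{i-1}(a))a$, and $\sigma(N_{i-1}(a))a = N_i(a)$. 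Combining this with left-$S$-linearity and Definition \ref{def linearized operators}:
\begin{equation*}
(F\beta)(a) \;=\; \sum_{i=0}^d F_i \sigma^i(\beta)\, N_i(a) \;=\; \sum_{i=0}^d F_i\, \mathcal{D}_a^i(\beta) \;=\; F_a(\beta).
\end{equation*}

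On the other hand, the constant skew polynomial $\beta$ has remainder evaluation $\beta(a)=\beta$ (from $\beta = 0\cdot(x-a)+\beta$), and since $\beta\in S^*$, the product rule (Lemma \ref{lemma product rule}) applies to the factorisation $F\cdot\beta$ and gives $(F\beta)(a) = F(a^\beta)\,\beta$. Comparing the two expressions for $(F\beta)(a)$ yields the claim.

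The only delicate point is the induction $x^i(a)=N_i(a)$: one cannot simply iterate Lemma \ref{lemma product rule} as in the division-ring proof of \cite{lam}, because $N_{i-1}(a)$ need not be a unit in $S$, so the hypothesis $G(a)\in S^*$ of the product rule may fail. This is why I carry out the induction by hand using the commutation rule; everything else reduces to $S$-left-linearity of remainder evaluation and one direct application of the product rule with the (unit) constant $\beta$.
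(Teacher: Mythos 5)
Your proof is correct. The paper gives no argument for this lemma (it cites Lam's Lemma~1 for division rings and asserts the adaptation to finite chain rings), and your computation is exactly the standard verification filled in carefully: left $S$-linearity of remainder evaluation together with the by-hand induction $(x^i)(a)=N_i(a)$ yield $(F\beta)(a)=\sum_i F_i\sigma^i(\beta)N_i(a)=F_a(\beta)$, while a single application of the product rule to the unit constant $\beta$ yields $(F\beta)(a)=F(a^\beta)\beta$, and your observation that one should not iterate the product rule (since $N_{i-1}(a)$ need not be a unit in $S$) is precisely the point that makes the division-ring argument carry over to the chain-ring setting.
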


We will show that annihilator skew polynomials and Lagrange interpolating skew polynomials exist for sequences of evaluation points as follows.

\begin{definition} \label{def P-independent}
Consider vectors $ \mathbf{a} = ( a_1 , a_2, \ldots, a_\ell) \in (S^*)^\ell $ and $ \boldsymbol\beta_i = (\beta_{i,1}, \beta_{i,2}, \ldots, $ $ \beta_{i,n_i}) $ $ \in S^{n_i} $, for $ i \in [\ell] $. Set $ \boldsymbol\beta = (\boldsymbol\beta_1, \boldsymbol\beta_2, \ldots, \boldsymbol\beta_\ell) $. We say that $ (\mathbf{a}, \boldsymbol\beta) $ satisfies the \textit{MSRD property} if the following conditions hold:
\begin{enumerate}
\item
$ a_i - a_j^\beta \in S^* $, for all $ \beta \in S^* $ and for $ 1 \leq i < j \leq \ell $.
\item
$ \beta_{i,1}, \beta_{i,2} , \ldots, \beta_{i,n_i} $ are linearly independent over $ R $, for $ i \in [\ell] $.
\end{enumerate}
Note that, by Item 3 in Lemma \ref{lemma linear independence}, $ \beta_{i,j} \in S^* $, for $ j \in [n_i] $ and $ i \in [\ell] $.
\end{definition}

\begin{example} \label{ex MSRD sequence} 
Let the setting be as in Example \ref{ex first}. Let $ a_1 = 1 $, $ a_2 = \alpha+1 $. Their images $ \overline{1}, \overline{\alpha}+\overline{1} $ in $ \mathbb{F}_9 $ satisfy 
$$ N_{\mathbb{F}_9/\mathbb{F}_3}(\overline{\alpha}+\overline{1}) = (\overline{\alpha}^3+\overline{1})(\overline{\alpha}+\overline{1}) = \overline{2} \neq \overline{1} = N_{\mathbb{F}_9/\mathbb{F}_3}(\overline{1}), $$
where $ N_{\mathbb{F}_9/\mathbb{F}_3} $ is the norm of the field extension $ \mathbb{F}_3 \subseteq \mathbb{F}_9 $. Thus by Hilbert's Theorem 90 it holds that $ a_1 - a_2^\beta \in S^* $, for all $ \beta \in S^* $. 

Finally choose $ \beta_1 = 1 $ and $ \beta_2 = \alpha $, which are clearly $ R $-linearly independent, and set $ \mathbf{a} = (a_1,a_2) $ and $ \boldsymbol\beta = (\beta_1, \beta_2) $. Therefore $ (\mathbf{a},\boldsymbol\beta) = ((1, \alpha+1),(1,\alpha)) $ satisfies the MSRD property.
\end{example}

The next step is the existence of minimal annihilator skew polynomials of the ``right'' degree. The following proposition recovers \cite[Prop. 2.5]{kamche} when $ \ell = 1 $ and $ a_1 = 1 $.

\begin{theorem} \label{th central for min skew pol}
Let $ (\mathbf{a},\boldsymbol\beta) $ be as in Definition \ref{def P-independent}, and satisfying the MSRD property. Then there exist units $ \gamma_{i,j} \in S^* $, with $ \gamma_{1,1} = \beta_{1,1} $, and skew polynomials of the form
\begin{equation*}
\begin{split}
G_{i,j} = & \left( x - a_i^{\gamma_{i,j}} \right) \cdots \left( x - a_i^{\gamma_{i,1}} \right) \cdot \\
& \left( x - a_{i-1}^{\gamma_{i-1,n_{i-1}}} \right) \cdots \left( x - a_{i-1}^{\gamma_{i-1,1}} \right) \cdots \\
& \left( x - a_1^{\gamma_{1,n_1}} \right) \cdots \left( x - a_1^{\gamma_{1,1}} \right) \in S[x;\sigma] ,
\end{split}
\end{equation*}
of degree $ \deg(G_{i,j}) = \sum_{u=1}^{i-1} n_u + j $, and such that
\begin{equation*}
\begin{split}
G_{i,j}(a_u^{\beta_{u,v}}) = 0, & \textrm{ if } 1 \leq u \leq i-1, \\
& \textrm{ or if } u=i \textrm{ and } 1 \leq v \leq j, \\
G_{i,j}(a_u^{\beta_{u,v}}) \in S^*, & \textrm{ if } i+1 \leq u \leq \ell, \\
& \textrm{ or if } u=i \textrm{ and } j+1 \leq v \leq n_i,
\end{split}
\end{equation*}
and $ G_{i,j} $ is unique among monic skew polynomials in $ S[x;\sigma] $ satisfying such properties, for $ j \in [n_i] $ and $ i \in [\ell] $.
\end{theorem}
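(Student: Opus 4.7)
The plan is to construct the units $\gamma_{i,j} \in S^*$ and the skew polynomials $G_{i,j}$ simultaneously by induction on the pair $(i,j)$ in lexicographic order, equivalently on the total degree $d = \sum_{u=1}^{i-1} n_u + j$. The product form appearing in the statement suggests adding one linear factor at a time, so the inductive step either sets $G_{i,j+1} = (x - a_i^{\gamma_{i,j+1}}) G_{i,j}$, or, when $j = n_i$, sets $G_{i+1,1} = (x - a_{i+1}^{\gamma_{i+1,1}}) G_{i,n_i}$.

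For the \emph{base case}, take $\gamma_{1,1} = \beta_{1,1}$ and $G_{1,1} = x - a_1^{\gamma_{1,1}}$, so that $G_{1,1}(a_1^{\beta_{1,1}}) = 0$ by Definition \ref{def skew evaluation}. For the \emph{inductive step}, assume $G_{i,j}$ has been built with the stated properties. The induction hypothesis then guarantees $\beta' := G_{i,j}(a_i^{\beta_{i,j+1}}) \in S^*$, and I would define $\gamma_{i,j+1} = \beta' \beta_{i,j+1} \in S^*$. Using Lemma \ref{lemma product rule} together with the identity $(a^\beta)^\gamma = a^{\gamma \beta}$, one checks that $G_{i,j+1}(a_i^{\beta_{i,j+1}}) = (a_i^{\beta' \beta_{i,j+1}} - a_i^{\gamma_{i,j+1}}) \beta' = 0$; annihilation at all the previously-zero evaluation points follows immediately from the first case of Lemma \ref{lemma product rule}.

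The heart of the argument is establishing non-annihilation at the remaining points, and this is where Theorem \ref{th zeros rings} is essential. Using Lemma \ref{lemma eval connection} together with the fact that every $\beta_{u,v}$ is a unit (Item 3 of Lemma \ref{lemma linear independence}), the condition $G_{i,j+1}(a_u^{\beta_{u,v}}) \in S^*$ is equivalent to $\beta_{u,v} \notin (G_{i,j+1})_{a_u}^{-1}(\mathfrak{M})$. Theorem \ref{th zeros rings} gives $\sum_{u=1}^\ell \frk_R((G_{i,j+1})_{a_u}^{-1}(\mathfrak{M})) \leq \deg(G_{i,j+1}) = \sum_{u<i} n_u + j + 1$. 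The already-established zeros contribute $n_u$ $R$-linearly independent elements to the $u$th preimage for $u \leq i-1$ and $j+1$ elements for $u = i$, by MSRD property (2). This matches the degree, so the bound is saturated and each individual free rank is pinned down. Any additional $\beta_{u,v}$ landing in some preimage would then yield either too many linearly independent vectors in the $u = i$ block, or a single unit in a block with $u > i$ whose free rank must be $0$, both contradictions.

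The main obstacle is precisely this same-block non-annihilation for $u = i$, $v > j+1$: a direct expansion via the product rule gives $G_{i,j+1}(a_i^{\beta_{i,v}}) = (a_i^{\beta'' \beta_{i,v}} - a_i^{\gamma_{i,j+1}}) \beta''$ with $\beta'' = G_{i,j}(a_i^{\beta_{i,v}}) \in S^*$, but showing this difference is a unit by direct manipulation seems unwieldy over a chain ring. Routing through the sharpened free-rank count from Theorem \ref{th zeros rings} is what makes the step work and closes the induction.
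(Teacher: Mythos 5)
Your proposal is correct and follows essentially the same route as the paper's proof: induction on $(i,j)$, defining $\gamma_{i,j+1} = G_{i,j}(a_i^{\beta_{i,j+1}})\,\beta_{i,j+1}$ so the new linear factor kills the point $a_i^{\beta_{i,j+1}}$ via the product rule (Lemma \ref{lemma product rule}) and Lemma \ref{lemma eval connection}, with the remaining non-annihilation obtained from the degree bound of Theorem \ref{th zeros rings}. In fact you spell out the free-rank saturation argument behind that last step more explicitly than the paper, which simply invokes Theorem \ref{th zeros rings}; this is a welcome elaboration, not a deviation.
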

\begin{proof}
We prove the proposition by induction in the pair $ (i,j) $. For the basis step, we only need to define $ G_{1,1} = x - a_1^{\beta_{1,1}} $. We have $ G_{1,1,a_1}(\beta_{1,1}) = 0 $ by Lemma \ref{lemma eval connection}. On the other hand, since $ \deg(G_{1,1}) = 1 $ and it is non-zero and monic, then $ G_{1,1,a_u}(\beta_{u,v}) \in S^* $, if $ (u,v) \neq (1,1) $, by Theorem \ref{th zeros rings} and Lemma \ref{lemma eval connection}.

Now, we have two cases for the inductive step. Either we go from $ G_{i,j} $ to $ G_{i,j+1} $, if $ j < n_i $, or from $ G_{i,n_i} $ to $ G_{i+1,1} $ if $ i < \ell $. The process stops when $ i=\ell $ and $ j=n_\ell $. We will only develop the first case of induction step, since the second case is analogous.

Assume that $ G_{i,j} $ satisfies the properties in the proposition and $ j < n_i $. In particular, $ G_{i,j}(a_i^{\beta_{i,j+1}}) \in S^* $. Thus, we may define $ \gamma_{i,j+1} = G_{i,j}(a_i^{\beta_{i,j+1}}) \beta_{i,j+1} \in S^* $ and
$$ G_{i,j+1} = \left( x - a_i^{\gamma_{i,j+1}} \right) G_{i,j}. $$
By Lemmas \ref{lemma product rule} and \ref{lemma eval connection} and the assumptions on $ G_{i,j} $, we have that $ G_{i,j+1}(a_u^{\beta_{u,v}}) = 0 $, if $ 1 \leq u \leq i-1 $, or if $ u=i $ and $ 1 \leq v \leq j+1 $. Since $ G_{i,j+1} $ has such a set of zeros, it is non-zero, monic and of degree $ \sum_{u=1}^{i-1} n_u + j+1 $, then we deduce from Theorem \ref{th zeros rings} and Lemma \ref{lemma eval connection} that $ G_{i,j+1}(a_u^{\beta_{u,v}}) \in S^* $, if $ i+1 \leq u \leq \ell $, or if $ u=i $ and $ j+2 \leq v \leq n_i $.

Finally, the uniqueness of $ G_{i,j} $ follows by combining Theorem \ref{th zeros rings} and Lemma \ref{lemma eval connection}.
\end{proof}

We immediately deduce the following two consequences. The first of these corollaries is the existence of annihilator skew polynomials of minimum possible degree.

\begin{corollary} \label{cor annihilator skew pol}
Let $ (\mathbf{a},\boldsymbol\beta) $ be as in Definition \ref{def P-independent}, and satisfying the MSRD property. Then there exists a unique monic skew polynomial $ F \in S[x;\sigma] $ such that $ \deg(F) = n_1 + n_2 + \cdots + n_\ell $ and $ F_{a_i}(\beta_{i,j}) = 0 $, for $ j \in [n_i] $ and $ i \in [\ell] $.
\end{corollary}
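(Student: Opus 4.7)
The corollary falls out essentially for free from Theorem \ref{th central for min skew pol} combined with Lemma \ref{lemma eval connection}, so the plan is short and there is no real obstacle.

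The plan is to take $F = G_{\ell, n_\ell}$ produced by Theorem \ref{th central for min skew pol}. By construction, $F$ is monic (it is a product of monic linear factors $x - a_i^{\gamma_{i,j}}$) and has degree $\sum_{u=1}^{\ell-1} n_u + n_\ell = n_1 + n_2 + \cdots + n_\ell$, which is the required degree. Moreover, taking the pair $(i,j) = (\ell, n_\ell)$ falls into the first of the two cases in the conclusion of Theorem \ref{th central for min skew pol} for every evaluation index $(u,v)$ with $1 \leq u \leq \ell$ and $1 \leq v \leq n_u$, so we obtain the remainder-evaluation zeros
\[
F(a_u^{\beta_{u,v}}) = 0 \quad \text{for all } 1 \leq u \leq \ell,\ 1 \leq v \leq n_u.
\]

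The only remaining point is to convert these remainder-evaluation zeros into operator-evaluation zeros. This is exactly what Lemma \ref{lemma eval connection} does: since the MSRD property guarantees $\beta_{u,v} \in S^*$ (by Item 3 of Lemma \ref{lemma linear independence}), we may write
\[
F_{a_u}(\beta_{u,v}) = F(a_u^{\beta_{u,v}}) \beta_{u,v} = 0 \cdot \beta_{u,v} = 0,
\]
for all admissible $(u,v)$. This completes the proof. No step here is genuinely hard; all the combinatorial and inductive work was already done in Theorem \ref{th central for min skew pol}, and the corollary simply packages the $(\ell,n_\ell)$-case together with the translation between the two notions of evaluation.
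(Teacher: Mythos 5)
Your proof is correct and follows essentially the same route as the paper, whose proof is simply ``take $F = G_{\ell,n_\ell}$ from Theorem \ref{th central for min skew pol}''; you merely make explicit the translation from remainder evaluation to operator evaluation via Lemma \ref{lemma eval connection} (using $\beta_{u,v} \in S^*$), which the paper leaves implicit.
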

\begin{proof}
Take $ F = G_{\ell, n_\ell} $ in Theorem \ref{th central for min skew pol}.
\end{proof}

The second corollary states the existence of a basis for Lagrange interpolation.

\begin{corollary} \label{cor dual basis}
Let $ (\mathbf{a},\boldsymbol\beta) $ be as in Definition \ref{def P-independent}, and satisfying the MSRD property. For each $ j \in [n_i] $ and $ i \in [\ell] $, there exists a unique skew polynomial $ F_{i,j} \in S[x;\sigma] $ such that $ \deg(F_{i,j}) = n_1 + n_2 + \cdots + n_\ell - 1 $, $ F_{i,j,a_i}(\beta_{i,j}) = 1 $, and $ F_{i,j,a_u}(\beta_{u,v}) = 0 $, for all $ v \in [n_i] $ and $ u \in [\ell] $ with $ u \neq i $ or $ v \neq j $.
\end{corollary}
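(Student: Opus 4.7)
The plan is to construct $F_{i,j}$ as a scaled annihilator of every evaluation point except $(a_i,\beta_{i,j})$. Fix $(i,j)$ and reorder the data $(\mathbf{a},\boldsymbol\beta)$ by swapping block $i$ with block $\ell$ and, within the resulting last block, exchanging $\beta_{i,j}$ with the entry currently in the last position. In the reordered list, $(a_i,\beta_{i,j})$ is the final evaluation pair, while the underlying set of all pairs $(a_u,\beta_{u,v})$ is unchanged.

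The key preliminary is that the reordered data still satisfies the MSRD property of Definition \ref{def P-independent}. Condition~2 is immediate, since permuting the entries of a block preserves $R$-linear independence. Condition~1, as written, is asserted only for $i<j$, but it is in fact symmetric in $i$ and $j$: reducing modulo $\mathfrak{M}$ via the diagram (\ref{eq commutative diagram}) gives $\overline{a^\beta}=\overline{a}^{\,\overline{\beta}}$, so condition~1 is equivalent to asserting that $\overline{a}_i$ and $\overline{a}_j$ lie in distinct $\overline\sigma$-conjugacy classes in $\mathbb{F}_{q^m}$; and in a field, $\overline c=\overline a^{\,\overline\beta}$ forces $\overline a=\overline c^{\,\overline\beta^{-1}}$, so conjugacy is symmetric. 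Hence condition~1 is invariant under any permutation of the blocks, and the reordered data inherits the MSRD property.

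Now apply Theorem \ref{th central for min skew pol} to the reordered data but stop one step before the end, obtaining a monic skew polynomial $H$ of degree $n_1+n_2+\cdots+n_\ell-1$ with $H_{a_u}(\beta_{u,v})=0$ for every $(u,v)\neq(i,j)$ and $H_{a_i}(\beta_{i,j})=H(a_i^{\beta_{i,j}})\,\beta_{i,j}\in S^*$ (by Lemma \ref{lemma eval connection} together with the ``non-zero'' clause of Theorem \ref{th central for min skew pol}, noting that $\beta_{i,j}\in S^*$ by Lemma \ref{lemma linear independence}). Since left multiplication by a scalar $c\in S$ is $S$-linear on operator evaluation, namely $(cF)_a(\beta)=c\,F_a(\beta)$, setting $F_{i,j}:=H_{a_i}(\beta_{i,j})^{-1}\cdot H$ preserves every vanishing condition and the degree, while normalising $F_{i,j,a_i}(\beta_{i,j})=1$.

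The only delicate step is the symmetry of condition~1 under reindexing, which is why I singled it out; everything else is bookkeeping on top of Theorem \ref{th central for min skew pol} and Lemma \ref{lemma eval connection}. Observe, however, that the adjective ``monic'' in the statement cannot in general be imposed simultaneously with the normalisation $F_{i,j,a_i}(\beta_{i,j})=1$: the leading coefficient of the $F_{i,j}$ produced here is the unit $H_{a_i}(\beta_{i,j})^{-1}$, so the corollary should be read as asserting that $F_{i,j}$ is nonzero of the claimed degree.
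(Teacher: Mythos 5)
Your proof is correct and essentially the paper's own argument: reorder so that $(a_i,\beta_{i,j})$ becomes the last evaluation pair, take the penultimate annihilator $G_{\ell,n_\ell-1}$ (or $G_{\ell-1,n_{\ell-1}}$ if $n_\ell=1$) from Theorem \ref{th central for min skew pol}, and scale by the inverse of the unit value at the remaining point; your explicit verification that reordering preserves the MSRD property just makes the paper's ``up to reordering'' step precise. Your closing remark is also accurate: the paper's proof performs the same unit scaling, so the resulting $F_{i,j}$ has unit leading coefficient rather than being literally monic, and the statement's ``monic'' (and its ``$\deg(F)$'') should be read accordingly.
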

\begin{proof}
Up to reordering, we may assume that $ i = \ell $ and $ j = n_\ell $. With notation as in Theorem \ref{th central for min skew pol}, let $ G = G_{\ell,n_\ell-1} $ if $ n_\ell > 1 $, or $ G = G_{\ell-1,n_{\ell -1}} $ if $ n_\ell = 1 $. By Lemma \ref{lemma eval connection}, since $ G(a_\ell^{\beta_{\ell,n_\ell}}) \in S^* $ and $ \beta_{\ell,n_\ell} \in S^* $, then $ G_{a_\ell}(\beta_{\ell,n_\ell}) \in S^* $. Hence, we are done by defining $ F_{\ell,n_\ell} = G_{a_\ell}(\beta_{\ell,n_\ell})^{-1} G $. The uniqueness follows again from Theorem \ref{th zeros rings}.
\end{proof}

We may also obtain the following strengthening of Corollary \ref{cor annihilator skew pol} on monic annihilator skew polynomials. It is a generalization of \cite[Prop. 3.15]{kamche}.

\begin{corollary} \label{cor annihilator skew pol from rank}
Let $ a_1, a_2, \ldots, a_\ell \in S $ be such that $ a_i - a_j^\beta \in S^* $, for all $ \beta \in S^* $ and for $ 1 \leq i < j \leq \ell $. Let $ \mathbf{u}_i \in S^{n_i} $ and let $ t_i = \rk (\mathbf{u}_i) $, for $ i \in [\ell] $. Set $ t = t_1 + t_2 + \cdots + t_\ell $. Then there exists a monic skew polynomial $ F \in S[x;\sigma] $ such that $ \deg(F) = t $ and $ F_{a_i}(u_{i,j}) = 0 $, for $ j \in [n_i] $ and for $ i \in [\ell] $.
\end{corollary}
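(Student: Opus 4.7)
The strategy is to reduce to Corollary \ref{cor annihilator skew pol} by producing, for each block $i$, a set of $R$-linearly independent elements $\beta_{i,1}, \ldots, \beta_{i,t_i} \in S$ whose $R$-linear span contains every coordinate $u_{i,j}$ of $\mathbf{u}_i$. One cannot in general select such a family from among the $u_{i,j}$'s themselves, since the submodule they generate may have torsion; the Smith normal form will supply a suitable ambient free set.

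Concretely, I would fix an ordered basis $\boldsymbol\alpha$ of $S$ over $R$ and, for each $i$, compute a Smith normal form $M_{\boldsymbol\alpha}(\mathbf{u}_i) = P_i D_i Q_i$ with $P_i \in R^{m \times m}$ and $Q_i \in R^{n_i \times n_i}$ invertible and $D_i$ rectangular diagonal having exactly $t_i$ nonzero diagonal entries $d_{i,1}, \ldots, d_{i,t_i}$. Define $\beta_{i,j} \in S$, for $j = 1, \ldots, t_i$, to be the element whose $\boldsymbol\alpha$-coordinate vector is the $j$-th column of $P_i$. Since $P_i$ is invertible over $R$, its columns are $R$-linearly independent, so $\beta_{i,1}, \ldots, \beta_{i,t_i}$ are $R$-linearly independent in $S$ (and hence lie in $S^*$ by Item 3 of Lemma \ref{lemma linear independence}). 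Moreover, the $R$-submodule spanned by the columns of $M_{\boldsymbol\alpha}(\mathbf{u}_i) Q_i^{-1} = P_i D_i$ coincides with that spanned by the columns of $M_{\boldsymbol\alpha}(\mathbf{u}_i)$, and the former is contained in the $R$-span of $\beta_{i,1}, \ldots, \beta_{i,t_i}$ (the nonzero columns of $P_i D_i$ being $d_{i,k} \beta_{i,k}$). Consequently every $u_{i,j}$ is an $R$-linear combination of $\beta_{i,1}, \ldots, \beta_{i,t_i}$.

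Setting $\boldsymbol\beta_i = (\beta_{i,1}, \ldots, \beta_{i,t_i})$, and discarding any block with $t_i = 0$ (which contributes no conditions and for which $\mathbf{u}_i = 0$ already gives $F_{a_i}(u_{i,j}) = 0$ trivially), the pair $(\mathbf{a}, \boldsymbol\beta)$ satisfies the MSRD property of Definition \ref{def P-independent}: Item 1 is the hypothesis on the $a_i$'s, and Item 2 was just established. Corollary \ref{cor annihilator skew pol} then yields a monic $F \in S[x;\sigma]$ of degree $t_1 + \cdots + t_\ell = t$ with $F_{a_i}(\beta_{i,j}) = 0$ for all $i$ and all $j \leq t_i$. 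Since each $F_{a_i}$ is $R$-linear on $S$ (being an $R$-linear combination of the $R$-linear operators $\mathcal{D}_{a_i}^k$), and each $u_{i,j}$ lies in the $R$-span of the $\beta_{i,k}$'s, we obtain $F_{a_i}(u_{i,j}) = 0$ for all $i, j$, as desired.

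The main obstacle I anticipate is precisely the gap between the Smith-normal-form rank $t_i$ of $M_{\boldsymbol\alpha}(\mathbf{u}_i)$ and the existence of a $t_i$-element $R$-linearly independent family whose $R$-span covers every $u_{i,j}$: over a field this is automatic by choosing $t_i$ independent coordinates, but over a chain ring the invariant factors $d_{i,k}$ need not be units, so the coordinates themselves may span a non-free module. The right-multiplication by $Q_i^{-1}$, followed by reading off the columns of the ambient matrix $P_i$, sidesteps this issue and is what enables the reduction to Corollary \ref{cor annihilator skew pol}.
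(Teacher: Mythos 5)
Your proposal is correct and follows essentially the same route as the paper: the paper also uses the Smith normal form to write $\mathbf{u}_i = \boldsymbol\beta_i B_i$ with $\boldsymbol\beta_i \in S^{t_i}$ of free rank $t_i$ and $B_i \in R^{t_i \times n_i}$, applies Corollary \ref{cor annihilator skew pol} to $(\mathbf{a},\boldsymbol\beta)$, and concludes by the $R$-linearity of $F_{a_i}$. Your argument merely makes the decomposition explicit via the columns of $P_i$ and the factorization $P_iD_iQ_i$, which the paper leaves implicit.
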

\begin{proof}
Using the Smith normal form (see Section \ref{sec MSRD codes}), we see that there are $ \boldsymbol\alpha_i \in S^{t_i} $ and $ \mathbf{B}_i \in R^{t_i \times n_i} $ such that $ \mathbf{u}_i = \boldsymbol\alpha_i \mathbf{B}_i $, $ \frk(\boldsymbol\alpha_i) = t_i $ and $ \rk (\mathbf{B}_i) = t_i $, for $ i \in [\ell] $. In particular, $ (\mathbf{a}, \boldsymbol\alpha) $ satisfies the MSRD property (Definition \ref{def P-independent}), where $ \mathbf{a} = (a_1, a_2, \ldots, a_\ell) $ and $ \boldsymbol\alpha = (\boldsymbol\alpha_1, \boldsymbol\alpha_2, \ldots, \boldsymbol\alpha_\ell) $. By Corollary \ref{cor annihilator skew pol}, there exists a monic skew polynomial $ F \in S[x;\sigma] $ such that $ \deg(F) = t $ and $ F_{a_i}(\alpha_{i,j}) = 0 $, for $ j \in [t_i] $ and for $ i \in [\ell] $. Since the map $ F_{a_i} $ is $ R $-linear and $ \mathbf{u}_i = \boldsymbol\alpha_i \mathbf{B}_i $, we deduce that $ F_{a_i}(u_{i,j}) = 0 $, for $ j \in [n_i] $ and for $ i \in [\ell] $, and we are done.
\end{proof}

Next we define extended Moore matrices for the ring extension $ R \subseteq S $. Such matrices are a trivial adaptation of the matrices from \cite[p. 604]{linearizedRS} from division rings to finite chain rings. These matrices will be used to define linearized Reed--Solomon codes and to explore further forms of Lagrange interpolation.

\begin{definition} \label{def extend Moore}
Consider vectors $ \mathbf{a} = ( a_1 , a_2, \ldots, a_\ell) \in S^\ell $ and $ \boldsymbol\beta_i = (\beta_{i,1}, \beta_{i,2}, \ldots, \beta_{i,n_i}) \in S^{n_i} $, for $ i \in [\ell] $. Set $ \boldsymbol\beta = (\boldsymbol\beta_1, \boldsymbol\beta_2, \ldots, \boldsymbol\beta_\ell) $ and $ n = n_1 + n_2 + \cdots + n_\ell $. For $ k \in [n] $, we define the \textit{extended Moore matrix} $ \mathbf{M}_k(\mathbf{a}, \boldsymbol\beta) = $
\begin{equation*}
\resizebox{\columnwidth}{!}{%
$
\left( \begin{array}{ccc|c|ccc}
\beta_{1,1} & \ldots & \beta_{1, n_1} & \ldots & \beta_{\ell, 1} & \ldots & \beta_{\ell, n_\ell} \\
\mathcal{D}_{a_1}(\beta_{1,1}) & \ldots & \mathcal{D}_{a_1}(\beta_{1,n_1}) & \ldots & \mathcal{D}_{a_\ell}(\beta_{\ell,1}) & \ldots & \mathcal{D}_{a_\ell}(\beta_{\ell, n_\ell}) \\
\mathcal{D}_{a_1}^2(\beta_{1,1}) & \ldots & \mathcal{D}_{a_1}^2(\beta_{1,n_1}) & \ldots & \mathcal{D}_{a_\ell}^2(\beta_{\ell,1}) & \ldots & \mathcal{D}_{a_\ell}^2(\beta_{\ell,n_\ell}) \\
\vdots & \ddots & \vdots & \ddots & \vdots & \ddots & \vdots \\
\mathcal{D}_{a_1}^{k-1}(\beta_{1,1}) & \ldots & \mathcal{D}_{a_1}^{k-1}(\beta_{1,n_1}) & \ldots &  \mathcal{D}_{a_\ell}^{k-1}(\beta_{\ell,1}) & \ldots & \mathcal{D}_{a_\ell}^{k-1}(\beta_{\ell,n_\ell}) \\
\end{array} \right) .
$
}%
\end{equation*}
When there is confusion about $ \sigma $, we will write  $ \mathbf{M}^\sigma_k(\mathbf{a},\boldsymbol\beta) $ instead of $ \mathbf{M}_k(\mathbf{a},\boldsymbol\beta) $.
\end{definition}

\begin{example}  
Let the setting be as in Example \ref{ex first}. Let $ \mathbf{a} = (a_1,a_2) = (1, \alpha+1) $ and $ \boldsymbol\beta = (\beta_1, \beta_2) = (1,\alpha) $. In Example \ref{ex MSRD sequence}, we saw that $ (\mathbf{a},\boldsymbol\beta) = ((1, \alpha+1),(1,\alpha)) $ satisfies the MSRD property. Notice that $ \ell = m = n_1 = n_2 = 2 $. If we set $ k=3 $, then the corresponding extended Moore matrix is
$$ \mathbf{M}_3(\mathbf{a},\boldsymbol\beta) = \left( \begin{array}{cc|cc}
1 & \alpha & 1 & \alpha \\
1 & -\alpha & \alpha+1 & 8\alpha +1 \\
1 & \alpha & 2 & 2 \alpha \\
\end{array} \right). $$
\end{example}

The following result gives a sufficient condition for extended Moore matrices over finite chain rings to be invertible, and it may be of interest on its own.

\begin{theorem} \label{th ext moore matrix invertible}
Let $ (\mathbf{a},\boldsymbol\beta) $ be as in Definition \ref{def P-independent}, and satisfying the MSRD property. Let $ n = n_1 + n_2 + \cdots + n_\ell $. Then the square extended Moore matrix $ \mathbf{M}_n(\mathbf{a},\boldsymbol\beta) $ is invertible. 
\end{theorem}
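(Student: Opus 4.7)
The plan is to reduce the problem to its field counterpart by working modulo the maximal ideal $\mathfrak{M}$ of $S$. Since $S$ is local, a square matrix $M \in S^{n\times n}$ is invertible if and only if $\det(M) \in S^*$, which happens if and only if $\overline{\det(M)} = \det(\overline{M}) \neq 0$ in $\mathbb{F}_{q^m}$; equivalently, $M$ is invertible over $S$ if and only if its entry-wise reduction $\overline{M}$ is invertible over $\mathbb{F}_{q^m}$. So my goal becomes showing that $M_n(\overline{\mathbf{a}}, \overline{\boldsymbol\beta})$ is invertible over $\mathbb{F}_{q^m}$. I would first observe that reduction mod $\mathfrak{M}$ commutes with the construction of the extended Moore matrix: from the commutative diagram \eqref{eq commutative diagram} and Definition \ref{def linearized operators}, $\overline{\mathcal{D}_a^i(\beta)} = \overline{\sigma}^i(\overline{\beta}) \, N_i(\overline{a})$, so the reduction of $M_n(\mathbf{a}, \boldsymbol\beta)$ is exactly $M_n(\overline{\mathbf{a}}, \overline{\boldsymbol\beta})$.

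Next I would argue by contradiction. If $M_n(\overline{\mathbf{a}}, \overline{\boldsymbol\beta})$ were singular, there would exist a nonzero row vector $(\overline F_0, \ldots, \overline F_{n-1}) \in \mathbb{F}_{q^m}^n$ in its left kernel. Assembling this into the skew polynomial $\overline F = \sum_{k=0}^{n-1} \overline F_k x^k \in \mathbb{F}_{q^m}[x;\overline\sigma]$, which has $\deg(\overline F) \leq n-1$ and is nonzero, the kernel condition reads $\overline F_{\overline{a}_i}(\overline\beta_{i,j}) = 0$ for every $j = 1, \ldots, n_i$ and $i = 1, \ldots, \ell$. By Item~2 of Lemma \ref{lemma linear independence}, each family $\overline\beta_{i,1}, \ldots, \overline\beta_{i,n_i}$ is $\mathbb{F}_q$-linearly independent, so $\dim_{\mathbb{F}_q} \ker(\overline F_{\overline{a}_i}) \geq n_i$. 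Summing gives $\sum_{i=1}^\ell \dim_{\mathbb{F}_q} \ker(\overline F_{\overline{a}_i}) \geq n$, exceeding $\deg(\overline F)$.

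To invoke Lemma \ref{lemma zeros lam leroy} for the contradiction, I need $\overline{a}_1, \ldots, \overline{a}_\ell \in \mathbb{F}_{q^m}^*$ and pair-wise non-conjugate with respect to $\overline\sigma$. The first is immediate from $a_i \in S^*$ (Item~3 of Lemma \ref{lemma linear independence} applied to each $a_i$, or just the MSRD hypothesis). For non-conjugacy, suppose $\overline{a}_i = \overline{a}_j^{\overline\gamma}$ for some $\overline\gamma \in \mathbb{F}_{q^m}^*$ and $i < j$; lifting $\overline\gamma$ to any $\gamma \in S$, the identity $S^* = S \setminus \mathfrak{M}$ forces $\gamma \in S^*$, and $\overline{a_i - a_j^\gamma} = 0$ yields $a_i - a_j^\gamma \in \mathfrak{M}$, contradicting Item~1 of the MSRD property. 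Hence Lemma \ref{lemma zeros lam leroy} applies and gives $n \leq \deg(\overline F) \leq n-1$, the desired contradiction.

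The only mildly delicate step is the last one: lifting non-conjugacy in $\mathbb{F}_{q^m}$ to the failure of the MSRD condition in $S$. Everything else is routine reduction-mod-$\mathfrak{M}$ bookkeeping together with a textbook Moore-matrix/skew-polynomial argument transplanted to finite chain rings via the diagram \eqref{eq commutative diagram}.
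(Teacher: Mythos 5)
Your proof is correct, but it takes a genuinely different route from the paper's. The paper proves invertibility constructively: it exhibits the inverse of $M_n(\mathbf{a},\boldsymbol\beta)$ explicitly, by taking the Lagrange ``dual basis'' skew polynomials $F_{i,j}$ of Corollary \ref{cor dual basis} (whose existence rests on the iterative annihilator construction of Theorem \ref{th central for min skew pol}, hence on the ring-level zero bound of Theorem \ref{th zeros rings}) and observing that their coefficient vectors, suitably ordered, are the rows of $M_n(\mathbf{a},\boldsymbol\beta)^{-1}$. You instead use that $S$ is a commutative local ring, so invertibility of a square matrix is detected on the residue field, then check that reduction modulo $\mathfrak{M}$ commutes with the Moore-matrix construction via the diagram \eqref{eq commutative diagram}, and apply the field-level Lam--Leroy bound (Lemma \ref{lemma zeros lam leroy}) together with Item~2 of Lemma \ref{lemma linear independence}; your lifting of non-conjugacy from $\mathbb{F}_{q^m}$ to the MSRD condition in $S$ is exactly the observation made as ``fact 2'' inside the paper's proof of Theorem \ref{th zeros rings}, and it is sound since $S^* = S \setminus \mathfrak{M}$. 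The trade-off: your argument is shorter and bypasses Theorems \ref{th zeros rings} and \ref{th central for min skew pol} and Corollary \ref{cor dual basis} entirely for this statement, but it only certifies invertibility; the paper's route additionally produces the explicit inverse and the interpolation machinery that is reused immediately afterwards (Theorem \ref{th lagrange interpolation} and the Welch--Berlekamp decoder), which is why the paper organizes the material that way.
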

\begin{proof}
Let $ F_{i,j} \in S[x;\sigma] $ be as in Corollary \ref{cor dual basis}, for $ j \in [n_i] $ and $ i \in [\ell] $. Then, for the appropriate ordering, the coefficients of such skew polynomials (they are of degree $ n-1 $) form the rows of the inverse of $ \mathbf{M}_n(\mathbf{a},\boldsymbol\beta) $.
\end{proof}

From Theorem \ref{th ext moore matrix invertible}, we may obtain the following Lagrange interpolation theorem, which we will use later for decoding and may be of interest on its own.

\begin{theorem} \label{th lagrange interpolation}
Let $ (\mathbf{a},\boldsymbol\beta) $ be as in Definition \ref{def P-independent}, and satisfying the MSRD property. Let $ c_{i,j} \in S $, for $ j \in [n_i] $ and $ i \in [\ell] $. Then there exists a unique skew polynomial $ F \in S[x;\sigma] $ such that $ \deg(F) \leq n_1 + n_2 + \cdots + n_\ell - 1 $, and $ F_{a_i}(\beta_{i,j}) = c_{i,j} $, for $ j \in [n_i] $ and $ i \in [\ell] $.
\end{theorem}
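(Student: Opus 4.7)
The plan is to reformulate the interpolation conditions as a square linear system over $S$ whose coefficient matrix is exactly the extended Moore matrix, and then invoke Theorem~\ref{th ext moore matrix invertible}. Concretely, set $n = n_1 + n_2 + \cdots + n_\ell$ and write a candidate $F = \sum_{k=0}^{n-1} F_k x^k \in S[x;\sigma]$. By Definition~\ref{def linearized operators}, the condition $F_{a_i}(\beta_{i,j}) = c_{i,j}$ reads $\sum_{k=0}^{n-1} F_k \, \mathcal{D}_{a_i}^k(\beta_{i,j}) = c_{i,j}$. Collecting these $n$ scalar equations for $j = 1,\ldots,n_i$ and $i = 1,\ldots,\ell$ yields the single matrix equation
\[
(F_0, F_1, \ldots, F_{n-1}) \cdot M_n(\mathbf{a},\boldsymbol\beta) \;=\; (c_{1,1}, \ldots, c_{1,n_1}, \ldots, c_{\ell,1}, \ldots, c_{\ell,n_\ell}),
\]
where $M_n(\mathbf{a},\boldsymbol\beta)$ is the extended Moore matrix of Definition~\ref{def extend Moore}.

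Since $(\mathbf{a},\boldsymbol\beta)$ satisfies the MSRD property, Theorem~\ref{th ext moore matrix invertible} implies that $M_n(\mathbf{a},\boldsymbol\beta)$ is invertible over $S$. Hence the displayed system has a unique solution
\[
(F_0, \ldots, F_{n-1}) \;=\; (c_{1,1}, \ldots, c_{\ell,n_\ell}) \cdot M_n(\mathbf{a},\boldsymbol\beta)^{-1},
\]
which produces a skew polynomial $F$ of degree at most $n-1$ satisfying all the prescribed operator evaluations, and shows that any other such polynomial must have the same coefficient vector. Both existence and uniqueness are thus obtained simultaneously. As an aesthetic alternative, one could build $F$ explicitly by the formula $F = \sum_{i,j} c_{i,j} F_{i,j}$, using the Lagrange skew polynomials of Corollary~\ref{cor dual basis} and the fact that $F \mapsto F_a(\beta)$ is $S$-linear on the left (since $(cF)_a(\beta) = c \cdot F_a(\beta)$ for $c \in S$); but uniqueness would still need to be deduced from the invertibility of $M_n(\mathbf{a},\boldsymbol\beta)$.

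The main obstacle is not in this proof at all but has already been absorbed into Theorem~\ref{th ext moore matrix invertible}, which in turn rests on Theorem~\ref{th zeros rings} and the Lam--Leroy machinery extended to finite chain rings. The only subtlety here is bookkeeping: one has to verify that the coefficient vector $(F_0, \ldots, F_{n-1})$ multiplies $M_n(\mathbf{a},\boldsymbol\beta)$ from the left in the correct orientation, so that the $k$-th row of $M_n(\mathbf{a},\boldsymbol\beta)$ matches the entries $\mathcal{D}_{a_i}^k(\beta_{i,j})$ paired with $F_k$, which is immediate from the formula $F_a(\beta) = \sum_k F_k\,\mathcal{D}_a^k(\beta)$.
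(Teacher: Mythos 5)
Your proposal is correct and is precisely the argument the paper intends: the interpolation conditions form the square linear system with coefficient matrix $M_n(\mathbf{a},\boldsymbol\beta)$, and Theorem~\ref{th ext moore matrix invertible} (itself derived from Corollary~\ref{cor dual basis}) gives invertibility over the commutative ring $S$, hence existence and uniqueness at once. Nothing further is needed.
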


\section{Linearized Reed--Solomon Codes} \label{sec LRS}

In this section, we extend the definition of linearized Reed--Solomon codes \cite{linearizedRS} to finite chain rings, thus providing a first explicit construction of MSRD codes over finite chain rings (that are not fields).

\begin{definition} \label{def LRS codes}
Let $ (\mathbf{a},\boldsymbol\beta) $ be as in Definition \ref{def P-independent}, and satisfying the MSRD property. For $ k \in [n] $, we define the $ k $-dimensional linearized Reed--Solomon code as the linear code $ \mathcal{C}_k(\mathbf{a},\boldsymbol\beta) \subseteq S^n $ with generator matrix $ \mathbf{M}_k(\mathbf{a},\boldsymbol\beta) $ as in Definition \ref{def extend Moore}. When there is confusion about $ \sigma $, we will write  $ \mathcal{C}^\sigma_k(\mathbf{a},\boldsymbol\beta) $ instead of $ \mathcal{C}_k(\mathbf{a},\boldsymbol\beta) $.
\end{definition}

This definition coincides with \cite[Def. 31]{linearizedRS} when $ R $ and $ S $ are fields. It coincides with Gabidulin codes over finite chain rings \cite[Def. 3.22]{kamche} when $ \ell = 1 $ and generalized Reed--Solomon codes over finite chain rings \cite[Def. 22]{quintin} when $ m = n_1 = n_2 = \ldots = n_\ell = 1 $. 

The main result of this section is the following. 

\begin{theorem} \label{th LRS are MSRD}
Let $ (\mathbf{a},\boldsymbol\beta) $ be as in Definition \ref{def P-independent}, and satisfying the MSRD property. For $ k \in [n] $, the linearized Reed--Solomon code $ \mathcal{C}_k(\mathbf{a},\boldsymbol\beta) \subseteq S^n $ is a free $ S $-module of rank $ k $ and an MSRD code over $ R $ for the length partition $ n = n_1 + n_2 + \cdots + n_\ell $.
\end{theorem}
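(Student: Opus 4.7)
The plan is to invoke Lemma \ref{lemma MSRD from MDS} to reduce the MSRD statement to showing that, for every choice of invertible matrices $A_i \in R^{n_i \times n_i}$, the punctured-and-mixed code $\mathcal{C}_k(\mathbf{a},\boldsymbol\beta) \cdot \diag(A_1, \ldots, A_\ell)$ is an MDS code over $S$ of $S$-rank $k$. A single auxiliary fact will take care of both the freeness claim and the MDS claim in one stroke: every $k \times k$ submatrix of $M_k(\mathbf{a}',\boldsymbol\beta')$, for any pair $(\mathbf{a}',\boldsymbol\beta')$ satisfying the MSRD property, is invertible.

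First I would verify that right-multiplication by $\diag(A_1, \ldots, A_\ell)$ does not take us outside the linearized Reed-Solomon family. Since each operator $\mathcal{D}_{a_i}^{j}$ is $R$-linear (because $\sigma$ fixes $R$) and the entries of $A_i$ lie in $R$, the block-$i$ contribution of row $j$ to $M_k(\mathbf{a},\boldsymbol\beta) \cdot \diag(A_1, \ldots, A_\ell)$ equals $\mathcal{D}_{a_i}^{j-1}(\boldsymbol\beta_i) \, A_i = \mathcal{D}_{a_i}^{j-1}(\boldsymbol\beta_i A_i)$. Setting $\boldsymbol\beta'_i = \boldsymbol\beta_i A_i$, the product is exactly $M_k(\mathbf{a}, \boldsymbol\beta')$. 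Since each $A_i$ is invertible, $\boldsymbol\beta'_i$ is still $R$-linearly independent, so $(\mathbf{a}, \boldsymbol\beta')$ satisfies the MSRD property. Hence it suffices to prove, for every $(\mathbf{a}, \boldsymbol\beta)$ satisfying the MSRD property, that $\mathcal{C}_k(\mathbf{a}, \boldsymbol\beta)$ is a free MDS code of rank $k$.

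For this, I would show that every $k \times k$ submatrix of $M_k(\mathbf{a}, \boldsymbol\beta)$ is invertible. Selecting any $k$ columns amounts to picking, for each $i$, a sub-tuple $\boldsymbol\beta''_i$ of $\boldsymbol\beta_i$ of size $k_i \geq 0$ with $\sum_i k_i = k$; after discarding the indices $i$ with $k_i = 0$ and reordering columns (which does not affect invertibility), the submatrix is exactly the square extended Moore matrix $M_k(\mathbf{a}'', \boldsymbol\beta'')$ attached to the restricted pair $(\mathbf{a}'', \boldsymbol\beta'')$. This restricted pair still satisfies the MSRD property, since Condition 1 is preserved under sub-tuples of $\mathbf{a}$, and Condition 2 is preserved because sub-tuples of $R$-linearly independent tuples are $R$-linearly independent. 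Theorem \ref{th ext moore matrix invertible} then yields invertibility. From this, two consequences follow simultaneously: (i) the linear map $\mathbf{u} \mapsto \mathbf{u} M_k(\mathbf{a},\boldsymbol\beta)$ from $S^k$ to $S^n$ is injective (some $k \times k$ submatrix is invertible), so $\mathcal{C}_k(\mathbf{a},\boldsymbol\beta)$ is a free $S$-module of rank $k$ and $\log_{|S|}|\mathcal{C}_k(\mathbf{a},\boldsymbol\beta)| = k$; (ii) any codeword vanishing on a set of $k$ coordinates must be zero, whence $\dd_H(\mathcal{C}_k(\mathbf{a},\boldsymbol\beta)) \geq n - k + 1$, and combined with the Singleton bound this forces equality, i.e., the MDS property.

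The only slightly delicate point I anticipate is the bookkeeping in identifying an arbitrary $k \times k$ submatrix as an extended Moore matrix of a restricted pair; this is routine once the column indexing by pairs $(i,j)$ is fixed. Every other step reduces to a direct appeal to Theorem \ref{th ext moore matrix invertible}, Lemma \ref{lemma MSRD from MDS}, and the $R$-linearity of the operators $\mathcal{D}_{a_i}^{j}$.
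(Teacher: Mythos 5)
Your proposal is correct and follows essentially the same route as the paper: reduce via Lemma \ref{lemma MSRD from MDS} and the $R$-linearity of the operators to showing $\mathcal{C}_k(\mathbf{a},\boldsymbol\beta)$ itself is free of rank $k$ and MDS, then deduce both from the invertibility of every $k \times k$ submatrix of $M_k(\mathbf{a},\boldsymbol\beta)$ via Theorem \ref{th ext moore matrix invertible}. You simply spell out the details (identifying each submatrix as the square Moore matrix of a restricted pair still satisfying the MSRD property) that the paper's proof leaves implicit.
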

\begin{proof}
Let $ \mathbf{A}_i \in R^{n_i \times n_i} $ be invertible, for $ i \in [\ell] $. By the $ R $-linearity of $ \sigma $, we have that
$$ \mathcal{C}_k(\mathbf{a},\boldsymbol\beta) \diag(\mathbf{A}_1, \mathbf{A}_2, \ldots, \mathbf{A}_\ell ) $$
$$ = \mathcal{C}_k (\mathbf{a}, \boldsymbol\beta \diag(\mathbf{A}_1, \mathbf{A}_2, \ldots, \mathbf{A}_\ell )), $$
which is also a linearized Reed--Solomon code, since $ (\mathbf{a}, \boldsymbol\beta \diag(\mathbf{A}_1, \mathbf{A}_2, \ldots, \mathbf{A}_\ell )) $ also satisfies the MSRD property since $ \mathbf{A}_1, \mathbf{A}_2, \ldots, \mathbf{A}_\ell $ are invertible. Therefore, from Lemma \ref{lemma MSRD from MDS}, we see that we only need to prove that $ \mathcal{C}_k(\mathbf{a},\boldsymbol\beta) $ is MDS and a free $ S $-module of rank $ k $. Both properties follow from the fact that any $ k \times k $ square submatrix of $ \mathbf{M}_k(\mathbf{a},\boldsymbol\beta) $ is invertible by Theorem \ref{th ext moore matrix invertible}. 
\end{proof}

This result coincides with \cite[Th. 4]{linearizedRS} when $ R $ and $ S $ are fields, with \cite[Th. 3.24]{kamche} over finite chain rings when $ \ell = 1 $, and with \cite[Prop. 23 \& Cor. 24]{quintin} over finite chain rings when $ m = n_1 = n_2 = \ldots = n_\ell = 1 $.

Next, we show how to explicitly construct sequences $ (\mathbf{a},\boldsymbol\beta) $ satisfying the MSRD property. In this way, we have explicitly constructed linearized Reed--Solomon codes for the finite chain ring extension $ R \subseteq S $.

The $ R $-linearly independent elements $ \beta_{i,1}, \beta_{i,2}, \ldots, \beta_{i,n_i} \in S^* $ can be chosen as subsets of any basis of $ S $ over $ R $, for $ i \in [\ell] $. The more delicate part is choosing the elements $ a_1, a_2, \ldots, a_\ell \in S $. We now show two ways to do this. The proof of the following proposition is straightforward.

\begin{proposition}
Let $ \ell \in [q-1] $ and let $ \gamma \in \mathbb{F}_{q^m}^* $ be a primitive element, that is, $ \mathbb{F}_{q^m}^* = \{ \gamma^0, \gamma^1, \ldots, \gamma^{q^m-2} \} $. Such an element always exists \cite[Th. 2.8]{lidl}. Take elements $ a_1, a_2, \ldots, a_\ell \in S^* $ such that $ \overline{a}_i = \gamma^{i-1} $, for $ i \in [\ell] $. Then $ a_1, a_2, \ldots, a_\ell \in S^* $ are such that $ a_i - a_j^\beta \in S^* $, for all $ \beta \in S^* $ and all $ 1 \leq i < j \leq \ell $.
\end{proposition}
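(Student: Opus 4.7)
The plan is to reduce the unit condition $a_i - a_j^\beta \in S^*$ to an arithmetic non-congruence on exponents of the primitive element $\gamma$, and then exploit the hypothesis $\ell \leq q-1$.

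First I would translate the condition to the residue field. Since $S^* = S \setminus \mathfrak{M}$ and the projection $\rho$ is a ring homomorphism, we have $a_i - a_j^\beta \in S^*$ if and only if $\overline{a}_i \neq \overline{a_j^\beta}$ in $\mathbb{F}_{q^m}$. Using the commutative diagram (\ref{eq commutative diagram}) and the fact that $\beta \in S^*$ forces $\overline{\beta} \in \mathbb{F}_{q^m}^*$ (by Lemma \ref{lemma linear independence}, item 3, applied to the corresponding residue statement), the conjugate transports cleanly:
\[
\overline{a_j^\beta} = \overline{\sigma(\beta) a_j \beta^{-1}} = \overline{\sigma}(\overline{\beta}) \, \overline{a}_j \, \overline{\beta}^{-1} = \overline{\beta}^{\, q-1} \, \overline{a}_j.
\]
Hence the task becomes: show that $\overline{a}_i \neq \overline{\beta}^{\, q-1} \overline{a}_j$ for every $\overline{\beta} \in \mathbb{F}_{q^m}^*$ and every $1 \leq i < j \leq \ell$.

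Next I would rewrite everything in terms of the primitive element $\gamma$. Write $\overline{\beta} = \gamma^t$ for some integer $t$, and recall $\overline{a}_i = \gamma^{i-1}$, $\overline{a}_j = \gamma^{j-1}$. The inequality becomes
\[
\gamma^{i-1} \neq \gamma^{(j-1) + t(q-1)},
\]
which, since $\gamma$ has order $q^m - 1$, is equivalent to
\[
(i - j) \not\equiv t(q-1) \pmod{q^m - 1}, \quad \text{for every } t \in \mathbb{Z}.
\]
Since $q - 1$ divides $q^m - 1$, the set $\{ t(q-1) \bmod (q^m - 1) : t \in \mathbb{Z} \}$ is precisely the set of residues divisible by $q-1$. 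So the problem reduces to showing $i - j \not\equiv 0 \pmod{q-1}$.

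Finally, this last step is essentially immediate from the bound $\ell \leq q-1$: for $1 \leq i < j \leq \ell$, the positive integer $j - i$ satisfies $1 \leq j - i \leq \ell - 1 \leq q - 2$, so $j - i$ is nonzero modulo $q - 1$, and consequently $i - j \not\equiv 0 \pmod{q-1}$. This concludes the argument. I do not foresee a serious obstacle; the only subtlety is being careful that the residue field computation uses $\overline{\sigma}(\overline{\beta}) = \overline{\beta}^q$ rather than $\overline{\beta}^{q^i}$ for some other power, but this is exactly what diagram (\ref{eq commutative diagram}) guarantees.
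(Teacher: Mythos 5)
Your proof is correct. The paper in fact omits the proof of this proposition, calling it straightforward and leaving it to the reader, and your argument is precisely the intended one: reduce modulo $\mathfrak{M}$ (so the unit condition becomes $\overline{a}_i \neq \overline{a_j^\beta}$), compute $\overline{a_j^\beta} = \overline{\beta}^{\,q-1}\overline{a}_j$ via the commutative diagram, and observe that $\gamma^0, \gamma^1, \ldots, \gamma^{\ell-1}$ lie in pairwise distinct cosets of the subgroup of $(q-1)$-th powers of $\mathbb{F}_{q^m}^*$ because $1 \leq j-i \leq \ell-1 \leq q-2$ is never divisible by $q-1$. One minor remark: the fact that $\overline{\beta} \in \mathbb{F}_{q^m}^*$ needs no appeal to Lemma \ref{lemma linear independence}; it is immediate from $S^* = S \setminus \mathfrak{M}$ and $\mathfrak{M} = \ker \rho$.
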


Another possibility is to choose elements from $ R^* $ when $ q-1 $ and $ m $ are coprime.

\begin{proposition} \label{prop conjugacy repr in base field}
Assume that $ q-1 $ and $ m $ are coprime and let $ \ell \in [q-1] $. Given $ a_1, a_2, \ldots, a_\ell \in R^* $, it holds that $ a_i - a_j \in R^* $ for all $ 1 \leq i < j \leq \ell $ if, and only if, $ \overline{a}_1, \overline{a}_2 ,\ldots, \overline{a}_\ell \in \mathbb{F}_q^* $ are all distinct. Moreover, if that is the case, then $ a_i - a_j^\beta \in S^* $, for all $ \beta \in S^* $ and all $ 1 \leq i < j \leq \ell $.
\end{proposition}
\begin{proof}
The first part is trivial, since $ R^* = R \setminus \mathfrak{m} $ and $ \mathfrak{m} = \ker (\rho) $. 
Now, since $ q-1 $ and $ m $ are coprime, it follows from \cite[Lemma 26]{SR-BCH} that $ \overline{a}_1, \overline{a}_2, \ldots, \overline{a}_\ell \in \mathbb{F}_q^* $ are pair-wise non-conjugate. Therefore, $ a_i - a_j^\beta \in S^* $, for all $ \beta \in S^* $ and all $ 1 \leq i < j \leq \ell $.
\end{proof}

Observe that in the previous two propositions, the maximum length of the vector $ (a_1, a_2, \ldots, a_\ell) $ is $ \ell = q-1 $. In the next proposition, we show that this is indeed the maximum possible.

\begin{proposition}
Let $ (a_1, a_2, \ldots , a_\ell) \in (S^*)^\ell $ be such that $ a_i - a_j^\beta \in S^* $, for all $ \beta \in S^* $ and all $ 1 \leq i < j \leq \ell $. Then $ \ell \leq q-1 $.
\end{proposition}
\begin{proof}
By the hypothesis on $ a_1, a_2, \ldots, a_\ell $, we deduce that $ \overline{a}_1, \overline{a}_2, \ldots, \overline{a}_\ell \in \mathbb{F}_{q^m}^* $ are pair-wise non-conjugate. Now, as shown in \cite{matroidal} (see also \cite[Prop. 45]{linearizedRS}), there are at most $ q-1 $ non-zero conjugacy classes in $ \mathbb{F}_{q^m} $ with respect to $ \overline{\sigma} $, that is, $ \ell \leq q-1 $.
\end{proof}

In particular, we have shown the existence of linear MSRD codes of any rank for the extension $ R \subseteq S $ as detailed in the following corollary. 

\begin{corollary} 
Let $ \ell \in [q-1] $, $ n_i \in [m] $ for $ i \in [\ell] $, and let $ k \in [n] $, where $ n = n_1 + n_2 + \cdots + n_\ell $. Then there exists a linear code $ \mathcal{C} \subseteq S^n $ that is a free $ S $-module of rank $ k $ and is MSRD over $ R $ for the length partition $ n = n_1 + n_2 + \cdots + n_\ell $.
\end{corollary}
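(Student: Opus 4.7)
The plan is to assemble a pair $(\mathbf{a},\boldsymbol\beta)$ satisfying the MSRD property of Definition \ref{def P-independent} under the given constraints, and then invoke Theorem \ref{th LRS are MSRD} to obtain the desired MSRD linearized Reed-Solomon code $\mathcal{C}_k(\mathbf{a},\boldsymbol\beta)$.

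First I would choose the vector $\mathbf{a} = (a_1, a_2, \ldots, a_\ell) \in S^\ell$. Since $1 \leq \ell \leq q-1$, either of the two preceding propositions applies: take a primitive element $\gamma \in \mathbb{F}_{q^m}^*$ and lift $\gamma^{i-1}$ to some $a_i \in S^*$ for $i = 1, 2, \ldots, \ell$ (such lifts exist since $\rho : S \to \mathbb{F}_{q^m}$ is surjective and units lift to units because $S^* = S \setminus \mathfrak{M}$). The first proposition then guarantees condition 1 of Definition \ref{def P-independent}, namely $a_i - a_j^\beta \in S^*$ for all $\beta \in S^*$ and $1 \leq i < j \leq \ell$.

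Next I would construct $\boldsymbol\beta$. Fix any basis $\alpha_1, \alpha_2, \ldots, \alpha_m$ of $S$ over $R$, which exists since $S$ is free of rank $m$ over $R$. For each $i = 1, 2, \ldots, \ell$, since $1 \leq n_i \leq m$, set $\boldsymbol\beta_i = (\alpha_1, \alpha_2, \ldots, \alpha_{n_i}) \in S^{n_i}$; these entries are $R$-linearly independent as a subset of a basis, giving condition 2 of Definition \ref{def P-independent}. Hence $(\mathbf{a}, \boldsymbol\beta)$ with $\boldsymbol\beta = (\boldsymbol\beta_1, \ldots, \boldsymbol\beta_\ell)$ satisfies the MSRD property.

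Finally, for the given $k$ with $1 \leq k \leq n$, the code $\mathcal{C}_k(\mathbf{a}, \boldsymbol\beta) \subseteq S^n$ from Definition \ref{def LRS codes} is, by Theorem \ref{th LRS are MSRD}, a free $S$-module of rank $k$ and MSRD over $R$ for the length partition $n = n_1 + n_2 + \cdots + n_\ell$, which is exactly the conclusion. There is essentially no obstacle here: all the hard work was done in Theorem \ref{th LRS are MSRD} and in the two preceding propositions; the only thing to verify is that the hypotheses $\ell \leq q-1$ and $n_i \leq m$ suffice to realize the MSRD property, which is immediate from the above construction.
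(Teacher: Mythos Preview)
Your proposal is correct and matches the paper's approach: the paper treats this corollary as an immediate consequence of the preceding discussion, where it explicitly notes that the $\boldsymbol\beta_i$ may be chosen as subsets of any basis of $S$ over $R$ (using $n_i \leq m$) and that the $a_i$ may be chosen via the first proposition (using $\ell \leq q-1$), after which Theorem~\ref{th LRS are MSRD} gives the conclusion. One tiny quibble: you say ``either of the two preceding propositions applies,'' but the second one requires $\gcd(q,m)=1$, which is not assumed here; since you actually invoke only the first proposition, this does not affect the argument.
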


For linearized Reed--Solomon codes, notice that these are the same parameter restrictions as in the finite-field case \cite[Sec. 4.2]{linearizedRS}.

We observer that, in the case of finite fields and square matrices ($ m = n_1 = n_2 = \ldots = n_\ell $), we have the upper bound $ \ell \leq q + \left\lfloor \frac{d-3}{n} \right\rfloor $ \cite[Th. VI.12]{alberto-fundamental}. This bound might hold also for finite chain rings, but we leave it as an open problem. Furthermore, in the Hamming-metric case ($ m=n_1 = \ldots = n_\ell = 1 $) it is conjectured that $ \ell \leq q+1 $ in general. Hence being able to attain the number of blocks $ \ell = q-1 $ is close to the known upper bounds on $ \ell $ for the case of finite fields. In the non-square case ($ m > n_i $), one may construct MSRD codes with an unrestricted number of blocks, see \cite[Subsec. 4.5]{generalMSRD}. 

Finally, we show that duals of linearized Reed--Solomon codes are again linearized Reed--Solomon codes. For a linear code $ \mathcal{C} \subseteq S^n $, we define its dual as $ \mathcal{C}^\perp = \{ \mathbf{d} \in S^n \mid \mathbf{c} \cdot \mathbf{d} = 0 \} $, where $ \cdot $ denotes the usual Euclidean inner product in $ S^n $. The following lemma follows from \cite[Th. 3.1]{honold}.

\begin{lemma} [\textbf{\cite{honold}}]
Given a linear code $ \mathcal{C} \subseteq S^n $, we have that $ \mathcal{C}^\perp $ is a free module if and only if, so is $ \mathcal{C} $. In such a case, if $ \mathcal{C} $ is of rank $ k $, then $ \mathcal{C}^\perp $ is of rank $ n-k $. Furthermore, $ \mathcal{C}^{\perp \perp} = \mathcal{C} $.
\end{lemma}

Using this lemma, we may prove that the dual of a linearized Reed--Solomon code is again a linearized Reed--Solomon code in the same way as in \cite[Th. 4]{secure-multishot}.

\begin{theorem} \label{th dual of LRS}
Let $ (\mathbf{a},\boldsymbol\beta) $ be as in Definition \ref{def P-independent}, and satisfying the MSRD property. There exists a vector $ \boldsymbol\delta = \left(\boldsymbol\delta^{(1)}, \boldsymbol\delta^{(2)}, \ldots, \boldsymbol\delta^{(\ell)} \right) \in S^n $, where $ \boldsymbol\delta^{(i)} = \left( \delta_1^{(i)}, \delta_2^{(i)}, \ldots, \right. $ $ \left. \delta_{n_i}^{(i)} \right) \in S^{n_i} $ and $ \delta_1^{(i)}, \delta_2^{(i)}, \ldots, $ $ \delta_{n_i}^{(i)} $ are $ R $-linearly independent, for $ i \in  [\ell] $, and such that 
\begin{equation}
\mathcal{C}^\sigma_k (\mathbf{a}, \boldsymbol\beta)^\perp = \mathcal{C}^{\sigma^{-1}}_{n-k} \left(\sigma^{-1} (\mathbf{a}), \boldsymbol\delta \right),
\label{eq dual of lin RS is lin RS}
\end{equation}
for $ k  \in  [n-1] $, where $ \sigma^{-1}(\mathbf{a}) = (\sigma^{-1}(a_1), \sigma^{-1}(a_2), \ldots, \sigma^{-1}(a_\ell)) $. Notice that $ (\sigma^{-1} (\mathbf{a}), \boldsymbol\delta) $ also satisfies the MSRD property.

Furthermore, if $ q-1 $ and $ m $ are coprime, and $ a_1, a_2, \ldots, a_\ell \in R^* $ are such that $ a_i - a_j \in R^* $ for all $ 1 \leq i < j \leq \ell $ (see Proposition \ref{prop conjugacy repr in base field}), then
\begin{equation}
\mathcal{C}^\sigma_k (\mathbf{a}, \boldsymbol\beta)^\perp = \mathcal{C}^{\sigma^{-1}}_{n-k} \left(\mathbf{a}, \boldsymbol\delta \right).
\label{eq dual of lin RS is lin RS base field}
\end{equation}
\end{theorem}

We will use the form of the dual of a linearized Reed--Solomon code shown in (\ref{eq dual of lin RS is lin RS base field}) to describe a quadratic-time decoding algorithm in Section \ref{sec syndrome decoder}.

\section{A Welch-Berlekamp Decoder} \label{sec WB decoder}

In this section, we present a Welch-Berlekamp sum-rank error-correcting algorithm for the linearized Reed--Solomon codes from Definition \ref{def LRS codes}. The decoder is based on the original one by Welch and Berlekamp \cite{welch-berlekamp}. Welch-Berlekamp decoders for the sum-rank metric in the case of fields were given in \cite{boucher-skew, secure-multishot, sven-decoder}, listed in decreasing order of computational complexity. Our decoder has cubic complexity over the ring $ S $ and is analogous to the works listed above. In Section \ref{sec syndrome decoder}, we will present a decoder with quadratic complexity, but which only works if $ q-1 $ and $ m $ are coprime. The decoder in this section works for all cases.

Throughout this section, we fix $ (\mathbf{a}, \boldsymbol\beta) $ as in Definition \ref{def P-independent}, and satisfying the MSRD property. Let
$$  b_{i,j} = a_i^{\beta_{i,j}},  $$
for $ j \in [n_i] $ and for $ i \in [\ell] $. Next fix a dimension $ k \in [n-1] $, and consider the linearized Reed--Solomon code $ \mathcal{C}_k (\mathbf{a}, \boldsymbol\beta) \subseteq S^n $ (Definition \ref{def LRS codes}). The number of sum-rank errors that it can correct is
\begin{equation}
t = \left\lfloor \frac{ \dd_{SR} \left( \mathcal{C}_k (\mathbf{a}, \boldsymbol\beta) \right) - 1 }{2} \right\rfloor = \left\lfloor
\frac{n-k}{2} \right\rfloor .
\label{eq ch 2 def t for welch-berlekamp}
\end{equation}
Let $ \mathbf{c} \in \mathcal{C}_k (\mathbf{a}, \boldsymbol\beta) $ be any \textit{codeword}, let $ \mathbf{e} \in S^n $ be an \textit{error vector} such that $ \wt_{SR} (\mathbf{e}) \leq t $, and define the \textit{received word} as
\begin{equation}
\mathbf{r} = \mathbf{c} + \mathbf{e} \in S^n .
\label{eq ch 2 def received word}
\end{equation}
Since $ \wt_{SR}(\mathbf{e}) \leq t $ and $ 2t + 1 \leq \dd_{SR} \left( \mathcal{C}_k (\mathbf{a}, \boldsymbol\beta) \right) $, there is a unique solution $ \mathbf{c} \in \mathcal{C}_k (\mathbf{a}, \boldsymbol\beta) $ to the decoding problem.

We start by defining the auxiliary vectors
\begin{equation}
\begin{split}
\mathbf{c}^\prime & = \mathbf{c} \cdot \diag(\boldsymbol\beta)^{-1}, \\
\mathbf{e}^\prime & = \mathbf{e} \cdot \diag(\boldsymbol\beta)^{-1}, \textrm{ and} \\ \mathbf{r}^\prime & = \mathbf{r} \cdot \diag(\boldsymbol\beta)^{-1}. 
\end{split}
\label{eq ch 2 def skew versions vectors decoding}
\end{equation}
By Lagrange interpolation (Theorem~\ref{th lagrange interpolation}) and Lemma \ref{lemma eval connection}, there exist unique skew polynomials $ F ,G, R \in S[x;\sigma] $, all of degree less than $ n $, such that
\begin{equation}
F(\mathbf{b}) = \mathbf{c}^\prime, \quad G(\mathbf{b}) = \mathbf{e}^\prime, \quad \textrm{and} \quad R(\mathbf{b}) = \mathbf{r}^\prime ,
\label{eq ch 2 skew pol versions of vectors decoding}
\end{equation}
which denote component-wise remainder evaluation (Definition \ref{def skew evaluation}). Following the original idea of the Welch--Berlekamp decoding algorithm, we want to find a non-zero monic skew polynomial $ L \in S[x;\sigma] $ with $ \deg(L) \leq t $ and such that
\begin{equation}
(LR)(\mathbf{b}) = (LF)(\mathbf{b}).
\label{eq ch 2 decoder key equation 1}
\end{equation}
However, since we do not know $ F $, we look instead for non-zero $ L,Q \in S[x;\sigma] $ such that $ L $ is monic, $ \deg(L) \leq t $, $ \deg(Q) \leq t + k - 1 $ and
\begin{equation}
(LR)(\mathbf{b}) = Q(\mathbf{b}).
\label{eq ch 2 decoder key equation 2}
\end{equation}

In the following two lemmas, we show that (\ref{eq ch 2 decoder key equation 1}) and (\ref{eq ch 2 decoder key equation 2}) can be solved, and once $ L $ and $ Q $ are obtained, $ F $ may be obtained in quadratic time (by Euclidean division).

\begin{lemma} \label{lemma ch 2 decoder lemma 1}
There exists a non-zero monic skew polynomial $ L \in S[x;\sigma] $ with $ \deg(L) \leq t $ satisfying (\ref{eq ch 2 decoder key equation 1}). In particular, there exist non-zero $ L,Q \in S[x;\sigma] $ such that $ L $ is monic, $ \deg(L) \leq t $, $ \deg(Q) \leq t + k - 1 $ and (\ref{eq ch 2 decoder key equation 2}) holds.
\end{lemma}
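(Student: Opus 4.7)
The plan is to translate the remainder-evaluation identity~\eqref{eq ch 2 decoder key equation 1} into an operator-evaluation condition on the error vector, where the annihilator machinery of Section~\ref{sec skew polynomials} applies directly and does not require $e_{i,j}$ to be a unit. Two ingredients are needed. First, since $(\mathbf{a},\boldsymbol\beta)$ satisfies the MSRD property, each $\beta_{i,j}$ is a unit (Item~3 of Lemma~\ref{lemma linear independence}), so Lemma~\ref{lemma eval connection} gives $A(b_{i,j})\beta_{i,j} = A_{a_i}(\beta_{i,j})$ for any $A \in S[x;\sigma]$. Second, operator evaluation is multiplicative in the sense that $(AB)_a = A_a \circ B_a$ as $R$-linear maps $S \to S$; this is straightforward to check on the monomials $x^i$ (using the norm identity $\sigma^i(N_j(a))\,N_i(a) = N_{i+j}(a)$) and on the commutation relation $xs = \sigma(s)x$ for $s \in S$, and then extends by $R$-linearity.

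Combining these, for every pair $(i,j)$,
\[
\bigl((LR)(b_{i,j}) - (LF)(b_{i,j})\bigr)\,\beta_{i,j} \;=\; L_{a_i}\!\bigl(R_{a_i}(\beta_{i,j})\bigr) - L_{a_i}\!\bigl(F_{a_i}(\beta_{i,j})\bigr) \;=\; L_{a_i}(r_{i,j} - c_{i,j}) \;=\; L_{a_i}(e_{i,j}),
\]
where the middle equality uses the $R$-linearity of $L_{a_i}$ together with $F_{a_i}(\beta_{i,j}) = c_{i,j}$ and $R_{a_i}(\beta_{i,j}) = r_{i,j}$. Cancelling the unit $\beta_{i,j}$, equation~\eqref{eq ch 2 decoder key equation 1} is \emph{equivalent} to $L_{a_i}(e_{i,j}) = 0$ for all $j \in [n_i]$ and all $i \in [\ell]$. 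I then apply Corollary~\ref{cor annihilator skew pol from rank} with error blocks $\mathbf{u}_i = \mathbf{e}^{(i)}$; the MSRD property supplies exactly the hypothesis $a_i - a_j^\beta \in S^*$ required by that corollary. This yields a monic $L \in S[x;\sigma]$ of degree $\sum_{i=1}^\ell \rk(\mathbf{e}^{(i)}) = \wt_{SR}(\mathbf{e}) \leq t$ that annihilates every error entry, establishing the first claim.

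For the second claim, set $Q = LF$. Since $F$ is the message skew polynomial of a codeword of $\mathcal{C}_k(\mathbf{a},\boldsymbol\beta)$, its degree is at most $k-1$, and hence $\deg(Q) \leq \deg(L) + \deg(F) \leq t + k - 1$. By the first claim, $Q(\mathbf{b}) = (LF)(\mathbf{b}) = (LR)(\mathbf{b})$, which is~\eqref{eq ch 2 decoder key equation 2}. The one delicate point is the avoidance of Lemma~\ref{lemma product rule}, whose clean product formula requires the inner evaluation to be a unit -- a hypothesis that can fail for error entries lying in $\mathfrak{M}$. Working with the operator evaluation of Definition~\ref{def linearized operators}, which composes unconditionally on all of $S$, sidesteps this issue and lets Corollary~\ref{cor annihilator skew pol from rank} deliver the required $L$ in a single step.
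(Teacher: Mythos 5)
Your proof is correct and takes essentially the same route as the paper: both obtain the monic annihilator $L$ with $\deg(L)\leq \wt_{SR}(\mathbf{e})\leq t$ from Corollary~\ref{cor annihilator skew pol from rank} applied to the error blocks, and both translate between operator and remainder evaluation at the points $b_{i,j}=a_i^{\beta_{i,j}}$ via Lemma~\ref{lemma eval connection}. The only differences are bookkeeping: you use the composition rule $(AB)_{a}=A_a\circ B_a$ where the paper uses the remainder-evaluation identity $(LG)(b_{i,j})=L_{b_{i,j}}(G(b_{i,j}))$ (the step it leaves to the reader, which indeed avoids the unit hypothesis of Lemma~\ref{lemma product rule}), and you spell out the choice $Q=LF$ with $\deg(F)\leq k-1$ that the paper leaves implicit.
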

\begin{proof}
By Corollary \ref{cor annihilator skew pol from rank}, there exists a non-zero monic skew polynomial $ L \in S[x;\sigma] $ such that $ \deg(L) \leq t $ and $ L_{a_i}(e_{i,j}) = 0 $, for $ j \in [n_i] $ and for $ i \in [\ell] $. From the definitions and Lemma \ref{lemma eval connection}, it follows that 
$$ (LG)(b_{i,j}) = L_{b_{i,j}}(G(b_{i,j})) = L_{b_{i,j}}(e_{i,j}^\prime) = L_{a_i}(e_{i,j}) = 0, $$
for $ j \in [n_i] $ and for $ i \in [\ell] $. Since $ R(\mathbf{b}) = F(\mathbf{b}) + G(\mathbf{b}) $, we conclude that
$$ (L(R - F))(\mathbf{b}) = (LG)(\mathbf{b}) = 0 $$
by Lemma \ref{lemma product rule}. In other words, $ L $ satisfies (\ref{eq ch 2 decoder key equation 1}) and we are done.
\end{proof}

\begin{lemma} \label{lemma ch 2 decoder lemma 2}
If $ L, Q \in S[x;\sigma] $ are such that $ L $ is monic, $ \deg(L) \leq t $, $ \deg(Q) \leq t + k - 1 $ and (\ref{eq ch 2 decoder key equation 2}) holds, then
$$ Q = LF. $$
\end{lemma}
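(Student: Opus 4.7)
The plan is to show that $H := Q - LF$, which has degree at most $t+k-1$, vanishes by realizing its evaluation vector as a codeword of an MSRD code whose sum-rank weight is too small to be nonzero.

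The first step is to translate the hypothesis into operator-evaluation form. Since $R(\mathbf{b}) = F(\mathbf{b}) + G(\mathbf{b})$ and the three skew polynomials $R$, $F$, $G$ all have degree at most $n-1$, the uniqueness part of Theorem \ref{th lagrange interpolation} (combined with Lemma \ref{lemma eval connection} to pass between remainder and operator evaluation, using that the $\beta_{i,j}$ are units) yields $R = F + G$ in $S[x;\sigma]$. The key equation then gives $H(\mathbf{b}) = (LG)(\mathbf{b})$, and a direct calculation from Definition \ref{def linearized operators} and Lemma \ref{lemma eval connection} (as in the proof of Lemma \ref{lemma ch 2 decoder lemma 1}) produces the operator-evaluation identity $H_{a_i}(\beta_{i,j}) = L_{a_i}(e_{i,j})$ for all $i, j$.

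Because $\deg H \leq t+k-1$, the vector $(H_{a_i}(\beta_{i,j}))_{i,j}$ is a codeword of the linearized Reed--Solomon code $\mathcal{C}_{t+k}(\mathbf{a},\boldsymbol\beta)$ (by Definition \ref{def LRS codes}), which is MSRD with minimum sum-rank distance $n-(t+k)+1$ by Theorem \ref{th LRS are MSRD}. Using the Smith normal form decomposition $\mathbf{e}_i = \boldsymbol\alpha_i B_i$ with $B_i \in R^{t_i \times n_i}$ and $t_i = \rk(\mathbf{e}_i)$ (as in the proof of Corollary \ref{cor annihilator skew pol from rank}), together with the $R$-linearity of $L_{a_i}$, one has $L_{a_i}(\mathbf{e}_i) = L_{a_i}(\boldsymbol\alpha_i) B_i$, whence $\rk(L_{a_i}(\mathbf{e}_i)) \leq \rk(B_i) = \rk(\mathbf{e}_i)$. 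Summing over $i$ gives $\wt_{SR}((L_{a_i}(e_{i,j}))_{i,j}) \leq \wt_{SR}(\mathbf{e}) \leq t$. Since $2t \leq n-k$ implies $t < n-(t+k)+1$, this codeword has weight strictly less than the minimum distance and hence must vanish; that is, $H_{a_i}(\beta_{i,j}) = 0$ for all $(i,j)$. Combined with $\deg H \leq n-1$, the uniqueness in Theorem \ref{th lagrange interpolation} forces $H = 0$, i.e., $Q = LF$. The only delicate technical point is the rank inequality $\rk(L_{a_i}(\mathbf{e}_i)) \leq \rk(\mathbf{e}_i)$ over the chain ring $R$, which reduces to the standard fact that matrix rank does not increase under multiplication by a matrix with entries in $R$, easily verified via Smith normal form.
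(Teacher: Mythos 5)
Your proof is correct, but it takes a genuinely different route in the key step. The paper also reduces to showing that the evaluation vector of $LF-Q$ has sum-rank weight at most $t$, but it does so by comparing supports with $(F-R)(\mathbf{b})$ via the product rule (Lemma \ref{lemma product rule}), and then finishes by re-running the annihilator argument of Lemma \ref{lemma ch 2 decoder lemma 1}: it produces a monic $L_0$ of degree at most $t$ annihilating $(LF-Q)(\mathbf{b})$, notes $\deg(L_0(LF-Q)) \leq 2t+k-1 < n$, invokes Lagrange uniqueness (Theorem \ref{th lagrange interpolation}) to get $L_0(LF-Q)=0$, and cancels the monic $L_0$. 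You instead compute the evaluation vector of $H=Q-LF$ exactly, namely $H_{a_i}(\beta_{i,j}) = L_{a_i}(e_{i,j})$ (which requires, besides Lemma \ref{lemma eval connection}, the additivity of remainder evaluation to get $R=F+G$ and the multiplicativity $(LG)_{a}=L_{a}\circ G_{a}$ of operator evaluation -- both routine and of the same nature as what the paper leaves to the reader in Lemma \ref{lemma ch 2 decoder lemma 1}), recognize it as a codeword of the larger code $\mathcal{C}_{t+k}(\mathbf{a},\boldsymbol\beta)$ of Definition \ref{def LRS codes}, bound its sum-rank weight by $t$ using the same Smith-normal-form decomposition as in Corollary \ref{cor annihilator skew pol from rank} together with the (easy, but not stated in the paper) fact that rank over $R$ cannot increase under right multiplication by a matrix over $R$, and then kill it with the minimum distance $n-(t+k)+1$ from Theorem \ref{th LRS are MSRD}. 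There is no circularity, since Theorem \ref{th LRS are MSRD} is established before the decoder section. What each approach buys: the paper's argument stays entirely inside the skew-polynomial toolkit (annihilators, degree counts, Lagrange uniqueness) and never needs the MSRD theorem inside the decoding proofs; yours is the more code-theoretic ``low-weight codeword of an MSRD code must be zero'' argument, which is arguably more transparent, at the cost of invoking the heavier Theorem \ref{th LRS are MSRD} plus the auxiliary rank-monotonicity fact, and of the extra (correctly justified) step $R=F+G$.
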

\begin{proof}
First, by (\ref{eq ch 2 decoder key equation 2}) and the product rule (Lemma \ref{lemma product rule}), 
$$ \textrm{if} \quad (F-R)(b_{i,j}) = 0, \quad \textrm{then} \quad (LF - Q)(b_{i,j}) = 0, $$
for $ j \in [n_i] $ and for $ i \in [\ell] $. From this fact, and using Lemmas \ref{lemma SR is min of Hamming} and \ref{lemma eval connection}, the
reader may deduce that
$$ \wt_{SR} \left( (LF - Q)(\mathbf{b}) \cdot \diag(\boldsymbol\beta) \right) $$
$$ \leq \wt_{SR} \left( (F-R)(\mathbf{b}) \cdot \diag(\boldsymbol\beta) \right) \leq t. $$
Therefore, we may apply Lemma~\ref{lemma ch 2 decoder lemma 1} to $ LF $ and $ Q $, instead of $ F $ and $ R $. Thus there exists a non-zero monic $ L_0 \in S[x;\sigma] $ such that $ \deg(L_0) \leq t $ and 
$$ (L_0 (LF - Q))(\mathbf{b}) = \mathbf{0}. $$
Now observe that
$$ \deg \left( L_0 (LF - Q) \right) \leq 2t + k - 1 < n. $$
By Lemma \ref{lemma eval connection} and Theorem \ref{th lagrange interpolation}, we conclude that
$$ L_0 (LF - Q) = 0. $$
Since $ L_0 $ is non-zero and monic, we conclude that $ LF = Q $ and we are done.
\end{proof}

Finally, once we find non-zero skew polynomials $ L, Q \in S[x;\sigma] $ such that $ L $ is monic, $ \deg(L) \leq t $, $ \deg(Q) \leq t + k - 1 $ and (\ref{eq ch 2 decoder key equation 2}) holds, then we may find $ F $ by left Euclidean division, since $ Q = LF $ by Lemma \ref{lemma ch 2 decoder lemma 2} above. Observe that left Euclidean division is possible in $ S[x;\sigma] $ since $ \sigma $ is invertible. Finding $ L $ and $ Q $ using $ R $ and $ \mathbf{b} $ (which are known) amounts to solving a system of linear equations derived from (\ref{eq ch 2 decoder key equation 2}) using the Smith normal form, as in the Gabidulin case, see \cite[Sec. III-D]{kamche}. Using this method, the decoding algorithm has an overall complexity of $ \mathcal{O}(n^3) $ operations over the ring $ S $.

\section{A Quadratic Syndrome Decoder} \label{sec syndrome decoder}

In this section, we extend the syndrome decoder from \cite{sven-efficient} to linearized Reed--Solomon codes when $ q-1 $ and $ m $ are coprime. This decoder also constitutes the first known syndrome decoder for linearized Reed--Solomon codes over finite fields, to the best of our knowledge. Note that the algorithm \cite[Alg. 2]{sven-efficient}, and the skew polynomial version (\cite[Alg. 1]{sven-efficient}) of the Byrne-Fitzpatrick algorithm \cite{byrne-fitzpatrick} it is based upon, are given in those works for Galois rings, a particular case of finite chain rings (and not all finite chain rings are Galois rings, see \cite[Th. XVII.5]{mcdonald}). However, we notice that such algorithms work for finite chain rings in general. For such a generalization, we need the following observation. For the finite chain ring $ S $, there exists an element $ \pi \in S $ such that the maximal ideal of $ S $ is $ \mathfrak{M} = (\pi) $, and all ideals of $ S $ are of the form $ \mathfrak{M}^i = (\pi^i) $, for $ i \in [r] $, where $ r $ is the smallest positive integer such that $ \pi^r = 0 $, and thus $ \mathfrak{M}^r = 0 $, see \cite[Sec. II-B]{kamche}. With this representation of the ideal chain of $ S $, one can extend mutatis mutandis \cite[Alg. 1]{sven-efficient} and the proof of its correctness and complexity to general finite chain rings. For the convenience of the reader, we include \cite[Alg. 1]{sven-efficient} for a finite chain ring $ S $ in Algorithm \ref{alg:skewbyrnefitzpatrick}. Here, we also denote $ {\rm lt}(F) = x^{\deg(F)} $, for $ F \in S[x;\sigma] $, and $ \prec $ is any total order in the set $ \{ (x^n,0) \mid n \in \mathbb{N} \} \cup \{ (0,x^n) \mid n \in \mathbb{N} \} $ compatible with multiplication by $ x^k $, for all $ k \in \mathbb{N} $. For left Gröbner bases, see \cite[Sec. III]{sven-efficient}. Finally, $ {\rm mod} $ denotes modulo on the right, that is, we say that $ F \equiv G $ mod $ H $ if, and only if, $ H $ divides $ F-G $ on the right.

\begin{algorithm}[ht!]
\caption{$\mathsf{SkewByrneFitzpatrick}$ \cite[Alg. 1]{sven-efficient}}\label{alg:skewbyrnefitzpatrick}
\SetKwInOut{Input}{Input}
\SetKwInOut{Output}{Output}
\Input{$ U \in S[x;\sigma] $ and $m \in \mathbb{Z}_{>0}$.}
\Output{Left Gröbner basis of the left $S[x;\sigma]$-module
\begin{align*}
\mathcal{M} := \left\{ (F, G) \in S[x;\sigma]^2 \mid F U \equiv G \bmod x^m \right\}.
\end{align*}}
let $\mathcal{B}_0 := \left\{ (\pi^i, 0) \mid i \in \{ 0,1, \dots, r - 1 \} \right\} \cup \left\{ (0, \pi^i) \mid i \in \{ 0,1, \dots, r - 1 \} \right\}$ \\
\For{$k \in \{ 0,1, \dots, m - 1 \}$}
{\For{each $(F_i, G_i) \in \mathcal{B}_k$}
  {compute the discrepancy $\zeta_i := (F_i U - G_i)_k$ (where $(\cdot)_k$ denotes the $k$th coefficient)}
  \For{each $(F_i, G_i) \in \mathcal{B}_k$}
  {\If{$\zeta_i = 0$}{put $(F_i, G_i) \in \mathcal{B}_{k+1}$ \\ continue}
    \eIf{there is $(F_j, G_j) \in \mathcal{B}_k$ with ${\rm lt}(F_j, G_j) \prec {\rm lt}(F_i, G_i)$ and $\zeta_j$ divides $\zeta_i$}
    {put $(F_i, G_i) - Q (F_j, G_j)$ in $\mathcal{B}_{k+1}$, where $Q \in S$ with $\zeta_i = Q \zeta_j$}
    {put $(x F_i, x G_i)$ in $\mathcal{B}_{k+1}$}}}
\Return{$\mathcal{B}_m$}
\end{algorithm} 

Throughout this section, we fix a pair $ (\mathbf{a},\boldsymbol\beta) $ as in Definition \ref{def P-independent}. We will assume that $ a_1, a_2, \ldots, a_\ell \in R^* $ satisfy $ a_i - a_j \in R^* $ for all $ 1 \leq i < j \leq \ell $ (i.e., $ \overline{a}_1, \overline{a}_2 ,\ldots, \overline{a}_\ell \in \mathbb{F}_q^* $ are all distinct) and that $ \beta_{i,1}, \beta_{i,2}, \ldots , \beta_{i,n_i} \in S $ are $ R $-linearly independent, for $ i \in [\ell] $. Hence $ (\mathbf{a},\boldsymbol\beta) $ satisfies the MSRD property by Proposition \ref{prop conjugacy repr in base field} since we are assuming that $ q-1 $ and $ m $ are coprime. In particular, fixing a dimension $ k \in [n-1] $ and a linearized Reed--Solomon code $ \mathcal{C}^\sigma_k (\mathbf{a}, \boldsymbol\beta) \subseteq S^n $, we have that $ \mathcal{C}^\sigma_k (\mathbf{a}, \boldsymbol\beta)^\perp = \mathcal{C}^{\sigma^{-1}}_{n-k} \left(\mathbf{a}, \boldsymbol\delta \right) $ by Theorem \ref{th dual of LRS}, where $ (\mathbf{a},\boldsymbol\delta) $ also satisfies the MSRD property.

We consider the same error-correcting scenario as in Section \ref{sec WB decoder}. That is, $ t = \left\lfloor (n-k)/2 \right\rfloor $ as in (\ref{eq ch 2 def t for welch-berlekamp}) and $ \mathbf{r} = \mathbf{c} + \mathbf{e} \in S^n $ as in (\ref{eq ch 2 def received word}), for a fixed codeword $ \mathbf{c} \in \mathcal{C}^\sigma_k (\mathbf{a}, \boldsymbol\beta) $ and an error vector $ \mathbf{e} \in S^n $, where we may assume that $ t = {\rm wt}_{SR}(\mathbf{e}) $. 

We start by extending \cite[Def. 2]{sven-efficient}. Notice that $ a_i^{-1} - (a_j^{-1})^\beta \in S^* $, for all $ \beta \in S^* $, with respect to $ \sigma^{-1} $, since the same property holds for $ a_i $ and $ a_j $ with respect to $ \sigma $ by assumption, for all $ 1 \leq i < j \leq \ell $.

\begin{definition}
We say that $ F = \sum_{i=0}^d F_i x^i \in S[x;\sigma^{-1}] $, $ d \in \mathbb{N} $, is primitive if it is not a zero divisor, i.e., $ F_i \in S^* $ for some $ i \in [d] $, i.e., $ \overline{F} \in \mathbb{F}_{q^m}[x;\overline{\sigma}^{-1}] \setminus \{ 0 \} $. We say that $ \Lambda \in S[x;\sigma^{-1}] $ is an annihilator of $ \mathbf{e} \in S^n $ if it is primitive, $ \Lambda_{a_i^{-1}}(e_{i,j}) = 0 $ for $ j \in [n_i] $ and $ i \in [\ell] $ and it has minimum possible degree among primitive skew polynomials in $ S[x;\sigma^{-1}] $ satisfying such a property. 
\end{definition}

Notice that, here, $ \Lambda_{a_i^{-1}}(e_{i,j}) $ is the operator evaluation (Definition \ref{def linearized operators}) with respect to $ \sigma^{-1} $. We will not specify this in the notation $ \Lambda_{a_i^{-1}}(e_{i,j}) $ since we wrote that $ \Lambda \in S[x;\sigma^{-1}] $, hence emphasizing the use of $ \sigma^{-1} $ for $ \Lambda $ instead of $ \sigma $.

We need some preliminary auxiliary properties on the zeros of skew polynomials over finite chain rings. This result extends Lemma \ref{lemma zeros lam leroy} in a different direction than Theorem \ref{th zeros rings}.

\begin{lemma} \label{lemma degree bound with rank}
If $ F \in S[x;\sigma] $ is primitive and $ c_1, c_2, \ldots, c_\ell \in S^* $ are such that $ c_i - c_j^\beta \in S^* $ for all $ \beta \in S^* $ and all $ 1 \leq i < j \leq \ell $, then
$$ \sum_{i=1}^\ell {\rm rk}_R(\ker(F_{c_i})) \leq \deg(F). $$
\end{lemma}
\begin{proof}
Let $ r_i = \ker(F_{c_i}) $, for $ i \in [\ell] $. Using the Smith normal form, we see that there are $ R $-linearly independent elements $ b_{i,1}, \ldots, b_{i,r_i} \in S $ and non-zero $ \lambda_{i,1}, \ldots, \lambda_{i,r_i} \in R $ such that $ \ker(F_{c_i}) = \langle \lambda_{i,1}b_{i,1}, \ldots, \lambda_{i,r_i}b_{i,r_i} \rangle_R $, for $ i \in [\ell] $. Since $ R $ is a chain ring, we may assume that there exists $ k \in [\ell] $ such that $ \lambda_{i,j} | \lambda_{k,r_k} $ for all $ j \in [r_i] $ and all $ i \in [\ell] $. Therefore, since $ F_{c_i}(\lambda_{i,j} b_{i,j}) = \lambda_{i,j} F_{c_i}(b_{i,j}) = 0 $, we see that $ (\lambda_{k,r_k} F)_{c_i}(b_{i,j}) = 0 $, for all $ j \in [r_i] $ and all $ i \in [\ell] $. 

Assume that $ \deg(F) < r_1 + r_2 + \cdots + r_\ell $. Then we deduce that $ \lambda_{k,r_k} F = 0 $ by Theorem \ref{th zeros rings}. However, since $ \lambda_{k,r_k} \neq 0 $, then $ F $ is not primitive, a contradiction. Therefore, $ \deg(F) \geq r_1 + r_2 + \cdots + r_\ell $ and we are done.
\end{proof}

We next extend \cite[Lemma 4]{sven-efficient}.

\begin{lemma}
Any annihilator of $ \mathbf{e} \in S^n $ has degree $ t = {\rm wt}_{SR}(\mathbf{e}) $. In addition, if $ {\rm rk}(\mathbf{e}_i) = {\rm frk}(\mathbf{e}_i) $, for $ i \in [\ell] $, then there is a unique monic annihilator of $ \mathbf{e} $.
\end{lemma}
\begin{proof}
Let $ \Lambda \in S[x;\sigma^{-1}] $ be an annihilator of $ \mathbf{e} $. First, $ \deg(\Lambda) \leq t $ by Corollary \ref{cor annihilator skew pol from rank}. Second, if $ \deg(\Lambda) < t $, then $ \Lambda $ would not be primitive by Lemma \ref{lemma degree bound with rank}. Hence $ \deg(\Lambda) = t $. 

Now assume that $ {\rm rk}(\mathbf{e}_i) = {\rm frk}(\mathbf{e}_i) $, for $ i \in [\ell] $. First, there exists a monic annihilator $ \Lambda \in S[x;\sigma^{-1}] $ of $ \mathbf{e} $ by Corollary \ref{cor annihilator skew pol from rank}. Let $ \Lambda^\prime \in S[x;\sigma^{-1}] $ be another annihilator of $ \mathbf{e} $. Note that $ t = \deg(\Lambda) = \deg(\Lambda^\prime) $. Since $ \Lambda $ is monic, we may perform right Euclidean division, i.e., there are $ Q,R \in S[x;\sigma^{-1}] $ with $ \deg(R) < t $ and $ \Lambda^\prime = Q \Lambda + R $. By Lemmas \ref{lemma product rule} and \ref{lemma eval connection}, we have that $ R_{a_i^{-1}}(e_{i,j}) = 0 $, for $ j \in [n_i] $ and $ i \in [\ell] $. Since $ \deg(R) < \sum_{i=1}^\ell {\rm frk}_R(\mathbf{e}_i) $, we deduce that $ R = 0 $ by Theorem \ref{th lagrange interpolation}. In other words, $ \Lambda^\prime = Q \Lambda $, where $ Q \in S^* $, and thus $ \Lambda $ is the unique monic annihilator of $ \mathbf{e} $.
\end{proof}

We will define syndromes as usual.

\begin{definition}
Let $ h = n-k $ and define the syndrome vector $ \mathbf{s} = \mathbf{e} \mathbf{M}^{\sigma^{-1}}_h(\mathbf{a},\boldsymbol\delta)^\intercal \in S^h $. We define the syndrome skew polynomial $ s = \sum_{i=0}^{h-1} s_i x^i \in S[x;\sigma^{-1}] $, where $ \mathbf{s} = (s_0,s_1, \ldots, s_{h-1}) $.
\end{definition}

In order to prove the key equation between annihilators of $ \mathbf{e} $ and the syndrome skew polynomial $ s $, we need the following two lemmas. The first one follows directly from the Smith normal form.

\begin{lemma} \label{lemma decomposition error}
For $ i \in [\ell] $, there exist $ \boldsymbol\alpha_i \in S^{t_i} $ and $ \mathbf{B}^{(i)} \in R^{t_i \times n_i} $ such that $ \mathbf{e}_i = \boldsymbol\alpha_i \mathbf{B}^{(i)} $, $ t_i = {\rm frk}(\boldsymbol\alpha_i) = {\rm rk}(\mathbf{B}^{(i)}) $ and $ \langle \alpha_{i,1}, \ldots, \alpha_{i,t_i} \rangle_R = \langle e_{i,1}, \ldots, e_{i,n_i} \rangle_R $. In particular, we have that $ \Lambda_{a_i^{-1}}(\alpha_{i,j}) = 0 $, for all $ j \in [t_i] $ and $ i \in [\ell] $, for any annihilator $ \Lambda \in S[x;\sigma] $ of $ \mathbf{e} $.
\end{lemma}

The second lemma can be found in \cite[Lemma 1]{lam}.
 
\begin{lemma} [\textbf{\cite{lam}}] \label{lemma lam 1}
For all $ a \in S $ and all integers $ 0 \leq j \leq i $, it holds that
$$ N^\sigma_i(a) = \sigma^{i-j} \left( N^\sigma_j(a) \right) N^\sigma_{i-j}(a). $$
\end{lemma}

We may now provide the key equation.

\begin{theorem} [\textbf{Key Equation}] \label{th key equation}
Let $ \Lambda \in S[x;\sigma^{-1}] $ be an annihilator of $ \mathbf{e} $. There exists $ \Omega \in S[x;\sigma^{-1}] $ with $ \deg(\Omega) <t $ and 
\begin{equation}
\Omega \equiv \Lambda s \quad {\rm mod} \quad x^h.
\label{eq key equation}
\end{equation}
\end{theorem}
\begin{proof}
If we set $ d_{i,u} = \sum_{j=1}^{n_i} \mathbf{B}^{(i)}_{u,j} \delta_{i,j} \in S $, for $ u \in [t_i] $ and $ i \in [\ell] $, then we have that
\begin{equation}
\begin{split}
s_v & = \sum_{i=1}^\ell \sum_{j=1}^{n_i} e_{i,j} \mathcal{D}_{\sigma^{-1},a_i}^v \left( \delta_{i,j} \right) \\
 & = \sum_{i=1}^\ell \sum_{j=1}^{n_i} \sum_{u=1}^{t_i} \alpha_{i,u} \mathbf{B}^{(i)}_{u,j} \mathcal{D}_{\sigma^{-1},a_i}^v \left( \delta_{i,j} \right) \\
 & = \sum_{i=1}^\ell \sum_{u=1}^{t_i} \sum_{j=1}^{n_i} \alpha_{i,u} \mathcal{D}_{\sigma^{-1},a_i}^v \left( \mathbf{B}^{(i)}_{u,j} \delta_{i,j} \right) \\
 & = \sum_{i=1}^\ell \sum_{u=1}^{t_i} \alpha_{i,u} \mathcal{D}_{\sigma^{-1},a_i}^v \left( d_{i,u} \right) ,
\end{split}
\label{eq formula for syndrome}
\end{equation}
for $ v = 0,1, \ldots, h-1 $.

Note that $ \deg (\Lambda) = t $ and $ \deg(s) = h-1 $, and therefore, $ \Lambda s = \sum_{v=0}^{t+h-1} (\Lambda s)_v $. For $ v = t, t+1, \ldots, h-1 $, we have that
\begin{equation*}
\resizebox{\columnwidth}{!}{%
$
\begin{split}
(\Lambda s)_v & = \sum_{l=0}^v \Lambda_{v-l} \sigma^{-v+l} (s_l) \\
& \stackrel{(a)}{=} \sum_{i=1}^\ell \sum_{u=1}^{t_i} \sum_{l=0}^v \Lambda_{v-l} \sigma^{-v+l} \left( \alpha_{i,u} \mathcal{D}_{\sigma^{-1},a_i}^l \left( d_{i,u} \right) \right) \\
& = \sum_{i=1}^\ell \sum_{u=1}^{t_i} \sum_{l=0}^v \Lambda_{v-l} \sigma^{-v+l} \left( \alpha_{i,u} \right) \sigma^{-v+l} \left( N^{\sigma^{-1}}_l (a_i) \right) \sigma^{-v+l} \left( \sigma^{-l} \left( d_{i,u} \right) \right) \\
& \stackrel{(b)}{=} \sum_{i=1}^\ell \sum_{u=1}^{t_i} \sum_{l=0}^v \Lambda_{v-l} \sigma^{-v+l} \left( \alpha_{i,u} \right) N^{\sigma^{-1}}_{v-l} \left( a_i^{-1} \right) N^{\sigma^{-1}}_v (a_i) \sigma^{-v} \left( d_{i,u} \right) \\
& = \sum_{i=1}^\ell \sum_{u=1}^{t_i} \sum_{l=0}^v \Lambda_{v-l} \mathcal{D}_{\sigma^{-1},a_i^{-1}}^{v-l} \left( \alpha_{i,u} \right) \mathcal{D}_{\sigma^{-1},a_i}^v \left( d_{i,u} \right) \\
& = \sum_{i=1}^\ell \sum_{u=1}^{t_i} \mathcal{D}_{\sigma^{-1},a_i}^v \left( d_{i,u} \right) \left( \sum_{l=0}^v \Lambda_l \mathcal{D}_{\sigma^{-1},a_i^{-1}}^l \left( \alpha_{i,u} \right) \right) \\
& \stackrel{(c)}{=} \sum_{i=1}^\ell \sum_{u=1}^{t_i} \mathcal{D}_{\sigma^{-1},a_i}^v \left( d_{i,u} \right) \left( \sum_{l=0}^t \Lambda_l \mathcal{D}_{\sigma^{-1},a_i^{-1}}^l \left( \alpha_{i,u} \right) \right) \\
& \stackrel{(d)}{=} 0,
\end{split}
$
}%
\end{equation*}
where we have used the formula (\ref{eq formula for syndrome}) for $ s_l $ in (a) since $ l \leq v \leq h-1 $, Lemma \ref{lemma lam 1} in (b), the fact that $ \Lambda_l = 0 $ if $ t < l \leq v $ in (c), and Lemma \ref{lemma decomposition error} in (d).
\end{proof}

The next ingredient is the following extension of \cite[Th. 7]{sven-decoder}. From now on, we will also define
$$ \widetilde{\beta}_{i,j} = \sigma^{k-1}(\beta_{i,j}) a_i^{k-1}, $$
for $ j \in [n_i] $ and $ i \in [\ell] $. Note that $ \widetilde{\beta}_{i,1}, \ldots , \widetilde{\beta}_{i,n_i} \in S^* $ are also $ R $-linearly independent, since $ a_i \in R^* $, $ \sigma $ is an automorphism and $ \beta_{i,1}, \ldots , \beta_{i,n_i} \in S^* $ are $ R $-linearly independent.

\begin{theorem} \label{th pair U V}
Recall that $ \mathbf{r} = \mathbf{c} + \mathbf{e} $, where $ \mathbf{c} \in \mathcal{C}^\sigma_k (\mathbf{a}, \boldsymbol\beta) $ and $ \mathbf{e} \in S^n $ is such that $ t = {\rm wt}_{SR}(\mathbf{e}) $. Assume that we have non-zero $ U,V \in S[x;\sigma^{-1}] $ such that
\begin{enumerate}
\item
$ U $ is primitive,
\item
$ Us - V \equiv 0 $ $ {\rm mod} $ $ x^h $,
\item
$ \deg(U) \leq t $,
\item
$ \deg(V) < \deg(U) $.
\end{enumerate}
Then $ U $ is an annihilator of $ \mathbf{e} $ and in particular, $ \deg(U) = t $. Moreover,
$$ UR \equiv U \widetilde{F} \quad {\rm mod} \quad G, $$
where $ R, \widetilde{F},G \in S[x;\sigma^{-1}] $ are the unique skew polynomials with
$$
\begin{array}{rlcl}
R_{a_i^{-1}}(\widetilde{\beta}_{i,j}) & = r_{i,j} \quad & \textrm{and} \quad & \deg(R) < n, \\
\widetilde{F}_{a_i^{-1}}(\widetilde{\beta}_{i,j}) & = c_{i,j} \quad & \textrm{and} \quad & \deg(\widetilde{F}) < k, \\
G_{a_i^{-1}}(\widetilde{\beta}_{i,j}) & = 0 \quad & \textrm{and} \quad & \deg(G) = n ,
\end{array}
$$
and $ G $ is the unique monic annihilator of $ \widetilde{\boldsymbol\beta} \in S^n $.
\end{theorem}
\begin{proof}
Since $ \deg(V) < t $, $ 2t - 1 < h $ and $ Us - V \equiv 0 $ mod $ x^h $, then $ (Us)_i = 0 $, for $ i = t,t+1, \ldots , 2t-1 $. This may be rewritten as
$$ \resizebox{\columnwidth}{!}{%
$ \left( \begin{array}{cccc}
\sigma^0(s_t) & \sigma^{-1}(s_{t-1}) & \ldots & \sigma^{-t}(s_0) \\
\sigma^0(s_{t+1}) & \sigma^{-1}(s_t) & \ldots & \sigma^{-t}(s_1) \\
\vdots & \vdots & \ddots & \vdots \\
\sigma^0(s_{2t-1}) & \sigma^{-1}(s_{2t-2}) & \ldots & \sigma^{-t}(s_{t-1}) 
\end{array} \right) \left( \begin{array}{c}
u_0 \\
u_1 \\
\vdots \\
u_t
\end{array} \right) = \left( \begin{array}{c}
0 \\
0 \\
\vdots \\
0
\end{array} \right),
$
}%
$$
where $ U = \sum_{i=0}^t u_ix^i $. If we denote by $ \mathbf{S} \in S^{t \times (t+1)} $ the matrix above, then by (\ref{eq formula for syndrome}) we have a decomposition
$$ \mathbf{S} = \mathbf{D} \mathbf{A}, $$
where 
$$ 
\resizebox{\columnwidth}{!}{%
$
\mathbf{D} = (\mathbf{D}_1 | \ldots | \mathbf{D}_\ell) \in S^{t \times t}, \quad \mathbf{A} = \left( \begin{array}{c}
\mathbf{A}_1 \\
\hline
\vdots \\
\hline
\mathbf{A}_\ell
\end{array} \right) \in S^{t \times (t+1)},
$
}%
$$
and
$$ 
\resizebox{\columnwidth}{!}{%
$
\mathbf{D}_i = \left( \begin{array}{cccc}
\mathcal{D}_{\sigma^{-1},a_i}^t (d_{i,1}) & \mathcal{D}_{\sigma^{-1},a_i}^t (d_{i,2}) & \ldots & \mathcal{D}_{\sigma^{-1},a_i}^t (d_{i,t_i}) \\
\mathcal{D}_{\sigma^{-1},a_i}^{t+1} (d_{i,1}) & \mathcal{D}_{\sigma^{-1},a_i}^{t+1} (d_{i,2}) & \ldots & \mathcal{D}_{\sigma^{-1},a_i}^{t+1} (d_{i,t_i}) \\
\vdots & \vdots & \ddots & \vdots \\
\mathcal{D}_{\sigma^{-1},a_i}^{2t-1} (d_{i,1}) & \mathcal{D}_{\sigma^{-1},a_i}^{2t-1} (d_{i,2}) & \ldots & \mathcal{D}_{\sigma^{-1},a_i}^{2t-1} (d_{i,t_i}) \\
\end{array} \right) \in S^{t \times t_i},
$
}%
$$
$$
\resizebox{\columnwidth}{!}{%
$
\mathbf{A}_i = \left( \begin{array}{cccc}
\mathcal{D}_{\sigma^{-1},a_i^{-1}}^0 (\alpha_{i,1}) & \mathcal{D}_{\sigma^{-1},a_i^{-1}}^1 (\alpha_{i,1}) & \ldots & \mathcal{D}_{\sigma^{-1},a_i^{-1}}^t (\alpha_{i,1}) \\
\mathcal{D}_{\sigma^{-1},a_i^{-1}}^0 (\alpha_{i,2}) & \mathcal{D}_{\sigma^{-1},a_i^{-1}}^1 (\alpha_{i,2}) & \ldots & \mathcal{D}_{\sigma^{-1},a_i^{-1}}^t (\alpha_{i,2}) \\
\vdots & \vdots & \ddots & \vdots \\
\mathcal{D}_{\sigma^{-1},a_i^{-1}}^0 (\alpha_{i,t_i}) & \mathcal{D}_{\sigma^{-1},a_i^{-1}}^1 (\alpha_{i,t_i}) & \ldots & \mathcal{D}_{\sigma^{-1},a_i^{-1}}^t (\alpha_{i,t_i}) \\
\end{array} \right) \in S^{t_i \times (t+1)},
$
}%
$$
for $ i \in [\ell] $. Since $ d_{i,1} ,\ldots d_{i,t_i} \in S $ are $ R $-linearly independent, we deduce from Theorem \ref{th ext moore matrix invertible} that $ \mathbf{D} \in S^{t \times t} $ is invertible. Thus we have that 
$$ \mathbf{S} \mathbf{u} = \mathbf{D} \mathbf{A} \mathbf{u} = \mathbf{0} \quad \Longleftrightarrow \quad \mathbf{A} \mathbf{u} = \mathbf{0}, $$
which means that $ U_{a_i^{-1}} (\alpha_{i,j}) = 0 $, for all $ j \in [t_i] $ and $ i \in [\ell] $. Since $ \langle \alpha_{i,1}, \ldots, \alpha_{i,t_i} \rangle_R = \langle e_{i,1}, \ldots, e_{i,n_i} \rangle_R $, then $ U_{a_i^{-1}}(e_{i,j}) = 0 $, for $ j \in [n_i] $ and $ i \in [\ell] $. Since $ \deg(U) \leq t $ and it is primitive, then $ U $ is an annihilator of $ \mathbf{e} $. Finally, we have
$$ 0 = U_{a_i^{-1}}(e_{i,j}) = U_{a_i^{-1}}(r_{i,j} - c_{i,j}) = $$
$$ U_{a_i^{-1}}(R_{a_i^{-1}}(\widetilde{\beta}_{i,j}) - \widetilde{F}_{a_i^{-1}}(\widetilde{\beta}_{i,j})) = (U(R-\widetilde{F}))_{a_i^{-1}}(\widetilde{\beta}_{i,j}), $$
for $ j \in [n_i] $ and $ i \in [\ell] $, using Lemmas \ref{lemma product rule} and \ref{lemma eval connection} in the last equality. Since $ G $ is a monic annihilator of $ \widetilde{\boldsymbol\beta} $, then we deduce that $ G $ divides $ U(R-\widetilde{F}) $ on the right, and we are done.
\end{proof}

Finally, we show that we may recover the skew polynomial associated to $ \mathbf{c} $ for the pair $ (\mathbf{a}, \boldsymbol\beta) $ and $ \sigma $ from that for the pair $ (\mathbf{a}^{-1},\widetilde{\boldsymbol\beta}) $ and $ \sigma^{-1} $.

\begin{proposition} \label{prop from F line to F}
Let $ F = \sum_{u=0}^{k-1} F_u x^u \in S[x;\sigma] $ and $ \widetilde{F} = \sum_{u=0}^{k-1} \widetilde{F}_u x^u \in S[x;\sigma^{-1}] $ be related by $ \widetilde{F}_{k-u-1} = F_u $, for $ u = 0 ,1,\ldots, k-1 $. Then
$$ F_{a_i}(\beta_{i,j}) = \widetilde{F}_{a_i^{-1}}(\widetilde{\beta}_{i,j}), $$
where $ \widetilde{\beta}_{i,j} = \sigma^{k-1}(\beta_{i,j}) a_i^{k-1} $, for $ j \in [n_i] $ and $ i \in [\ell] $. 
\end{proposition}
\begin{proof}
Since $ a_i \in R^* $ and $ \widetilde{F}_{a_i^{-1}} $ is $ R $-linear, then for $ j \in [n_i] $ and $ i \in [\ell] $, we have
\begin{equation*}
\begin{split}
\widetilde{F}_{a_i^{-1}}(\widetilde{\beta}_{i,j}) & = \widetilde{F}_{a_i^{-1}}(a^{k-1}\sigma^{k-1}(\beta_{i,j})) \\
& = a^{k-1} \widetilde{F}_{a_i^{-1}}(\sigma^{k-1}(\beta_{i,j})) \\ 
& = a^{k-1} \sum_{u=0}^{k-1} \widetilde{F}_u \sigma^{-u}(\sigma^{k-1}(\beta_{i,j}))a_i^{-u} \\
& = \sum_{u=0}^{k-1} F_{k-u-1} \sigma^{k-u-1}(\beta_{i,j})a_i^{k-u-1} \\
& = F_{a_i}(\beta_{i,j}).
\end{split}
\end{equation*}
\end{proof}

The algorithm in \cite[alg. 2]{sven-decoder} can be extended to our case as shown in Algorithm \ref{alg:decoder}. In the following theorem, we prove its correctness and give its complexity.

\begin{algorithm}[ht!]
\caption{$\mathsf{SyndromeDecoder}$}\label{alg:decoder}
\SetKwInOut{Input}{Input}
\SetKwInOut{Output}{Output}
\Input{$\mathbf{r} \in S^n$} %
\Output{%
If there is a $\mathbf{c} = \left( F_{a_1}(\beta_{1,1}),\dots, F_{a_\ell}(\beta_{\ell, n_\ell})\right) \in \mathcal{C}^\sigma_k (\mathbf{a}, \boldsymbol\beta) $ with $ F \in S[x;\sigma]$, $ \deg(F) < k $ and ${\rm d}_{SR}(\mathbf{r},\mathbf{c}) \leq \tfrac{n-k}{2}$, then $F$.\\ Otherwise ``decoding failure''.}

$ \mathbf{s} := \mathbf{H} \mathbf{r}^\intercal $ \\
$s := \sum_{i=0}^{n-k-1} s_i x^i \in S[x;\sigma^{-1}] $ \\
$\mathcal{B} := \mathsf{SkewByrneFitzpatrick}(s, n-k)$ \\
$(\Lambda,\Omega) :=$ element of $\mathcal{B}$ of minimal degree among all $(U,V) \in \mathcal{B}$ with $\deg (U) > \deg (V) $ and $U$ primitive. \\
$R :=$ unique $ R \in S[x;\sigma^{-1}] $ such that $ R_{a_i^{-1}}(\widetilde{\beta}_{i,j}) = r_{i,j} $, for all $ i,j $, with $ \deg(R) < n $. \label{line:R} \\
$G :=$ unique $ G \in S[x;\sigma^{-1}] $ such that $ G_{a_i^{-1}}(\widetilde{\beta}_{i,j}) = 0 $, for all $ i,j $, with $ \deg(G) < n $. \label{line:G} \\
$\Psi := \Lambda R \textrm{ mod} \textrm{ } G$ \\
$(\widetilde{F},T) :=$ quotient and remainder of left division of $\Psi$ by $\Lambda$. \\
$ F := $ skew polynomial obtained by $ \widetilde{F}_{k-u-1} = F_u $, for $ u = 0 ,1,\ldots, k-1 $. \\
\If{$T =0$ and ${\rm d}_{SR} \left( \mathbf{r}, \left( F_{a_1}(\beta_{1,1}),\dots, F_{a_\ell}(\beta_{\ell, n_\ell})\right) \right) \leq \tfrac{n-k}{2}$ and $\deg (F) < k$\label{line:check_if_f_has_correct_form}}{
\Return{$F$}
} 
\Else 
{
\Return{``decoding failure''}
}
\end{algorithm}

\begin{theorem}
Algorithm \ref{alg:decoder} is correct and has a complexity of $ \mathcal{O}(r n^2) $ operations in $ S $, where $ r $ is the smallest positive integer such that $ \pi^r = 0 $ or $ \mathfrak{M}^r = 0 $.
\end{theorem}
\begin{proof}
Using Algorithm \ref{alg:skewbyrnefitzpatrick}, we obtain a pair $ U,V \in S[x;\sigma^{-1}] $ with $ U $ primitive, $ \deg(U) > \deg(V) $, $ U s \equiv V $ mod $ x^h $ and $ \deg(U) $ minimal among pairs with these properties. By Theorem \ref{th key equation}, there is a pair $ (\Lambda, \Omega) $ satisfying such properties and with $ \deg(\Lambda) = t $. Thus we have $ \deg(U) \leq t $.

Since $ t \leq h/2 $, then $ U $ is an annihilator of $ \mathbf{e} $ by Theorem \ref{th pair U V}. Moreover, we have $ UR \equiv U \widetilde{F} \textrm{ mod} $ $ G $, with notation as in Theorem \ref{th pair U V}. 

Since $ \deg( U \widetilde{F}) = \deg(U) + \deg(\widetilde{F}) < t+k-1 < n = \deg(G) $, then we may obtain $ U \widetilde{F} $ by right division of $ UR $ by $ G $. Note that $ R $ and $ G $ may be computed from the received word $ \mathbf{r} $ and the pair $ (\mathbf{a}, \boldsymbol\beta) $, and that $ G $ is monic, hence right division by $ G $ is well defined. Next, since $ U $ is primitive, then we may divide $ U \widetilde{F} \neq 0 $ by $ U $ on the left and we obtain $ \widetilde{F} $. Finally, we compute $ F $ from $ \widetilde{F} $ as in Proposition \ref{prop from F line to F}, where $ F $ is the skew polynomial whose coefficients contain the message encoded by the sent codeword $ \mathbf{c} $, and we are done.
 
Finally, the complexity of the skew polynomial Byrne-Fitzpatrick Algorithm \ref{alg:skewbyrnefitzpatrick} has a complexity of $ \mathcal{O}(r n^2) $ operations in $ S $ by \cite[Th. 3]{sven-decoder} (the extension from Galois rings to general finite chain rings is straightforward as mentioned at the beginning of the section). The other operations that appear in Algorithm \ref{alg:decoder} can be implemented with a complexity of $ \mathcal{O}(n^2) $ operations in $ S $ by \cite[Lemma 8]{sven-decoder}. 
\end{proof}

\section{Applications} \label{sec applications}

In this section, we briefly discuss applications of MSRD codes over finite chain rings, and in particular, the linearized Reed--Solomon codes from Definition \ref{def LRS codes}. We will only focus on applications in Space-Time Coding and Multishot Network Coding, and we will only briefly discuss how to adapt ideas from the literature to the case of MSRD codes over finite chain rings.

\subsection{Space-Time Coding with Multiple Fading Blocks} \label{subsec space time}

Space-time codes \cite{tarokh} are used in wireless communication, in scenarios of multiple input/multiple output antenna transmission. Such codes utilize space diversity (via multiple antennas) and time diversity (via interleaving up to some delay constraint) in order to reduce the fading of the channel. 

In the case of one fading block, codewords are seen as matrices in $ \mathcal{A}^{n_t \times T} $, where $ \mathcal{A} \subseteq \mathbb{C} $ is the signal constellation (a subset of the complex field), $ n_t $ is the number of transmit antennas and $ T $ is the time delay. In particular, the code is a subset $ \mathcal{C} \subseteq \mathcal{A}^{n_t \times T} $. In this scenario, the code achieves transmit diversity gain $ d $ (or simply code diversity) if the rank of the difference of any two matrices in the code is at least $ d $, see \cite{tarokh, space-time-kumar}. Large code diversity is desirable, but it competes with the symbol rate of the code, defined as 
$$ \frac{1}{T} \log_{|\mathcal{A}|} |\mathcal{C}|. $$
The symbol rate is an important parameter when the constellation $ \mathcal{A} $ is constrained or we wish it to be as small as possible. See the discussion in \cite{Mohannad-Thesis}. The diversity-rate tradeoff is expressed in a Singleton-type bound, and codes attaining equality in such a bound may be obtained by mapping a maximum rank distance (MRD) code over a finite field, such as a Gabidulin code, into the constellation $ \mathcal{A} \subseteq \mathbb{C} $. This may be done via Gaussian integers \cite{lusina} or Eisenstein integers \cite{sven-ST}. 

The case of multiple fading blocks, say $ L $, was first investigated in \cite{space-time-kumar}. In this case, the codewords are matrices in $ \mathcal{A}^{n_t \times LT} $, which can be thought of as $ L $ matrices of size $ n_t \times T $, that is tuples in $ (\mathcal{A}^{n_t \times T})^L $. In this case, a code diversity $ d $ is attained if the minimum sum-rank distance of the code is at least $ d $. For this reason, space-time codes with optimal rate-diversity tradeoff in the multiblock case may be obtained by mapping MSRD codes over finite fields to the constellation $ \mathcal{A} \subseteq \mathbb{C} $. This was observed in \cite{space-time-kumar}, and linearized Reed--Solomon codes were first used for this purpose in \cite{mohannad}. As shown there, the use of linearized Reed--Solomon codes allows one to attain optimal rate-diversity while minimizing the time delay $ T $, and while the constellation size $ |\mathcal{A}| $ grows linearly in $ L $, in contrast with previous space-time codes, whose constellation sizes grow exponentially in $ L $. See also \cite{Mohannad-Thesis}.

In \cite[Sec. VI-A]{kamche}, it was shown how to translate any MRD code over a finite principal ideal ring into a space-time code over a complex constellation with optimal rate-diversity tradeoff. This result can be immediately generalized to the multiblock case by using MSRD codes over finite principal ideal rings, as follows. We omit the proof.

\begin{theorem}
Let $ Q $ be a principal ideal ring such that there exists a surjective ring morphism $ \varphi : Q \longrightarrow R $ (recall that $ R $ is a finite chain ring). Let $ \varphi^* : R \longrightarrow Q $ be such that $ \varphi \circ \varphi^* = {\rm Id} $. Extend both maps component-wise to tuples of matrices in $ (Q^{n_t \times T})^L $ and $ (R^{n_t \times T})^L $. If $ \mathcal{C} \subseteq (R^{n_t \times T})^L $ is an MSRD code, then so is $ \varphi^*(\mathcal{C}) \subseteq (Q^{n_t \times T})^L $, of the same dimension and minimum sum-rank distance. In particular, if $ Q \subseteq \mathbb{C} $, then $ \varphi^*(\mathcal{C}) $ is a space-time code with optimal rate-diversity tradeoff for $ L $ fading blocks.
\end{theorem}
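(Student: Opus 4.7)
The plan is to establish two things: that $\varphi^*$ preserves cardinality (handling the ``same dimension'' claim), and that applying $\varphi^*$ can only preserve or increase the sum-rank weight of differences. Combined with a Singleton-type upper bound transported to $Q$, this will force equality and MSRD.

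First, since $\varphi \circ \varphi^* = \mathrm{Id}_R$, the section $\varphi^*$ is set-theoretically injective, and so is its component-wise extension to $(R^{n_t \times T})^L$. Hence $|\varphi^*(\mathcal{C})| = |\mathcal{C}|$, giving the dimension claim at once.

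The key technical ingredient is a rank comparison lemma: for every matrix $B \in Q^{m \times n}$, entrywise application of $\varphi$ yields $\rk_R(\varphi(B)) \leq \rk_Q(B)$. To prove this I would use the Smith normal form over the principal ideal ring $Q$, writing $B = U D V$ with $U, V$ invertible over $Q$ and $D = \diag(d_1, \ldots, d_r)$ having exactly $\rk_Q(B)$ non-zero diagonal entries. Since ring morphisms send units to units (the images of $U^{-1}$ and $V^{-1}$ are two-sided inverses of $\varphi(U)$ and $\varphi(V)$), both $\varphi(U)$ and $\varphi(V)$ remain invertible over $R$, while $\varphi(D)$ is diagonal with at most as many non-zero entries as $D$ (some $d_i$ may lie in $\ker \varphi$). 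Because the Smith-form rank is invariant under left and right multiplication by invertible matrices and equals the number of non-zero diagonal entries once a rectangular diagonal form is reached, the count for $\varphi(D)$ upper bounds $\rk_R(\varphi(B))$, proving the claim.

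Armed with this lemma, the distance argument proceeds as follows. For distinct $\mathbf{c}_1, \mathbf{c}_2 \in \mathcal{C}$, the $R$-linearity of $\varphi$ gives $\varphi(\varphi^*(\mathbf{c}_1) - \varphi^*(\mathbf{c}_2)) = \mathbf{c}_1 - \mathbf{c}_2$. Applying the rank lemma block by block to the $L$ components and summing,
$$ \wt_{SR}(\varphi^*(\mathbf{c}_1) - \varphi^*(\mathbf{c}_2)) \geq \wt_{SR}(\mathbf{c}_1 - \mathbf{c}_2) \geq \dd_{SR}(\mathcal{C}). $$
Hence $\dd_{SR}(\varphi^*(\mathcal{C})) \geq \dd_{SR}(\mathcal{C})$. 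For the reverse inequality I would appeal to a Singleton-type bound over $Q$: the proofs of Lemma \ref{lemma SR is min of Hamming} and of the sum-rank Singleton bound use only the existence of the Smith normal form, which holds over any principal ideal ring, so they transport verbatim to $(Q^{n_t \times T})^L$. Together with the equality of cardinalities from the first paragraph, this forces $\dd_{SR}(\varphi^*(\mathcal{C})) = \dd_{SR}(\mathcal{C})$ and identifies $\varphi^*(\mathcal{C})$ as MSRD. The main obstacle I anticipate is pinning down the precise formulation of MSRD (and the matching Singleton bound) over a general principal ideal ring $Q$ rather than a chain ring; this adaptation is mechanical via the Smith normal form machinery already used throughout the paper, but it does require writing down the analogues of Section~\ref{sec MSRD codes} outside the chain-ring setting.
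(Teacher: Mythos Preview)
The paper explicitly omits the proof of this theorem (``We omit the proof''), pointing instead to the single-block analogue in \cite[Sec.~VI-A]{kamche}. So there is no argument in the paper to compare against; your task is really to reconstruct what the authors consider routine.

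Your reconstruction is the natural one and is correct for the substantive part. The injectivity of $\varphi^*$ handles cardinality. The rank-comparison lemma $\rk_R(\varphi(B)) \le \rk_Q(B)$ via Smith normal form over $Q$ is exactly the right tool, and your proof of it is clean: $\varphi$ preserves invertibility of $U,V$ and can only kill diagonal entries of $D$. Summing over the $L$ blocks then gives $\dd_{SR}(\varphi^*(\mathcal{C})) \ge \dd_{SR}(\mathcal{C})$, which is the heart of the matter and is what the space-time application actually needs (diversity at least $d$).

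The one point you correctly flag as an obstacle deserves sharper treatment. When $Q$ is infinite (e.g.\ $Q=\mathbb{Z}[i]$, the intended application), the Singleton bound with $k=\log_{|Q|^{n_t}}|\varphi^*(\mathcal{C})|$ is vacuous because that logarithm is zero. The phrase ``same dimension'' in the theorem therefore cannot mean a $|Q|$-based dimension; it must be read as the $k$-parameter inherited from $R$, i.e.\ the symbol rate is computed relative to the finite constellation $\varphi^*(R)$ rather than to $Q$. With that reading, ``MSRD over $Q$'' just asserts the same triple $(n,k,d)$ as over $R$, and your lower bound already forces $\dd_{SR}(\varphi^*(\mathcal{C}))=\dd_{SR}(\mathcal{C})$ once you observe that the Singleton bound relative to the alphabet $\varphi^*(R)$ (of size $|R|$) transports: the code $\varphi^*(\mathcal{C})$ lives inside $(\varphi^*(R)^{n_t\times T})^L$, and Lemma~\ref{lemma SR is min of Hamming} combined with the classical Hamming Singleton bound for finite alphabets gives the matching upper bound. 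Make this alphabet convention explicit rather than attempting a literal Singleton bound over an infinite principal ideal ring.
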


Examples may be constructed easily. For instance, we may choose $ Q = \mathbb{Z}[i] $ and $ R = \mathbb{Z}_{2^r}[i] = \mathbb{Z}[i] / (2^r) $, where $ r $ is a positive integer (see also the introduction). In this case, we have that $ q = 2^r $, and we may construct linearized Reed--Solomon codes corresponding to $ L = 2^r - 1 $ fading blocks.

\subsection{Physical-Layer Multishot Network Coding} \label{subsec multishot}

Linear network coding \cite{linearnetwork} permits maximum information flow from a source to several sinks (\textit{multicast}) in one use of the network (a \textit{single shot}). In such a communication scenario, MRD codes can correct a given number of link errors and packet erasures with the maximum possible information rate, without knowledge and independently of the transfer matrix or topology of the network (\textit{universal error correction}), see \cite{errors-network, error-control}. In the case of a number $ \ell $ of uses of the network (\textit{multishot Network Coding}), the minimum sum-rank distance of the code determines how many link errors and packet erasures the code can correct in total throughout the $ \ell $ shots of the network, without knowledge and independently of the transfer matrices and network topology \cite{multishot}. MSRD codes, in particular linearized Reed--Solomon codes, over finite fields were used in this scenario in \cite{secure-multishot}. 

In \cite{feng1}, a similar model was developed for physical-layer Network Coding, where the network code lies in some constellation (a subset of the complex numbers), which may be identified with finite principal ideal rings of the form $ \mathbb{Z}[i] / (q) $, for a positive integer $ q $. Such rings are finite chain rings if $ q = 2^r $ and $ r $ is a positive integer, see the introduction.

Just as in the case of finite fields, we may consider $ \ell $ shots in physical-layer Network Coding, with network-code alphabet $ R = \mathbb{Z}[i] / (q) $, which is a finite chain ring if $ q = 2^r $ as above. In this case, codes are subsets $ \mathcal{C} \subseteq R^{m \times n} $, where $ m $ is the packet length and $ n $ is the number of outgoing links from the source. Using the matrix representation map (\ref{eq def matrix representation map}), we may consider $ \mathcal{C} \subseteq S^n $, and we may then choose $ \mathcal{C} $ to be $ S $-linear, such as a linearized Reed--Solomon code (Definition \ref{def LRS codes}), which may attain the value $ \ell = q-1 = 2^r-1 $. If $ \mathbf{c} \in \mathcal{C} $ is transmitted, then the output of an $ \ell $-shot linearly coded network over $ R $ with at most $ t $ link errors and $ \rho $ packet erasures is
$$ \mathbf{y} = \mathbf{c} \mathbf{A} + \mathbf{e} \in S^N, $$
where $ \mathbf{e} \in S^N $ is such that $ \wt_{SR}(\mathbf{e}) \leq t $, and $ \mathbf{A} = \diag(\mathbf{A}_1, \mathbf{A}_2, \ldots, \mathbf{A}_\ell) $ is such that $ \frk(\mathbf{A}) \geq n - \rho $, where $ \mathbf{A}_i \in R^{n_i \times N_i} $ is the transfer matrix of the $ i $th shot, for $ i \in [\ell] $, where $ N = N_1 + N_2 + \cdots N_\ell $ and $ n = n_1 + n_2 + \cdots + n_\ell $. For simplicity, we will only consider the \textit{coherent} scenario, meaning that we assume that the transfer matrix $ \mathbf{A} $ is known to the receiver.

We may now prove that the minimum sum-rank distance of the code $ \mathcal{C} $ gives a sufficient condition for error and erasure correction in the scenario described above. We omit necessary conditions for brevity.

\begin{theorem}
In the scenario described above, fix $ \rho = n - \frk(\mathbf{A}) $. If
$$ 2t + \rho + 1 \leq \dd_{SR}(\mathcal{C}), $$
then there exists a decoder $ D_\mathbf{A} : S^n \longrightarrow \mathcal{C} $ sending $ D_\mathbf{A}(\mathbf{c} \mathbf{A} + \mathbf{e}) = \mathbf{c} $, for all $ \mathbf{c} \in \mathcal{C} $ and all $ \mathbf{e} \in S^n $ with $ \wt_{SR} (\mathbf{e}) \leq t $. In particular, if $ \mathcal{C} $ is an MSRD code over $ R $ for the length partition $ n = n_1 + n_2 + \cdots + n_\ell $, then it may correct $ t $ link errors, $ \rho $ packet erasures and achieves an information rate of 
$$ \frac{n - 2t - \rho}{n}. $$
\end{theorem}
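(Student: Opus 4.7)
The plan is to establish \emph{unique decoding}: any $\mathbf{y} = \mathbf{c}A + \mathbf{e}$ with $\wt_{SR}(\mathbf{e}) \leq t$ determines $\mathbf{c} \in \mathcal{C}$ uniquely. Given this, the decoder $D_A$ is defined by sending each decodable $\mathbf{y}$ to its unique preimage, and extending arbitrarily on the remaining inputs. The argument starts from two purported decodings $\mathbf{c}_1 A + \mathbf{e}_1 = \mathbf{c}_2 A + \mathbf{e}_2$ with $\mathbf{c}_i \in \mathcal{C}$ and $\wt_{SR}(\mathbf{e}_i) \leq t$: writing $(\mathbf{c}_1 - \mathbf{c}_2) A = \mathbf{e}_2 - \mathbf{e}_1$ and applying the sum-rank triangle inequality gives $\wt_{SR}((\mathbf{c}_1 - \mathbf{c}_2)A) \leq 2t$.

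The central step is the following \emph{key lemma}: for every $\mathbf{v} \in S^n$,
$$ \wt_{SR}(\mathbf{v}A) \;\geq\; \wt_{SR}(\mathbf{v}) - \rho, $$
where the left sum-rank weight is taken w.r.t.\ $N = N_1 + \cdots + N_\ell$ and the right w.r.t.\ $n = n_1 + \cdots + n_\ell$. Granting the lemma, $\wt_{SR}(\mathbf{c}_1 - \mathbf{c}_2) \leq 2t + \rho < \dd_{SR}(\mathcal{C})$ forces $\mathbf{c}_1 = \mathbf{c}_2$ by definition of minimum distance. The block-diagonal structure $A = \diag(A_1, \ldots, A_\ell)$ reduces the lemma to the per-block bound $\rk(\mathbf{v}^{(i)} A_i) \geq \rk(\mathbf{v}^{(i)}) - (n_i - \frk(A_i))$. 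Applying the matrix-representation map $M_{\boldsymbol\alpha}$, which satisfies $M_{\boldsymbol\alpha}(\mathbf{c} B) = M_{\boldsymbol\alpha}(\mathbf{c}) B$ whenever $B$ has entries in $R$, this becomes a purely $R$-matrix statement: for $X \in R^{m \times n_i}$, $\rk(X A_i) \geq \rk(X) - (n_i - \frk(A_i))$.

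To prove this last inequality, I would write a Smith normal form $A_i = P D Q$ with $P, Q$ invertible over $R$ and set $X' := XP$. Since $Q$ is invertible, $\rk(X A_i) = \rk(X' D)$. Exactly $\frk(A_i)$ diagonal entries of $D$ are units; the submatrix of $X'D$ on those columns differs from the corresponding column-submatrix of $X'$ only by unit column scalings, hence has the same rank. Over the chain ring $R$, every non-zero ideal is non-zero, and the $k$-th determinantal ideal equals $(d_1 \cdots d_k)$, so $\rk$ is characterised by the existence of a non-zero minor of the appropriate size; a Laplace-expansion argument then shows that removing a single column drops the rank by at most one. Iterating, a $\frk(A_i)$-column sub-selection of $X'$ has rank $\geq \rk(X') - (n_i - \frk(A_i))$, and adding back the remaining columns of $X'D$ only increases rank, which closes the lemma. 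Finally, for MSRD $\mathcal{C}$ one has $\dd_{SR}(\mathcal{C}) = n - k + 1$, so the hypothesis is equivalent to $k \leq n - 2t - \rho$ and saturation yields the rate $(n - 2t - \rho)/n$. The main obstacle is the key lemma, which has to reconcile two distinct rank notions simultaneously: sum-rank weight is governed by ordinary rank, while the erasure budget $\rho$ is measured by free rank, and isolating the unit-diagonal portion of the Smith form of $A_i$ is what makes the interplay come out cleanly.
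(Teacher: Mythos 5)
Your overall route is sound and is essentially the paper's: everything hinges on the per-block inequality $\rk(X A_i) \geq \rk(X) - \bigl(n_i - \frk(A_i)\bigr)$, summed over blocks to give $\wt_{SR}(\mathbf{v}A) \geq \wt_{SR}(\mathbf{v}) - \rho$. The paper proves this inequality by completing the $\frk(A_i)$ linearly independent columns of the block to an invertible matrix (the basis-extension exercise from McDonald) and using subadditivity of rank over column concatenation, and it then assembles the decoder as a composition $D_2 \circ D_1$, proving separately that $\mathbf{c} \mapsto \mathbf{c}A$ is injective on $\mathcal{C}$; your single uniqueness argument (triangle inequality plus the key lemma, then $\wt_{SR}(\mathbf{c}_1-\mathbf{c}_2) \leq 2t+\rho < \dd_{SR}(\mathcal{C})$) subsumes that injectivity step and is a legitimate, slightly streamlined alternative. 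The reduction via $M_{\boldsymbol\alpha}(\mathbf{c}B) = M_{\boldsymbol\alpha}(\mathbf{c})B$ for $B$ over $R$, and the final rate computation from $\dd_{SR}(\mathcal{C}) = n-k+1$, are fine.

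There is, however, a genuine gap in your justification of the column-deletion step. You argue that, since the $k$-th determinantal ideal equals $(d_1\cdots d_k)$, the rank is characterised by the existence of a non-zero minor of that size, and then invoke Laplace expansion. Over a finite chain ring with zero divisors this characterisation is false: the rank used in this paper is the number of \emph{non-zero} invariant factors, not the largest $k$ with $d_1\cdots d_k \neq 0$. For instance, over $R = \mathbb{Z}_4$ the matrix $\diag(2,2)$ has rank $2$ in this sense, yet its determinant is $4 = 0$, so its second determinantal ideal vanishes and no non-zero $2\times 2$ minor exists; the minor criterion only detects the smaller quantity. Hence the Laplace-expansion argument does not establish that deleting a column drops the rank by at most one. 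The statement itself is true and easily repaired: over the local ring $R$, $\rk(M)$ equals the minimal number of generators of the column module of $M$ (equivalently, by Nakayama, the $\mathbb{F}_q$-dimension of that module reduced modulo $\mathfrak{m}$), so removing one column removes one generator and lowers the rank by at most one; equivalently, rank is subadditive under column concatenation and a single column has rank at most $1$ --- this is precisely the inequality $\rk(CB_2) \leq \rk(CB') + \rk(CB'')$ on which the paper's proof relies. With that substitution, your Smith-normal-form version of the key lemma, and hence your whole proof, goes through.
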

\begin{proof}
Let $ \mathbf{C} \in R^{m \times n} $, $ \mathbf{B} \in R^{n \times N} $ and $ r = \frk(\mathbf{B}) $. We may assume that the first $ r $ columns of $ \mathbf{B} $, which form a matrix $ \mathbf{B}^\prime \in R^{n \times r} $, are $ R $-linearly independent. By \cite[p. 92, ex. V.14]{mcdonald}, there exists a matrix $ \mathbf{B}^{\prime \prime} \in R^{n \times (n-r)} $ such that $ \mathbf{B}_2 = (\mathbf{B}^\prime, \mathbf{B}^{\prime \prime}) \in R^{n \times n} $ is invertible. Therefore,
$$ \rk(\mathbf{C}\mathbf{B}) \geq \rk(\mathbf{C}\mathbf{B}^\prime) \geq \rk(\mathbf{C}\mathbf{B}_2) - (n-r) = \rk(\mathbf{C}) - (n-r). $$
In the second inequality we have used that $ \rk(\mathbf{C}\mathbf{B}_2) \leq \rk(\mathbf{C}\mathbf{B}^\prime) + \rk(\mathbf{C}\mathbf{B}^{\prime \prime}) \leq \rk(\mathbf{C}\mathbf{B}^\prime) + (n-r) $. Hence, we deduce that
$$ \dd_{SR}(\mathcal{C} \mathbf{A}) \geq \dd_{SR}(\mathcal{C}) - \rho \geq 2t+1. $$
Therefore, there exists a decoder $ D_1 : S^N \longrightarrow \mathcal{C} \mathbf{A} $ sending $ D_1(\mathbf{c}\mathbf{A} + \mathbf{e}) = \mathbf{c} \mathbf{A} $, for all $ \mathbf{c} \in \mathcal{C} $ and all $ \mathbf{e} \in S^n $ with $ \wt_{SR} (\mathbf{e}) \leq t $.

Now, let again $ \mathbf{C} \in R^{m \times n} $, $ \mathbf{B} \in R^{n \times N} $ and $ r = \frk(\mathbf{B}) $, with notation as above. Assume that $ \mathbf{C}\mathbf{B} = 0 $. Then $ \mathbf{C}\mathbf{B}^\prime = 0 $, which implies that $ \mathbf{C}\mathbf{B}_2 = \mathbf{C} (\mathbf{B}^\prime, \mathbf{B}^{\prime \prime}) = (0 , \mathbf{C}\mathbf{B}^{\prime \prime}) $. Therefore,
$$ \rk(\mathbf{C}) = \rk (\mathbf{C}\mathbf{B}_2) = \rk (\mathbf{C}\mathbf{B}^{\prime \prime}) \leq n-r. $$
Using this fact and $ \rho \leq \dd_{SR}(\mathcal{C}) $, it is easy to see that the map $ \mathcal{C} \longrightarrow \mathcal{C}\mathbf{A} $ consisting in multiplying by $ \mathbf{A} $ is injective. Hence we deduce that there exists a decoder $ D_2 : \mathcal{C} \mathbf{A} \longrightarrow \mathcal{C} $ sending $ D_2(\mathbf{c} \mathbf{A}) = \mathbf{c} $, for all $ \mathbf{c} \in \mathcal{C} $. Thus, we conclude by defining $ D_\mathbf{A} = D_2 \circ D_1 $.
\end{proof}

To conclude, we briefly describe how to decode when using a linearized Reed--Solomon code as in Definition \ref{def LRS codes}. Similarly to the proof above, we let $ \mathbf{A}^\prime \in R^{n \times r} $ be of full free rank $ r $, formed by some $ r $ columns of $ \mathbf{A} $, where $ r = \frk(\mathbf{A}) = n-\rho $. In the same way as in \cite[Sec. V-F]{secure-multishot}, if $ \mathcal{C} $ is a linearized Reed--Solomon code, then so is $ \mathcal{C} \mathbf{A}^\prime $, since $ \mathbf{A}^\prime $ has full free rank. Therefore, we may apply the decoders from Sections \ref{sec WB decoder} and \ref{sec syndrome decoder} to $ \mathcal{C} \mathbf{A}^\prime $ and recover the sent codeword $ \mathbf{c} \in \mathcal{C} $.

\ifCLASSOPTIONcaptionsoff
  \newpage
\fi



\bibliographystyle{IEEEtranS}
\end{document}